\title{A Characterization of Visibility Graphs for Pseudo-Polygons}
\author[1]{Matt Gibson \thanks{gibson@cs.utsa.edu}}
\author[2]{Erik Krohn \thanks{uvg160@my.utsa.edu}}
\author[1]{Qing Wang \thanks{krohne@uwosh.edu}}
\affil[1]{Dept. of Computer Science, University of Texas at San Antonio, San Antonio, TX, USA}
\affil[2]{Dept. of Computer Science, University of Wisconsin - Oshkosh, Oshkosh, WI, USA}
 \newtheorem{theorem}{Theorem}
 \newtheorem{lemma}[theorem]{Lemma}
\newcounter{NecCond}
\newtheorem{nc}[NecCond]{Necessary Condition}
\newcommand{\Li}{\mathcal{L}}
\newcommand{\p}{\partial}
\begin{document}
\bibliographystyle{plain}
\maketitle

\begin{abstract}
In this paper, we give a characterization of the visibility graphs of pseudo-polygons.  We first identify some key combinatorial properties of pseudo-polygons, and we then give a set of five necessary conditions based off our identified properties.  We then prove that these necessary conditions are also sufficient via a reduction to a characterization of vertex-edge visibility graphs given by O'Rourke and Streinu.
\end{abstract}

\section{Introduction}
Geometric covering problems have been a focus of research for decades.  Here we are given some set of points $P$ and a set $S$ where each $s \in S$ can cover some subsets of $P$.  The subset of $P$ is generally induced by some geometric object.  For example, $P$ might be a set of points in the plane, and $s$ consists of the points contained within some disk in the plane.  For most variants, the problem is NP-hard and can easily be reduced to an instance of the combinatorial set cover problem which has a polynomial-time $O(\log n)$-approximation algorithm, which is the best possible approximation under standard complexity assumptions \cite{Feige:2003}.  The main question therefore is to determine for which variants of 
geometric set cover can we obtain polynomial-time approximation algorithms with approximation ratio $o(\log n)$, as any such algorithm must exploit the geometry of the problem to achieve the result.  This area has been studied extensively, see for example \cite{Aronov:2010,Varadarajan09,Aloupis:2009}, and much progress has been made utilizing algorithms that are based on solving the standard linear programming relaxation.  

Unfortunately this technique has severe limitations for some variants of geometric set cover, and new ideas are needed to make progress on these variants.  In particular, the techniques are lacking when the points $P$ we wish to cover is a simple polygon, and we wish to place the smallest number of points in $P$ that collectively ``see'' the polygon.  This problem is classically referred to as the \textit{art gallery problem} as an art gallery can be modeled as a polygon and the points placed by an algorithm represent cameras that can ``guard'' the art gallery.  This has been one of the most well-known problems in computational geometry for many years, yet still to this date the best polynomial-time approximation algorithm for this problem is a $O(\log n)$-approximation algorithm.  The key issue is a fundamental lack of understanding of the combinatorial structure of visibility inside simple polygons.  It seems that in order to develop powerful approximation algorithms for this problem, the community first needs to better understand the underlying structure of such visibility.  

\noindent\textbf{Visibility Graphs.}  A very closely related issue which has received a lot of attention in the community is the \textit{visibility graph} of a simple polygon.  Given a simple polygon $P$, the visibility graph $G = (V,E)$ of $P$ has the following structure.  For each vertex $p \in P$, there is a vertex in $V$, and there is an edge connecting two vertices in $G$ if and only if the corresponding vertices in $P$ ``see'' each other (i.e., the line segment connecting the points does not go outside the polygon).  Two major open problems regarding visibility graphs of simple polygons are the visibility graph characterization problem and the visibility graph recognition problem.  The \textit{visibility graph characterization} problem seeks to define a set of properties that all visibility graphs satisfy.  The \textit{visibility graph recognition} problem is the following.  Given a graph $G$, determine if there exists a simple polygon $P$ such that $G$ is the visibility graph of $P$ in polynomial time.  

The problems of characterizing and recognizing the visibility graphs of simple polygons have had partial results given dating back to over 25 years ago \cite{Ghosh88} and remain open to this day with only a few special cases being solved.  Characterization and recognition results have been given in the special cases of ``spiral'' polygons \cite{EverettC95} and ``tower polygons'' \cite{ChoiSC95}.  There have been several results \cite{Ghosh97,EverettC95,SrinivasaraghavanM94} that collectively have led to four necessary conditions that a simple polygon visibility graph must satisfy.  That is, if the graph $G$ does not satisfy all four of the conditions then we know that $G$ is not the visibility graph for \textit{any} simple polygon, and moreover it can be determined if a graph $G$ satisfies all of the necessary conditions in polynomial time.  Streinu, however, has given an example of graph that satisfies all of the necessary conditions but is not a visibility graph for any simple polygon \cite{Streinu05}, implying that the set of conditions is not sufficient and therefore a strengthening of the necessary conditions is needed.  Unfortunately it is not even known if simple polygon visibility graph recognition is in NP.  See \cite{GhoshG13} for a nice survey on these problems and other related visibility problems.


\noindent\textbf{Pseudo-polygons.}  Given the difficulty of understanding simple polygon visibility graphs, O'Rourke and Streinu \cite{ORourkeS97} considered the visibility graphs for a special case of polygons called \textit{pseudo-polygons} which we will now define.  An arrangement of \textit{pseudo-lines} $\Li$ is a collection of simple curves, each of which separates the plane, such that each pair of pseudo-lines of $\Li$ intersects at exactly one point, where they cross.  Let $P = \{p_0, p_2,\ldots, p_{n-1}\}$ be a set of points in $\mathbb{R}^2$, and let $\Li$ be an arrangement of $\binom{n}{2}$ pseudo-lines such that every pair of points $p_i$ and $p_j$ lie on exactly one pseudo-line in $\Li$, and each pseudo-line in $\Li$ contains exactly two points of $P$. The pair $(P, \Li)$ is called a \textit{pseudo configuration of points} (pcp) in general position.

Intuitively a pseudo-polygon is determined similarly to a standard Euclidean simple polygon except using pseudo-lines instead of straight line segments.  Let $L_{i,j}$ denote the pseudo-line through the points $p_i$ and $p_j$.  We view $L_{i,j}$ as having three different components.  The subsegment of $L_{i,j}$ connecting $p_i$ and $p_j$ is called the \textit{segment}, and we denote it $p_ip_j$. Removing $p_ip_j$ from $L_{i,j}$ leaves two disjoint \textit{rays}.  Let $r_{i,j}$ denote the ray starting from $p_i$ and moving away from $p_j$, and we let $r_{j,i}$ denote the ray starting at $p_j$ and moving away from $p_i$.  Consider the pseudo line $L_{i,i+1}$ in a pcp (indices taken modulo $n$ and are increasing in counterclockwise order throughout the paper).  We let $e_i$ denote the segment of this line.  A \textit{pseudo-polygon} is obtained by taking the segments $e_i$ for $i \in \{0,\ldots, n-1\}$ if (1) the intersection of $e_i$ and $e_{i+1}$ is only the point $p_{i+1}$ for all $i$, and (2) distinct segments $e_i$ and $e_j$ do not intersect for all $j \neq i+1$. We call the segments $e_i$ the \textit{boundary edges}.  A pseudo-polygon separates the plane into two regions: ``inside'' the pseudo-polygon and ``outside'' the pseudo-polygon, and any two points $p_i$ and $p_j$ see each other if the segment of their pseudo-line does not go outside of the pseudo-polygon.  See Fig \ref{fig:pPolygon} for an illustration.  Pseudo-polygons can be viewed as a combinatorial abstraction of simple polygons.  Note that every simple polygon is a pseudo-polygon (simply allow each $L_{i,j}$ to be the straight line through $p_i$ and $p_j$), and Streinu showed that there are pseudo-polygons that cannot be ``stretched'' into a simple polygon \cite{Streinu05}.

\begin{figure}
\centering
\begin{tabular}{c@{\hspace{0.1\linewidth}}c}
\includegraphics[scale=0.4]{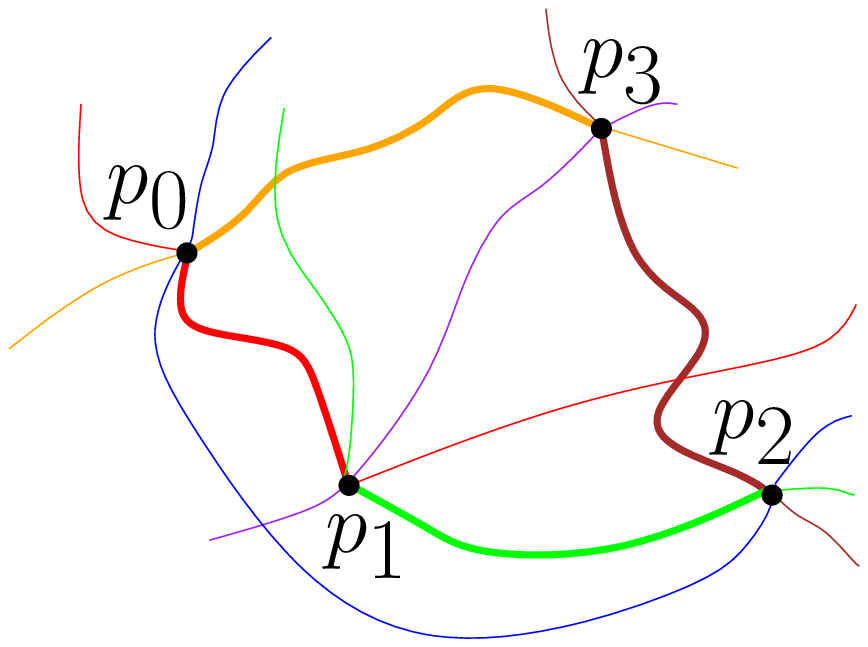}&
\includegraphics[scale=0.4]{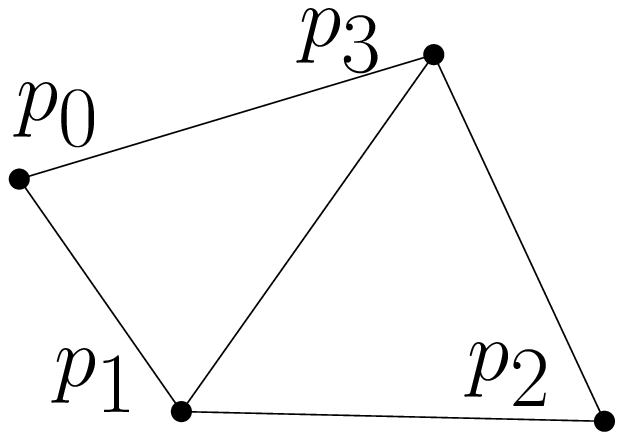}\\
(a) & (b) 
\end{tabular}
\caption{(a) A pcp and pseudo-polygon.  (b) The corresponding visibility graph.  }
\label{fig:pPolygon}
\end{figure}

O'Rourke and Streinu \cite{ORourkeS97} give a characterization of \textit{vertex-edge} visibility graphs of pseudo-polygons.  In this setting, for any vertex $v$ we are told which \textit{edges} $v$ sees rather than which vertices it sees.  Unfortunately, O'Rourke and Streinu showed that vertex-edge visibility graphs encode more information about a pseudo-polygon than a regular visibility graph \cite{ORourke1998105}, and the regular visibility graph characterization problem has remained open for over fifteen years.

\noindent\textbf{Our Results.}  In this paper, we give a characterization of the visibility graphs of pseudo-polygons.  We first identify some key combinatorial properties of pseudo-polygons, and we then give a set of five necessary conditions based off our identified properties.  We then prove that these necessary conditions are also sufficient via a reduction to O'Rourke and Streinu's vertex-edge characterization \cite{ORourkeS97}.  That is, for any visibility graph $G$ that satisfies all necessary conditions, we construct a vertex-edge visibility graph $G_{VE}$ that corresponds with $G$ and show that it satisfies the characterization properties.  Since all simple polygons are pseudo-polygons, our necessary conditions also apply to simple polygon visibility graphs, and in some cases extend or generalize the previously given necessary conditions given for simple polygon visibility graphs \cite{GhoshG13}.  Each of the four necessary conditions given for simple polygons \cite{GhoshG13} have been proved using geometric arguments, yet each of them are implied by the necessary conditions we give for pseudo-polygons which are proved without geometric arguments. Given that not all pseudo-polygons are simple polygons \cite{Streinu05}, additional necessary conditions will be needed to characterize the visibility graphs of simple polygons. 

\section{Preliminaries}
 We begin with some preliminaries and definitions that will be relied upon heavily in our proof.  Our main focus of this paper is to determine if a graph $G$ is the visibility graph for some pseudo-polygon.  Note that the visibility graph $G$ of a pseudo-polygon $P$ must contain a Hamiltonian cycle because each $p_i$ must see $p_{i-1}$ and $p_{i+1}$.  Since determining if a graph contains a Hamiltonian cycle is NP-hard, previous research has assumed that $G$ does have such a cycle $C$ and the vertices are labeled in counterclockwise order according to this cycle.  So now suppose we are given an arbitrary graph $G = (V, E)$ with the vertices labeled $p_0$ to $p_{n-1}$ such that $G$ contains a Hamiltonian cycle $C = (p_0, p_2, \ldots, p_{n-1})$ in order according to their indices.  We are interested in determining if $G$ is the visibility graph for some pseudo-polygon $P$ where $C$ corresponds with the boundary of $P$.  For any two vertices $p_i$ and $p_j$, we let $\p(p_i,p_j)$ denote the vertices and boundary edges encountered when walking counterclockwise around $C$ from $p_i$ to $p_j$ (inclusive).  For any edge $\{p_i,p_j\}$ in $G$, we say that $\{p_i,p_j\}$ is a \textit{visible pair}, as their points in $P$ must see one another.  If $\{p_i,p_j\}$ is not an edge in $G$, then we call $(p_i,p_j)$ and $(p_j,p_i)$ \textit{invisible pairs}.  Note that visible pairs are unordered, and invisible pairs are ordered (for reasons described below).    

Consider any invisible pair $(p_i,p_j)$.  If $G$ is the visibility graph for a pseudo-polygon $P$, the segment of $L_{i,j}$ must exit $P$.  For example, suppose we want to construct a polygon $P$ such that the graph in Fig \ref{fig:blockEx} (a) is the visibility graph of $P$.  Note that $p_0$ should not see $p_2$, and thus if there exists such a polygon, it must satisfy that $p_0p_2$ exits the polygon.  In the case of a simple polygon, we view this process as placing the vertices of $P$ in convex position and then contorting the boundary of $P$ to block $p_0$ from seeing $p_2$.  We can choose $p_1$ or $p_3$ to block $p_0$ from seeing $p_2$ (see (b) and (c)).  Note that as in Fig \ref{fig:blockEx} (b) when using $p_1 \in \p(p_0,p_2)$ as the blocker in a simple polygon, the line segment $p_0p_1$ does not go outside $P$ and the ray $r_{1,0}$ first exits $P$ through a boundary edge in $\p(p_2,p_0)$.  Similarly as in Fig \ref{fig:blockEx} (c) when using $p_3 \in \p(p_2,p_0)$ as the blocker, the line segment $p_0p_3$ does not go outside of the polygon and the ray $r_{3,0}$ first exits the polygon through a boundary edge in $\p(p_1,p_3)$.  The situation is similar in the case of pseudo-polygons, but since we do not have to use straight lines to determine visibility, instead of bending the the boundary of $P$ to block the invisible pair we can instead bend the pseudo-line.  See Fig \ref{fig:blockEx} (d) and (e).  Note that the combinatorial structure of the pseudo-line shown in part (d) (resp. part (e)) is the same as the straight line in part (b) (resp. in part (c)).  The following definition plays an important role in our characterization.  Consider a pseudo-polygon $P$, and let $p_i$ and $p_j$ be two vertices of $P$ that do not see each other.  We say a vertex $p_k \in \p(p_i,p_j)$ of $P$ is a \textit{designated blocker} for the invisible pair $(p_i,p_j)$ if $p_i$ sees $p_k$ (i.e. the segment $p_ip_k$ is inside the polygon) and the ray $r_{i,k}$ first exits the polygon through an edge in $\p(p_j,p_i)$.  The definition for $p_k \in \p(p_j,p_i)$ is defined similarly.  See Figure \ref{fig:OnePairTwoBlocker} (a) for an illustration.  Intuitively, a designated blocker is a canonical vertex that prevents the points in an invisible pair from seeing each other.  In this section, we will prove a key structural lemma of pseudo-polygons: every invisible pair in any pseudo-polygon $P$ has exactly one designated blocker.  

\begin{figure}
\centering
\begin{tabular}{c@{\hspace{0.1\linewidth}}c@{\hspace{0.1\linewidth}}c@{\hspace{0.1\linewidth}}c@{\hspace{0.1\linewidth}}c}
\includegraphics[scale=0.45]{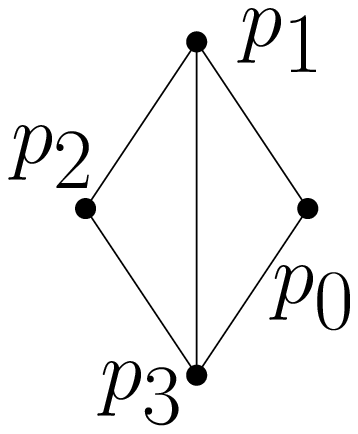}&
\includegraphics[scale=0.45]{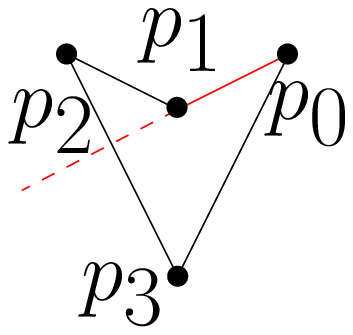}&
\includegraphics[scale=0.45]{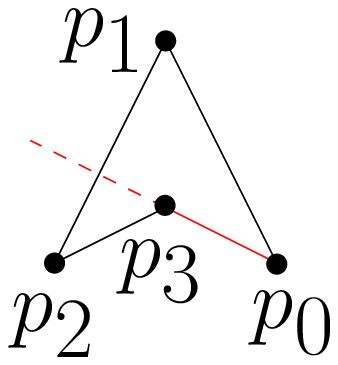}&
\includegraphics[scale=0.45]{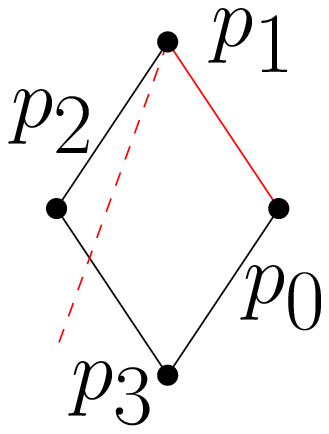}&
\includegraphics[scale=0.45]{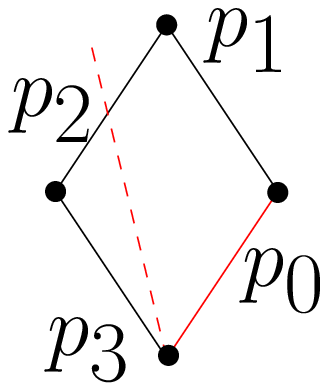}\\
(a) & (b) & (c) & (d) & (e)
\end{tabular}
\caption{(a) A visibility graph $G$.  (b) A simple polygon using $p_1$ to block $p_0$ and $p_2$.  (c) A simple polygon using $p_3$ to block $p_0$ and $p_2$. (d) A pseudo-polygon using $p_1$ to block $p_0$ and $p_2$.  (e) A pseudo-polygon using $p_3$ to block $p_0$ and $p_2$. }
\label{fig:blockEx}
\end{figure}

We now give several definitions and observations that will be used in the proof of the key lemma.  Consider an input graph $G$ with Hamiltonian cycle $C$, and let $(p_i,p_j)$ be an invisible pair in $G$.  If $G$ is the visibility graph of a pseudo-polygon, then there must be some vertex in $G$ that serves as the designated blocker for $(p_i,p_j)$.  The following definition gives a set of at most two candidate vertices for this role.  Starting from $p_j$, walk clockwise towards $p_i$ until we reach the first point $p_k$ such that $\{p_i,p_k\}$ is a visible pair (clearly there must be such a point since $\{p_i, p_{i+1}\}$ is a visible pair).  We say that $p_k$ is a \textit{candidate blocker} for $(p_i, p_j)$ if there are no visible pairs $\{p_s,p_t\}$ such that $p_s \in \p(p_i,p_{k-1})$ and $p_t \in \p(p_{k+1},p_j)$.  Similarly, walk counterclockwise from $p_j$ to $p_i$ until we reach the first point $p_{k'}$ such that $\{p_i,p_{k'}\}$ is a visible pair.  Then $p_{k'}$ is a candidate blocker for $(p_i, p_j)$ if there are no visible pairs $\{p_s,p_t\}$ such that $p_s \in \p(p_j,p_{{k'}-1})$ and $p_t \in \p(p_{{k'}+1},p_i)$.  Note that a vertex may be a candidate blocker for $(p_i, p_j)$ but not for $(p_j, p_i)$.  It clearly follows from the definition that $(p_i,p_j)$ can have at most two candidate blockers: at most one in $\p(p_i,p_j)$ and at most one in $\p(p_j,p_i)$.  We will see that if a vertex in $G$ is not a candidate blocker for $(p_i,p_j)$, then it cannot serve as a designated blocker for $(p_i,p_j)$ in $P$.

\begin{figure}
\centering
\begin{tabular}{c@{\hspace{0.1\linewidth}}c@{\hspace{0.1\linewidth}}c}
\includegraphics[scale=0.35]{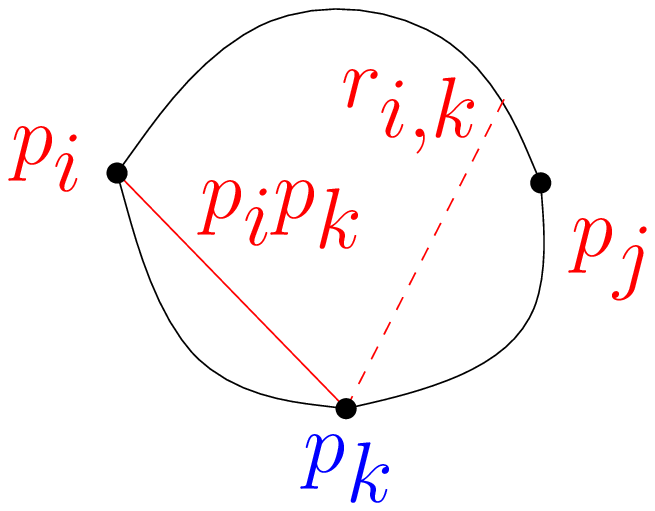}&
\includegraphics[scale=0.35]{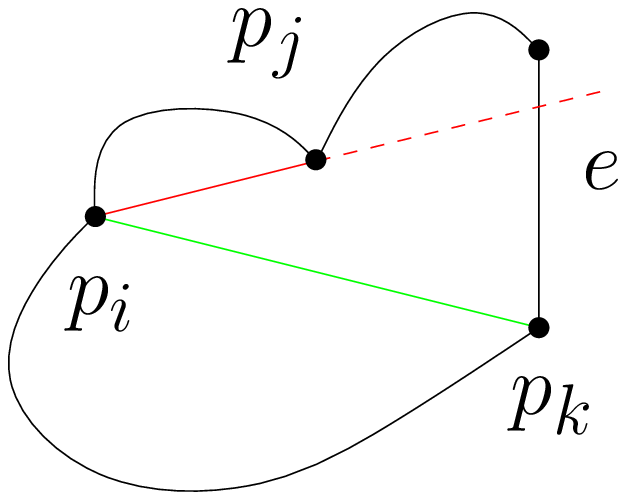}&
\includegraphics[scale=0.35]{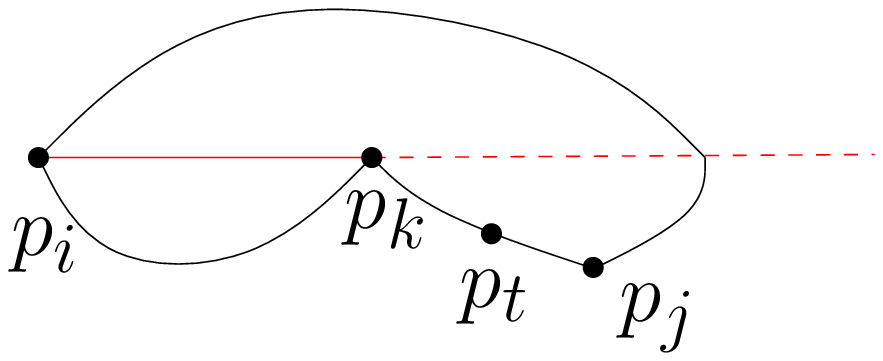}\\
(a) & (b) & (c)
\end{tabular}
\caption{(a) A designated blocker.  (b) The vertex-edge pair $(p_i,e)$ has two witnesses.  Therefore $p_i$ sees $e$. (c) If $p_k$ is the designated blocker for $(p_i,p_j)$ then it also is for $(p_i,p_t)$.}
\label{fig:OnePairTwoBlocker}
\end{figure}




We utilize some observations regarding the vertex-edge visibility graphs for pseudo-polygons given by O'Rourke and Streinu \cite{ORourkeS97} in the proof of our key lemma as well.  We first formally define what it means for a vertex to see a boundary edge in a pseudo-polygon.  Vertex $p_j$ is a \textit{witness} for the vertex-edge pair ($p_i$, $e$) if and only if either 

\begin{enumerate}
\item $p_i$ and $p_j$ are both endpoints of $e$ (permitting $p_j$ = $p_i$), or
\item $p_i$ is not an endpoint of $e$, and both of the following occur: (a) $p_i$ sees $p_j$, and (b) $p_j$ is an endpoint of $e$, or the first boundary edge intersected by $r_{j,i}$ is $e$.
\end{enumerate}

Given the definition of a witness, we say vertex $p$ \textit{sees} edge $e$ if and only if there are at least two witnesses for ($p$, $e$).  See Fig \ref{fig:OnePairTwoBlocker} (b).  The definition requires two witnesses as a vertex $p_i$ could see one endpoint of $e$ without seeing any other part of the edge, and in this situation it is defined that $p_i$ does not see $e$.  We now give the following lemma relating edge visibility and vertex visibility.  Some similar results for straight-line visibility were given in \cite{ORourke1998105}, and we prove them in the context of pseudo-visibility.

\begin{lemma}
If a vertex $p_i$ sees edges $e_{j-1}$ and $e_j$, then it sees vertex $p_j$.  Also if a vertex $p_i$ sees vertex $p_j$, then it sees at least one of $e_{j-1}$ and $e_j$.
\label{lem:edgeVisibility}
\end{lemma}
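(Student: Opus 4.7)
The plan is to argue each implication separately, first disposing of the degenerate case $p_i\in\{p_{j-1},p_j,p_{j+1}\}$. In that case $p_i$ is an endpoint of $e_{j-1}$ or $e_j$, so case~(1) of the witness definition immediately exhibits both endpoints of the relevant edge as witnesses, and $p_i$ sees $p_j$ via the boundary edge joining them on $C$. Both directions hold trivially, so from here on I assume $p_i\notin\{p_{j-1},p_j,p_{j+1}\}$, which forces every witness for any pair $(p_i,e)$ under consideration to arise from case~(2).

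For the forward direction, I would argue the contrapositive. Suppose $p_i$ sees both $e_{j-1}$ and $e_j$ but, for contradiction, does not see $p_j$. Then $p_j$ is disqualified as a witness for either pair by condition~2(a), so each of $(p_i,e_{j-1})$ and $(p_i,e_j)$ must still admit two witnesses, drawn from $\{p_{j-1}\}$ (resp.\ $\{p_{j+1}\}$) if visible together with vertices $p_t$ visible from $p_i$ whose rays $r_{t,i}$ first cross $e_{j-1}$ (resp.\ $e_j$). Since any single ray can first meet at most one boundary edge, these witness sets are disjoint. The plan is to follow the pseudo-line $L_{i,j}$: the segment $p_ip_j$ must exit the polygon somewhere, and by tracking this exit in conjunction with the unique-intersection axiom of pseudo-lines, the configuration of witness-rays required on the two sides of $L_{i,j}$ cannot be simultaneously realized, producing the desired contradiction.

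For the backward direction, suppose $p_i$ sees $p_j$. Then $p_j$ is already a witness for both $(p_i,e_{j-1})$ and $(p_i,e_j)$ by the ``endpoint'' clause of~2(b), so it suffices to furnish one additional witness for one of the two pairs. If $p_i$ sees $p_{j-1}$ or $p_{j+1}$, that vertex is immediately the required second witness. Otherwise I would analyze the local arrangement at $p_j$ of the three pseudo-lines $L_{i,j}$, $L_{j-1,j}$, and $L_{j,j+1}$, which pairwise meet only at $p_j$. The segment $p_ip_j$ arrives at $p_j$ from the interior sector of the polygon, and the ray $r_{j,i}$ leaves $p_j$ into one of the surrounding sectors without ever re-crossing $e_{j-1}$ or $e_j$. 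The cyclic order of the six half-rays these three pseudo-lines determine at $p_j$, together with the invisibility of $p_{j-1}$ and $p_{j+1}$ from $p_i$, will force the existence of a vertex $p_t$ visible from $p_i$ whose ray $r_{t,i}$ first crosses $e_{j-1}$ or $e_j$; this $p_t$ serves as the required second witness, playing the role of a ``reflex'' blocker that bounds the visible portion of the corresponding edge from $p_i$.

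The main obstacle in both directions is the local argument at $p_j$: in the pseudo-polygon setting one cannot appeal to Euclidean continuity, and the proof must be carried out purely through the sector structure of the relevant pseudo-lines at $p_j$ together with the pcp axiom that any two pseudo-lines cross at exactly one point. I expect a careful case analysis on how these sectors meet and which contain the visible neighbours of $p_i$ will simultaneously supply the missing witness in the backward direction and rule out the witness configurations demanded in the forward direction.
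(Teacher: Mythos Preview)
Your forward direction is headed the right way but skips the one concrete observation that makes the contradiction work. The paper does not track the exit of $p_ip_j$ and then argue about witness configurations on its two sides; it picks \emph{one} witness $p_a$ for $(p_i,e_{j-1})$ and one witness $p_b$ for $(p_i,e_j)$ and observes that $L_{i,a}$, $L_{i,b}$ and $L_{i,j}$ all pass through $p_i$, so $L_{i,j}$ is trapped between the other two all the way to $p_j$ (it cannot re-cross either one). Since the portions of $L_{i,a}$ and $L_{i,b}$ from $p_i$ out to $e_{j-1}$ and $e_j$ lie in the polygon, any boundary edge blocking $p_ip_j$ would have to cut one of those two segments first, contradicting that $p_a$ and $p_b$ are witnesses. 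Your disjointness remark about the two witness sets is true but not what drives the argument.

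The backward direction has a genuine gap. A purely local analysis of the cyclic order of the six half-rays of $L_{i,j}$, $L_{j-1,j}$, $L_{j,j+1}$ at $p_j$ cannot by itself produce a vertex $p_t$ visible from $p_i$ with $r_{t,i}$ first hitting $e_{j-1}$ or $e_j$; that is a global existence statement about vertices of the polygon, not a fact about sectors at $p_j$. The paper's mechanism is different and more concrete: after a case split on which side of $L_{i,j}$ the point $p_{j-1}$ lies, it looks at the pseudo-triangle $p_ip_jp_{j-1}$ (or $p_ip_jp_{j+1}$ in the other case). Because $p_i$ does not see $p_{j-1}$, the boundary must enter this triangle through the side $p_ip_{j-1}$, so the triangle contains at least one polygon vertex. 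Among all such vertices $p_k$, take the one whose line $L_{i,k}$ meets $e_{j-1}$ closest to $p_j$; any other interior vertex above $L_{i,k}$ would force $L_{i,k}$ and $L_{i,k'}$ to cross twice, so none exists, and hence $r_{k,i}$ reaches $e_{j-1}$ unobstructed and $p_k$ is the desired second witness. Your plan does not supply this existence-plus-extremality step, and without it the sector analysis at $p_j$ does not close.
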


\begin{proof}
First we will show that if a vertex $p_i$ sees edges $e_{j-1}$ and $e_j$, then it sees vertex $p_j$. Suppose $p_i$ does not see $p_j$, then we have a witness for $e_j$ that intersects $e_j$ and a witness for $e_{j-1}$ that intersects $e_{j-1}$. The line $L_{i,j}$ must stay between the witness lines (because it intersects the witness lines at $p_i$ and therefore cannot intersect them again). In order to block $p_i$ from $p_j$, we'd have to block a witness, a contradiction. See Figure \ref{fig:lemma1} (a). 

Then we will show that if a vertex $p_i$ sees vertex $p_j$, then $p_i$ sees at least one of $e_{j-1}$ and $e_j$. If $p_i$ sees $p_{j-1}$ or $p_{j+1}$ then clearly $p_i$ would see the corresponding edge, so suppose that $p_i$ does not see either of $p_{j-1}$ or $p_{j+1}$.  Recall $L_{i,j}$ partitions the plane into into two half planes.  We consider two cases based on the position of $p_{j-1}$ with respect to these half planes.

\textbf{Case 1:}  First suppose $p_{j-1}$ is to the ``right'' of the ray shot from $p_i$ to $p_j$. See Figure \ref{fig:lemma1} (b). Line $L_{i,j}$ and line $L_{j-1,j}$ divide the plane into four quadrants: one containing $p_i$ and $p_{j-1}$, one containing only $p_{j-1}$, one containing $p_j$ and $p_{j-1}$, and one containing all three points.  If $p_{j+1}$ is in the quadrant containing $p_i$ and $p_{j-1}$, then it follows that any associated pseudo-polygon will have that $p_i$ does not see $p_j$.  Since $p_i$ sees $p_j$, assume $p_{j+1}$ is in any other quadrant, and consider the ``triangle'' $p_ip_jp_{j-1}$. Since $p_i$ cannot see $p_{j-1}$, line $L_{i,j-1}$ should intersect a boundary segment prior to reaching $p_{j_1}$. The boundary segment $e_j$ cannot cut through this triangle given the location of $p_{j+1}$, so the only boundary segments that can cut through $L_{i,j-1}$ are edges that entering the triangle through the segment $L_{i,j-1}$ without intersecting the segment $L_{i,j}$.  It follows that there must be at least one vertex contained inside of the triangle, and we will show that one of these vertices is a witness for $p_i$ and $e_{j-1}$.  For any vertex $p_k$ in the triangle, the line $L_{i,k}$ must stay ``between'' the lines $L_{i,j}$ and $L_{i,j-1}$ and eventually intersect $e_{j-1}$.  Starting at $p_j$, walk towards $p_{j-1}$ along $e_{j-1}$ until we reach the first such exit point of a line $L_{i,k}$.  Let $q$ denote this point.  We claim that all other vertices $p_{k'}$ in the triangle must be below $L_{i,k}$, and therefore $L_{i,k}$ does not intersect any boundary points prior to intersecting $e_{j-1}$. If there were a point $p_{k'}$ in the triangle that is above $L_{i,k}$, then $L_{i,k'}$ intersects $L_{i,k}$ at $p_i$, they split apart, then they must intersect again in the triangle because $L_{i,k'}$ must cross $e_{j-1}$ below $q$.  It follows that $p_k$ is a witness.  See Figure \ref{fig:lemma1} (c).

\textbf{Case 2:} Now suppose $p_{j-1}$ is in the left of line $L_{i,j}$ half plane. See Figure \ref{fig:lemma1} (d). Again the lines $L_{i,j}$ and $L_{j,j-1}$ divide the plane into four quadrants.  It must be that $p_{i+1}$ is in the quadrant containing all three points, as otherwise $p_i$ will not be able to see $p_j$.  The analysis here is similar to the previous case, except here we the triangle $p_ip_jp_{j+1}$. 
 \end{proof}

\begin{figure}
\centering
\begin{tabular}{c@{\hspace{0.1\linewidth}}c}
\includegraphics[scale=0.4]{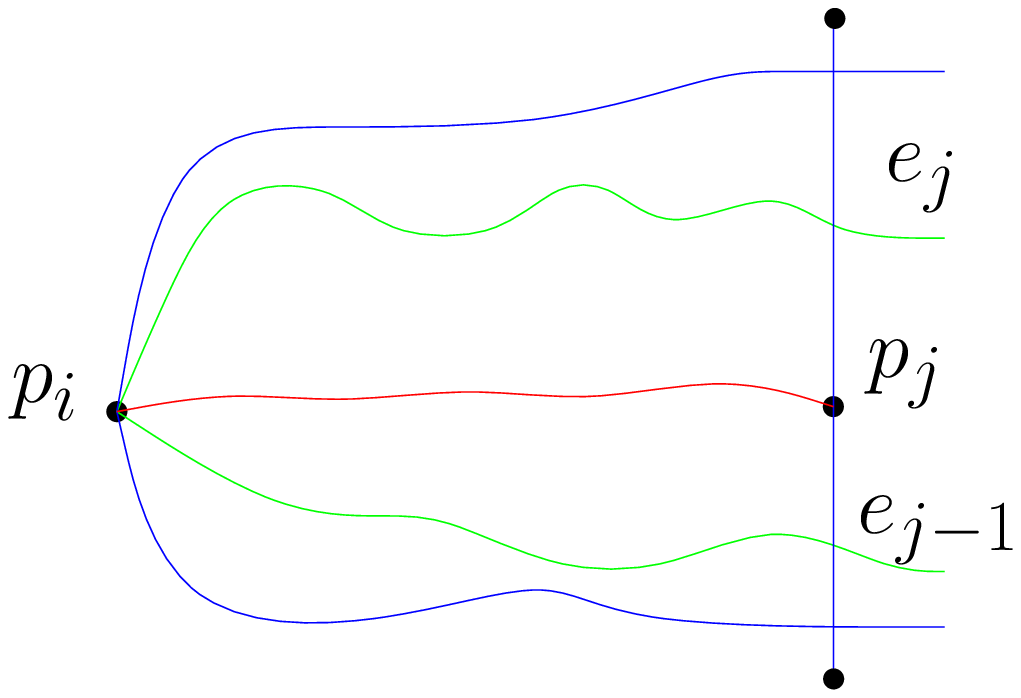} &
\includegraphics[scale=0.3]{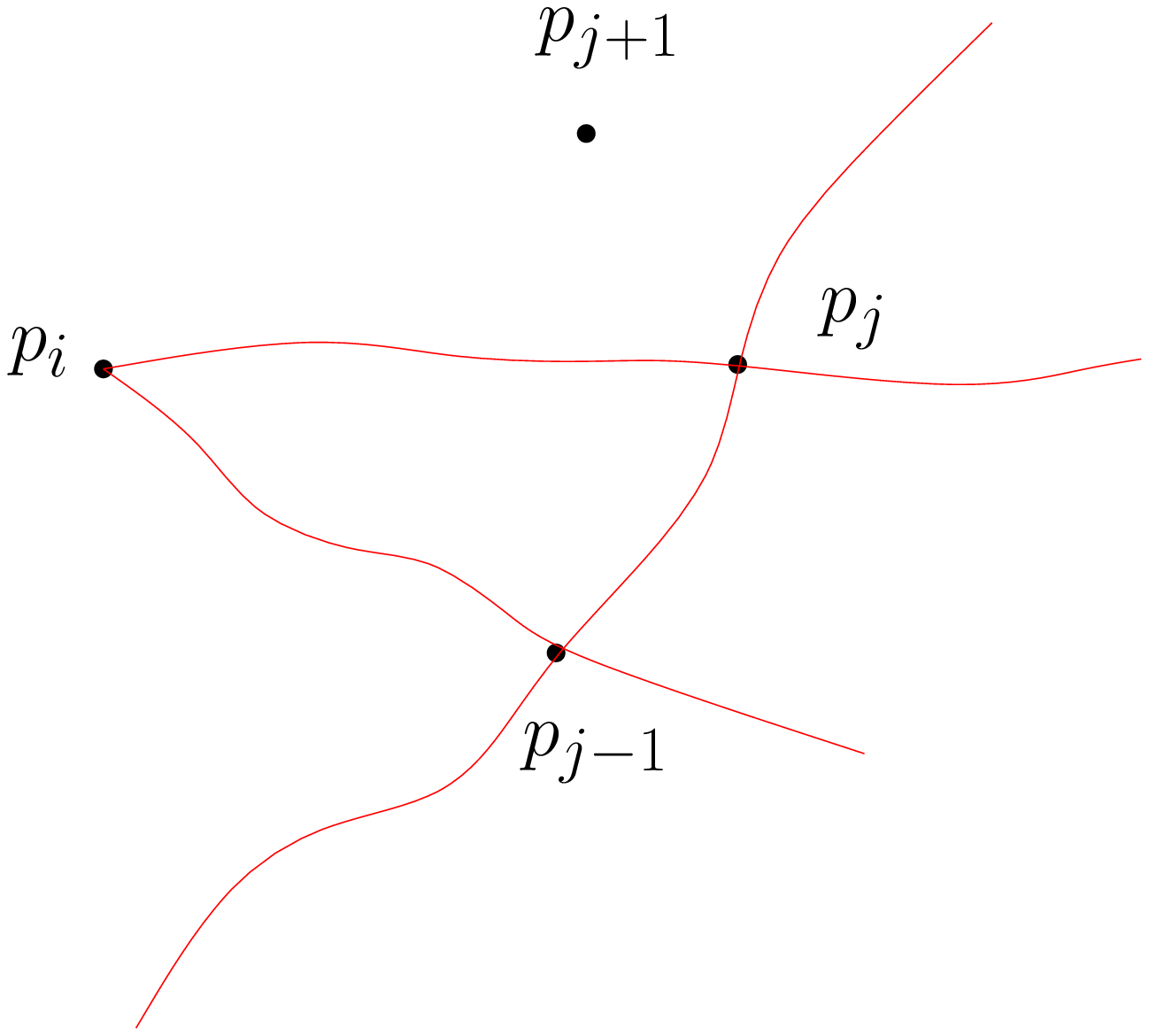} \\
(a)  & (b)  \vspace{1cm} \\ 
\includegraphics[scale=0.4]{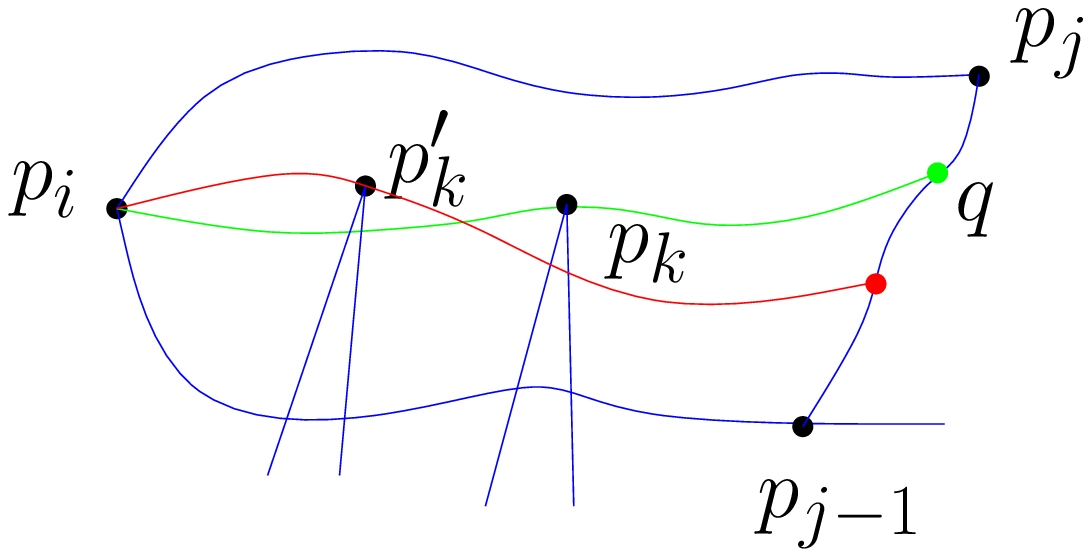} &
\includegraphics[scale=0.3]{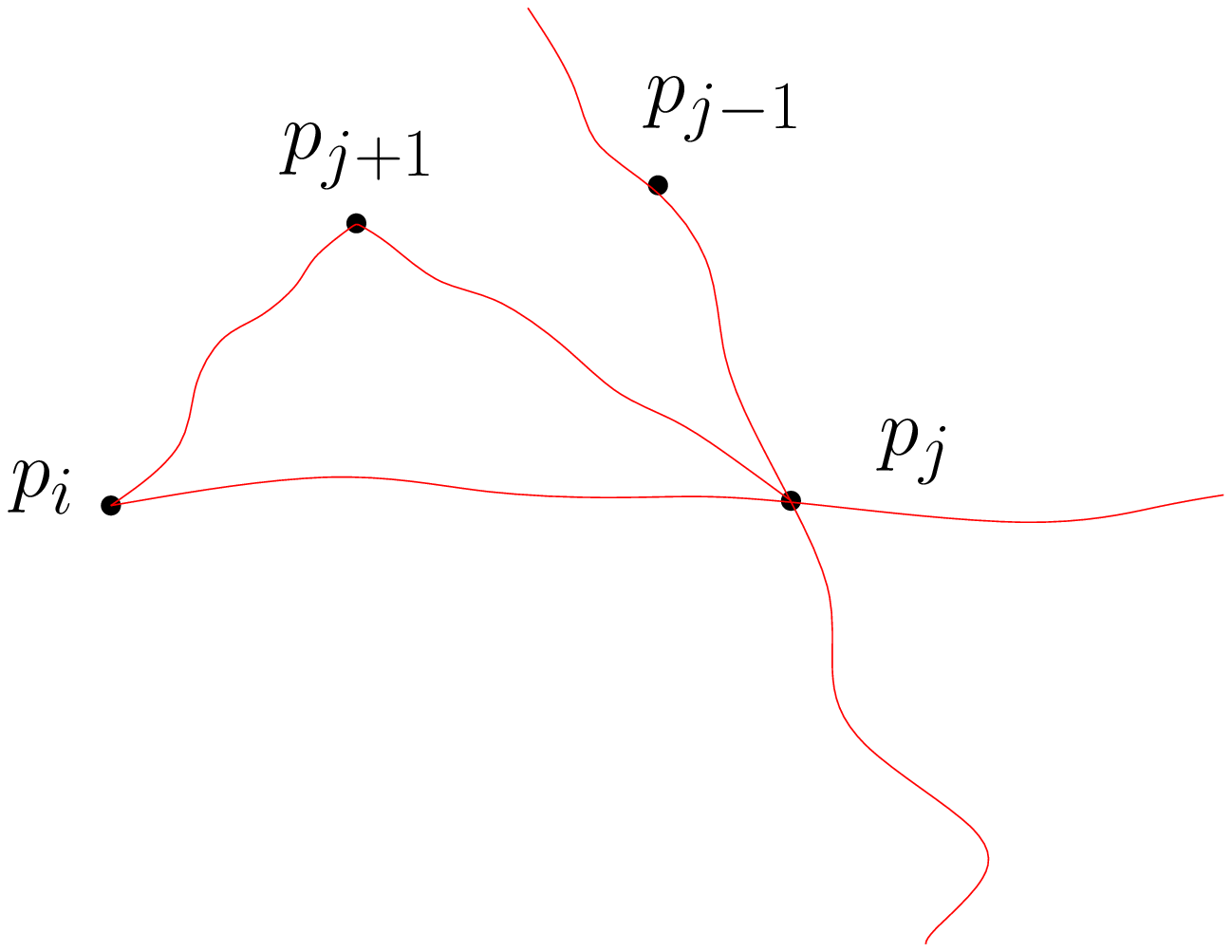} \\
(c)  & (d)  \vspace{1cm} \\ 
\end{tabular}
\caption{An illustration of proof the Lemma \ref{lem:edgeVisibility}.}
\label{fig:lemma1}
\end{figure}

The following lemma from \cite{ORourkeS97} is used in the proof of our key lemma.  Note that Case A and Case B are symmetric.

\begin{lemma}
If $p_k\in \p(p_{b+1}, p_{a-1})$ sees non-adjacent edges $e_a$ and $e_b$ and no edge $\p(p_{a+1}, p_{b})$, then exactly one of Case A or B holds. \textbf{Case A:} (1) $p_k$ sees $p_{a+1}$ but not $p_b$; and (2) $p_{a+1}$ is a witness for ($p_k$, $e_b$); and (3) $p_{a+1}$ sees $e_b$ but $p_b$ does not see $e_a$.  \textbf{Case B:} (1) $p_k$ sees $p_b$ but not $p_{a+1}$; and (2) $p_{b}$ is a witness for ($p_k$, $e_a$); and (3) $p_b$ sees $e_a$ but $p_{a+1}$ does not see $e_b$.
\label{lem:keyEdgeLemma}
\end{lemma}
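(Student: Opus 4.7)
The plan is to combine Lemma~\ref{lem:edgeVisibility} with a witness count for the pairs $(p_k, e_a)$ and $(p_k, e_b)$ in order to force exactly one of Cases~A or~B, and then to read off properties (2) and (3) directly from the witness definition. The symmetry of the two cases means it suffices to treat one of them in detail.

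First I would apply Lemma~\ref{lem:edgeVisibility} to every vertex $p_i$ with $a+1 < i < b$: if $p_k$ saw such a $p_i$, then $p_k$ would also see one of $e_{i-1}, e_i$, both of which lie in $\p(p_{a+1}, p_b)$, contradicting the hypothesis. Hence the only vertices of $\p(p_{a+1}, p_b)$ possibly visible from $p_k$ are the two endpoints $p_{a+1}$ and $p_b$.

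Second, I would show that exactly one of $p_{a+1}, p_b$ is visible from $p_k$. The ``at most one'' direction follows once we rule out the possibility that $p_k$ sees both: the pseudo-triangle bounded by the pseudo-segments $p_k p_{a+1}$, $p_k p_b$ and the boundary chain $\p(p_{a+1}, p_b)$ would lie inside the polygon, and a pseudo-triangulation-style argument would produce some edge $e_i$ with $a < i < b$ visible from $p_k$, contradicting the hypothesis. For the ``at least one'' direction, I would count witnesses for $(p_k, e_a)$: since $p_k$ sees $e_a$, there are at least two, and any witness that is neither $p_a$ nor $p_{a+1}$ must be some vertex $p_j$ seen by $p_k$ with the ray $r_{j,k}$ first exiting through $e_a$. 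A pseudo-line crossing analysis, combined with the first step, restricts such non-endpoint witnesses enough to force at least one of $p_{a+1}, p_b$ to be seen by $p_k$. The cleanest form of this analysis, which is the technical heart of the argument, shows that the \emph{only} non-endpoint witness for $(p_k, e_a)$ can be $p_b$ and, symmetrically, the only non-endpoint witness for $(p_k, e_b)$ can be $p_{a+1}$.

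Third, having fixed Case~A (where $p_k$ sees $p_{a+1}$ but not $p_b$), I would derive (2) and (3) from the witness analysis above. Condition (2) is exactly the statement that $p_{a+1}$ is the non-endpoint witness for $(p_k, e_b)$, which holds because the only other candidate $p_b$ is ruled out. Condition (3) then falls out of the definitions: $p_{a+1}$ has both $p_k$ and $p_{b+1}$ as witnesses for $(p_{a+1}, e_b)$, so $p_{a+1}$ sees $e_b$; and if $p_b$ also saw $e_a$, the symmetric version of the witness analysis would exhibit $p_b$ as a non-endpoint witness for $(p_k, e_a)$, forcing $p_k$ to see $p_b$ and contradicting Case~A. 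Case~B is handled by the mirror-image argument.

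The hard part will be the combinatorial step in the second paragraph that pins down non-endpoint witnesses. Because we do not have Euclidean geometry at our disposal, this must be carried out purely in terms of the intersection pattern of the pseudo-lines $L_{j,k}$, $L_{j,a}$, and $L_{j,a+1}$ together with the cyclic ordering of the vertices along $C$. Once that step is in place, the remainder of the proof is routine application of the witness definition together with Lemma~\ref{lem:edgeVisibility}.
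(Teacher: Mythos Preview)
The paper does not actually prove this lemma: it is quoted verbatim from O'Rourke and Streinu \cite{ORourkeS97} and used as a black box in the proof of Lemma~\ref{lem:blockers}. So there is no ``paper's own proof'' to compare against; your proposal is an attempt to reprove an imported result.

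As an independent argument, your outline follows the natural line, but two steps are underspecified enough to count as gaps. First, in Step~3 your justification of condition~(3) claims that $p_k$ and $p_{b+1}$ are witnesses for $(p_{a+1}, e_b)$. Neither is established: for $p_k$ to be a witness you would need the ray $r_{k,a+1}$ (starting at $p_k$, away from $p_{a+1}$) to first exit through $e_b$, which is the \emph{reverse} direction from what condition~(2) gives you; and nothing in the hypotheses guarantees that $p_{a+1}$ sees $p_{b+1}$ at all. The actual argument that $p_{a+1}$ sees $e_b$ requires producing two genuine witnesses on the $e_b$ side, and this takes more work than you have indicated. Second, your assertion that ``the only non-endpoint witness for $(p_k,e_a)$ can be $p_b$'' implicitly rules out witnesses lying in $\p(p_{b+1}, p_{a-1})$, i.e.\ on the same side as $p_k$; a vertex $p_j$ there visible from $p_k$ could in principle have $r_{j,k}$ exit through $e_a$, and you have not excluded this. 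Until those two points are pinned down, the sketch does not close.
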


We are now ready to present our key structural lemma.


\begin{lemma}
For any invisible pair $(p_i,p_j)$ in a pseudo-polygon $P$, there is exactly one designated blocking vertex $p_k$.  Moreover, $p_k$ is a candidate blocker for the invisible pair $(p_i,p_j)$ in the visibility graph of $P$.
\label{lem:blockers}
\end{lemma}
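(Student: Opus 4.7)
The plan is to establish existence and uniqueness of the designated blocker separately and then derive the candidate-blocker property. The main tools are the pseudo-line axiom (every pair of pseudo-lines meets at exactly one point) together with Lemmas~\ref{lem:edgeVisibility} and~\ref{lem:keyEdgeLemma}.

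For existence, the pseudo-segment $p_ip_j$ must exit $P$ through some first boundary edge $e$, and by the symmetry of the blocker definition between the two arcs we may assume $e\in\partial(p_j,p_i)$ and look for a blocker in $\partial(p_i,p_j)$. Let $p_k$ be the vertex of $\partial(p_i,p_j)$ of largest index such that $\{p_i,p_k\}$ is visible; this exists because $p_i$ sees $p_{i+1}$. I would show that the ray extending $p_ip_k$ past $p_k$ cannot first exit through an edge in $\partial(p_i,p_k)$ (such an exit would force the ray to re-cross $L_{i,k}$, violating the pseudo-line axiom) nor through an edge in $\partial(p_k,p_j)$: using Lemma~\ref{lem:edgeVisibility} to translate between vertex- and edge-visibility from $p_i$ and then applying Lemma~\ref{lem:keyEdgeLemma} to the resulting edge configuration, such an exit would yield a visible pair $\{p_i,p_m\}$ with $k<m\le j$, contradicting the maximality of $k$. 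Hence the first exit lies in $\partial(p_j,p_i)$, so $p_k$ is a designated blocker.

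For uniqueness, I would argue by contradiction. If two blockers both lie in $\partial(p_i,p_j)$ with indices $k<k'$, the segment $p_ip_{k'}$ is inside $P$ and, since $L_{i,k}$ meets $L_{i,k'}$ only at $p_i$, the ray past $p_k$ is confined to the side of $p_ip_{k'}$ not containing $\partial(p_j,p_i)$, contradicting $p_k$ being a blocker. The opposite-side case (one blocker in each arc) is the main obstacle: here the two pseudo-segments $p_ip_k$ and $p_ip_{k'}$ together with their continuation rays exit through opposite arcs, forming a closed pseudo-barrier through $P$ that separates $p_j$ from the exterior and hence prevents $p_ip_j$ from exiting $P$, contradicting invisibility. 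Finally, for the candidate-blocker claim, condition~(1) (that $p_k$ is the first visible vertex walking clockwise from $p_j$) follows immediately from the maximality choice used in existence. For condition~(2), suppose a visible pair $\{p_s,p_t\}$ exists with $p_s\in\partial(p_i,p_{k-1})$ and $p_t\in\partial(p_{k+1},p_j)$; the cyclic order $p_i,p_s,p_k,p_t$ forces $p_sp_t$ and $p_ip_k$ to cross inside $P$, and a further application of the edge-visibility lemmas along the pseudo-segment $p_ip_t$ produces a visible vertex beyond $p_k$ in $\partial(p_i,p_j)$, again contradicting maximality of $k$.

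The principal obstacle I foresee is the opposite-side uniqueness case, which requires carefully combining the pseudo-line crossing axiom with the topology of the polygon boundary. The remaining steps should reduce to bookkeeping with Lemmas~\ref{lem:edgeVisibility} and~\ref{lem:keyEdgeLemma} applied to local edge-visibility configurations around $p_k$.
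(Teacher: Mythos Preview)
Your overall structure (existence, uniqueness, candidate-blocker property) is sound, but the existence argument has a genuine gap. You choose $p_k$ as the last visible vertex in $\partial(p_i,p_j)$ and want to show the ray past $p_k$ exits through $\partial(p_j,p_i)$. The claim that an exit through an edge in $\partial(p_k,p_j)$ would, ``via Lemma~\ref{lem:keyEdgeLemma},'' produce a visible vertex beyond $p_k$ is not justified: Lemma~\ref{lem:keyEdgeLemma} requires that $p_i$ \emph{see} two specified non-adjacent edges with no edge seen in between, and knowing only that the ray through $p_k$ hits some $e_m$ gives at best a single \emph{witness} for $(p_i,e_m)$, not that $p_i$ sees $e_m$. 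Your one-sided setup never produces a second seen edge to pair with, so the hypotheses of Lemma~\ref{lem:keyEdgeLemma} are not in place. Similarly, the assertion that an exit through $\partial(p_i,p_k)$ would ``force the ray to re-cross $L_{i,k}$'' is incoherent as stated, since that ray \emph{is} part of $L_{i,k}$.

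The paper's proof closes this gap by working with \emph{both} extremal visible vertices $p_k\in\partial(p_i,p_j)$ and $p_{k'}\in\partial(p_j,p_i)$ simultaneously. Lemma~\ref{lem:edgeVisibility} guarantees $p_i$ sees an edge adjacent to each, and two applications of Lemma~\ref{lem:keyEdgeLemma} force $p_i$ to see \emph{exactly one} edge $e_a$ in $\partial(p_k,p_{k'})$. A final application of Lemma~\ref{lem:keyEdgeLemma} to the pair $e_{k-1},e_a$ (Case~A, since $p_i$ does not see $p_a$) then yields that $p_k$ is a witness for $(p_i,e_a)$, i.e.\ the ray first exits through $e_a\in\partial(p_j,p_i)$. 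Your appeal to ``where $p_ip_j$ first exits $P$'' plays no role and does not by itself pin down on which arc the blocker sits. Note also that the paper reverses your order: it proves the candidate-blocker property \emph{first}, directly from the definition via a clean double-intersection argument; uniqueness is then almost free (at most one candidate blocker per arc, and two opposite-arc blockers force $L_{i,k}$ and $L_{i,k'}$ to cross twice), avoiding both your same-arc case analysis and the informal ``pseudo-barrier'' topology.
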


\begin{figure}
\centering
\begin{tabular}{c@{\hspace{0.1\linewidth}}c}
\includegraphics[scale=0.4]{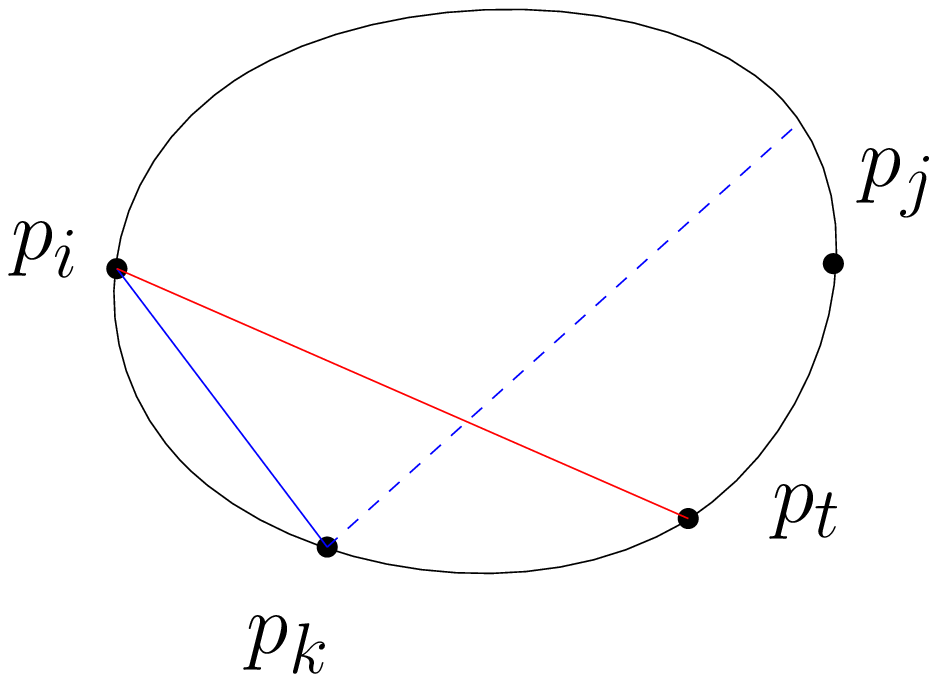} &
\includegraphics[scale=0.4]{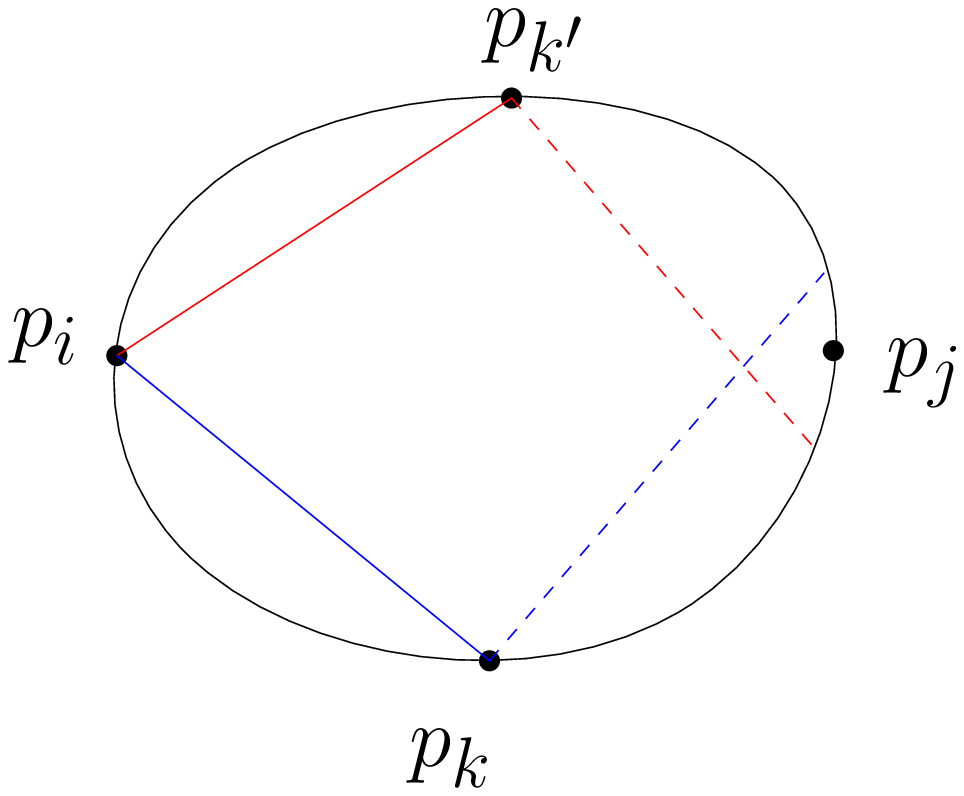} \\
(a)  & (b)  \vspace{1cm} \\ 
\includegraphics[scale=0.4]{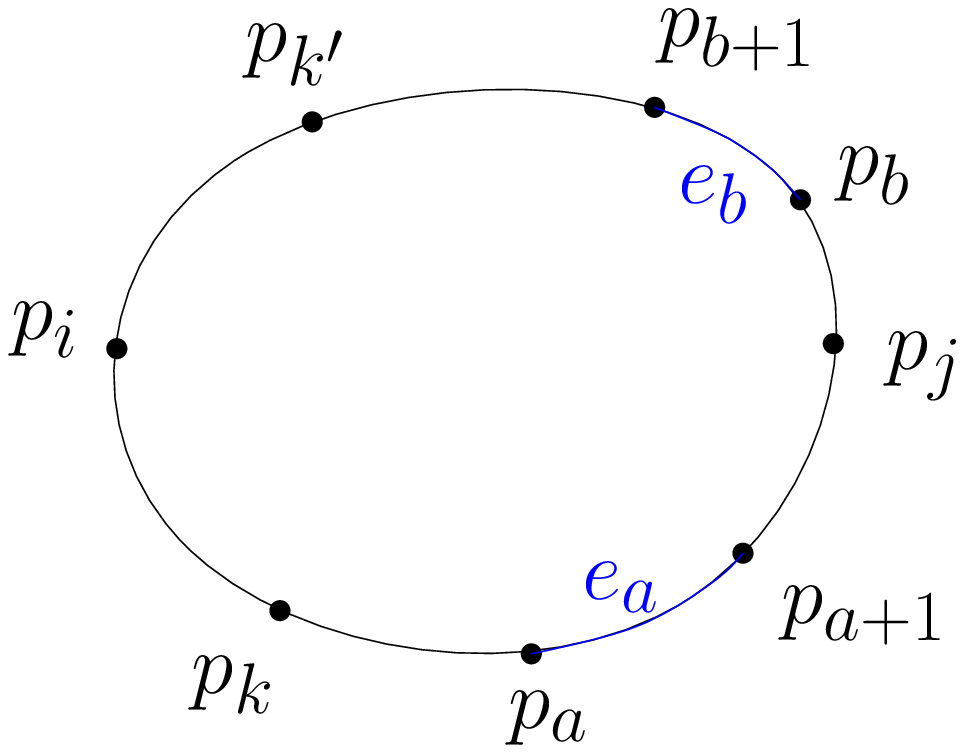} &
\includegraphics[scale=0.4]{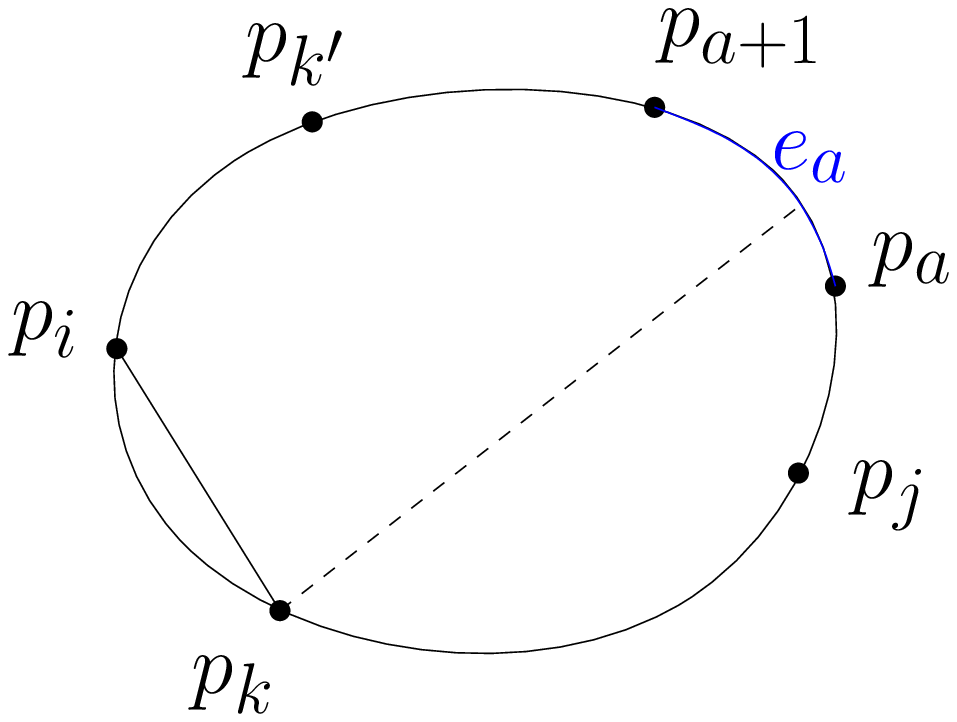} \\
(c)  & (d)  \vspace{1cm} \\ 
\end{tabular}
\caption{An illustration of proof the Lemma \ref{lem:blockers}.}
\label{fig:lemma3}
\end{figure}

\begin{proof}
We begin by showing that a designated blocking vertex $p_k$ for an invisible pair $(p_i,p_j)$ is a candidate blocker for the invisible pair $(p_i,p_j)$.  Without loss of generality, assume that $p_k \in \p(p_i,p_j)$.  For the sake of contradiction, suppose $p_i$ sees a point $p_t \in \p(p_{k+1},p_j)$.  The pseudo-lines $L_{i,k}$ and $L_{i,t}$ intersect at $p_i$, and by the definition of designated blocker, the ray $r_{i,k}$ must intersect $L_{i,t}$ again, a contradiction.  Therefore $p_k$ must be the first point that $p_i$ sees when walking clockwise from $p_j$.  It remains to argue that no point $p_s \in \p(p_{i+1},p_{k-1})$ sees a point $p_t \in \p(p_{k+1},p_j)$.  Suppose the contrary.  Then the segments $p_ip_k$ and $p_sp_t$ must both be contained inside of the polygon, and therefore they must intersect each other, and we also have $r_{i,k}$ must intersect $p_sp_t$ again following the definition of designated blocker, a contradiction. See Figure \ref{fig:lemma3} (a).  It follows that the vertex $p_k$ must be a candidate blocker for the invisible pair $(p_i,p_j)$.

It remains to show that there must be exactly one designated blocker for each invisible pair.  Since each designated blocker is a candidate blocker, there can clearly be at most two designated blockers.  We first show there cannot be two designated blockers for an invisible pair $(p_i,p_j)$.  Suppose $p_k$ and $p_{k'}$ are both designated blockers.  Since they are both candidate blockers, we can assume without loss of generality that $p_k \in \p(p_i,p_j)$ and $p_{k'} \in \p(p_j,p_i)$.  It follows from the definition of designated blocker that $L_{i,k}$ and $L_{i,k'}$ intersect twice. See Figure \ref{fig:lemma3} (b).

We now show that there must be a designated blocker.  Consider an invisible pair $(p_i,p_j)$.  Starting from $p_j$, walk clockwise towards $p_i$ until we reach the first point $p_i$ sees, which we denote $p_k$.  Note that this point must exist since $p_i$ sees $p_{i+1}$.  Similarly walk counter clockwise from $p_j$ until we reach the first point $p_i$ sees, which we denote $p_{k'}$.  Clearly it must be that $p_i$ cannot see any point in $\p(p_{k+1}, p_{k'-1})$.  By Lemma \ref{lem:edgeVisibility} we have that $p_i$ must see at least one edge adjacent to $p_k$ and at least one edge adjacent to $p_{k'}$, and we will show that $p_i$ can see exactly one edge in $\p(p_k,p_{k'})$.  
First suppose that $p_i$ sees no edges in $\p(p_k,p_{k'})$.  Then it must see $e_{k-1}$ and $e_{k'}$ with no edges in $\p(p_k,p_{k'})$.  Applying Lemma \ref{lem:keyEdgeLemma}, we have that either $p_i$ does not see $p_k$ or it does not see $p_{k'}$, a contradiction.  By Lemma \ref{lem:edgeVisibility} we have that $p_i$ cannot see two consecutive edges $e_{s-1}$ and $e_s$ or else $p_i$ would see $p_s \in \p(p_{k+1}, p_{k'-1})$, a contradiction.  So finally suppose $p_i$ sees two non-consecutive edges $e_a$ and $e_b$ in $\p(p_k,p_{k'})$.  Then Lemma \ref{lem:keyEdgeLemma} implies that either $p_i$ sees $p_{a+1}$ or it sees $p_b$, a contradiction in either case.  It follows that $p_i$ must see exactly one edge in $\p(p_k,p_{k'})$. See Figure \ref{fig:lemma3} (c).

Suppose without loss of generality that the edge $e_a \in \p(p_k,p_{k'})$ that $p_i$ sees is in $\p(p_j,p_{k'})$.  Then $p_i$ sees $e_{k-1}$ and $e_a$, and $p_i$ does not see any edge in $\p(p_k,p_{a-1})$.  Applying Lemma \ref{lem:keyEdgeLemma}, we see that we must be in Case A as $p_i$ cannot see $p_a$.  Part (2) from Case A gives us that $p_k$ is a witness for $(p_i,e_a)$, and therefore $r_{i,k}$ first exits the polygon through edge $e_a$.  It follows that $p_k$ is a designated blocker for the invisible pair $(p_i,p_j)$. See Figure \ref{fig:lemma3} (d).  \end{proof}

\section{Necessary Conditions}
\label{sec:nc}
In this section, we give a set of five necessary conditions (NCs) that $G$ must satisfy.  That is, if $G$ does not satisfy one of the conditions then $G$ is \textit{not} the visibility graph for any pseudo-polygon.  Following from Lemma \ref{lem:blockers}, if $G$ is the visibility graph of a pseudo-polygon $P$ then we should be able to assign candidate blockers in $G$ to invisible pairs to serve as the designated blockers in $P$ so that Lemma \ref{lem:blockers} and other pcp properties hold.   The NCs outline a set of properties that this assignment must satisfy if the assignments correspond with a valid set of designated blockers in a pseudo-polygon.  The proofs of these conditions use the definition of designated blockers to show that if the assignment of candidate blockers to invisible pairs do not satisfy the condition, then some pseudo-lines intersect twice, intersect but do not cross, etc.  We illustrate the conditions with simple polygon examples to develop intuition, but the proofs hold for pseudo-polygons.  



Let $(p_i,p_j)$ be an invisible pair, and let $p_k$ be the candidate blocker assigned to it.  The first NC uses the definition of pseudo-lines and designated blockers to provide additional constraints on $p_i$ and $p_k$.  See Fig \ref{fig:OnePairTwoBlocker} (c) for an illustration.  Note that while the condition is stated for $p_k \in \p(p_i,p_j)$, a symmetric condition for when $p_k \in \p(p_j,p_i)$ clearly holds.


\begin{nc}
If $p_k \in \p(p_i,p_j)$ is the candidate blocker assigned to invisible pair $(p_i,p_j)$ then both of the following must be satisfied: (1) $p_k$ is assigned to the invisible pair $(p_i,p_t)$ for every $p_t \in \p(p_{k+1},p_j)$ and (2) if $(p_k,p_j)$ is an invisible pair then $p_i$ is not the candidate blocker assigned to it.

\label{nc:blockers}
\end{nc}

\begin{proof}
Property (1) easily follows from the definition of designated blockers. See Figure \ref{fig:nc1} (a). Property (2) follows by observing that if this is the case then the pseudo-line $L_{i,k}$ would self-intersect, a contradiction. See Figure \ref{fig:nc1} (b)
 \end{proof}

\begin{figure}
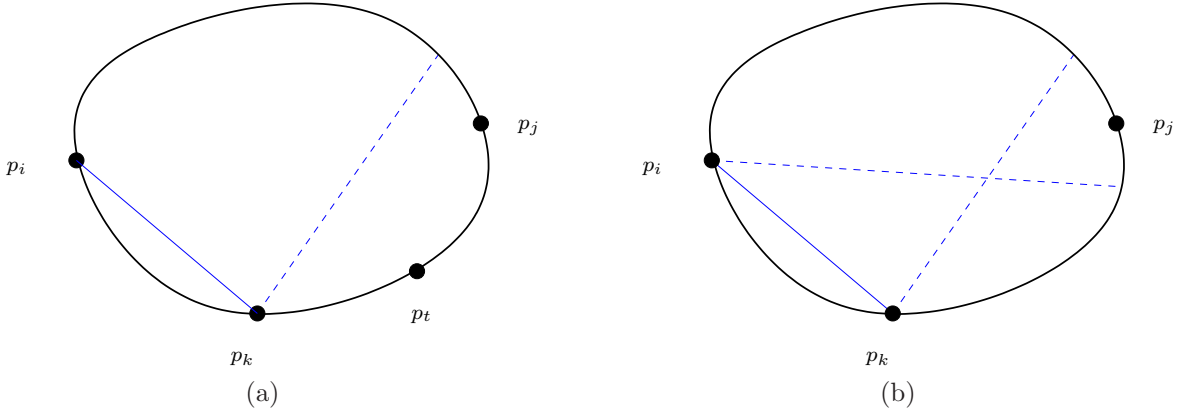

\centering
\begin{tabular}{c@{\hspace{0.1\linewidth}}c}
\input{nc1a.pstex_t} &
\input{nc1b.pstex_t}\\
(a) & (b)
\end{tabular}
\caption{Illustrations for Necessary Condition \ref{nc:blockers}.}
\label{fig:nc1}
\end{figure}

Again let $p_k$ be the candidate blocker assigned to an invisible pair $(p_i,p_j)$ such that $p_k \in \p(p_i,p_j)$.  Since $p_k$ is a candidate blocker, we have that $(p_s,p_j)$ is an invisible pair for every $p_s \in \p(p_i, p_{k-1})$.  The next NC is a constraint on the location of designated blockers for $(p_s,p_j)$.  In particular, if $\{p_s,p_k\}$ is a visible pair, then $p_k$ must be the designated blocker for $(p_s,p_j)$.  See Fig \ref{fig:newnc2} (a).  If $(p_s,p_k)$ is an invisible pair, then it must be assigned a designated blocker $p_t$.  In this case, $p_t$ must also be the designated blocker for $(p_s,p_j)$.  See Fig \ref{fig:newnc2} (b).

\begin{figure}
\centering
\begin{tabular}{c@{\hspace{0.1\linewidth}}c}
\includegraphics[scale=0.5]{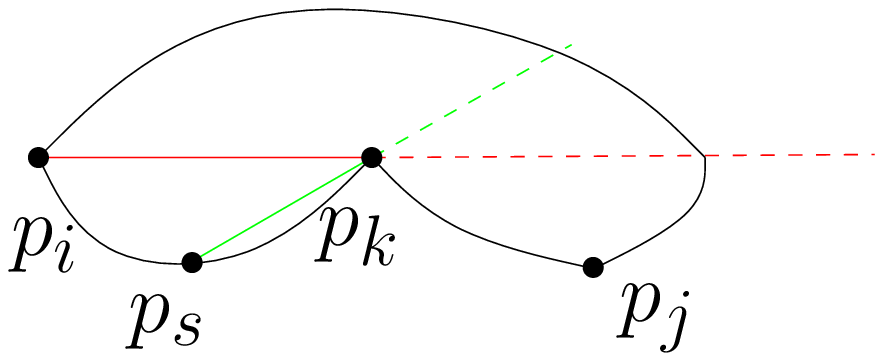}&
\includegraphics[scale=0.5]{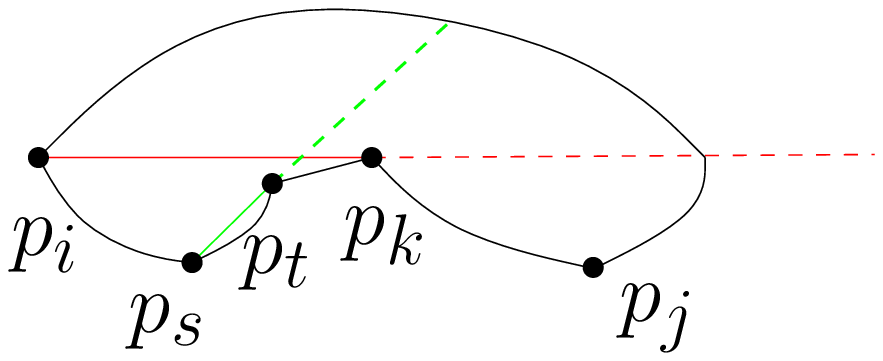}\\
(a) & (b) 
\end{tabular}
\caption{(a) If $p_k$ is the designated blocker for $(p_i,p_j)$ and $p_s$ sees $p_k$ then $p_k$ is the designated blocker for $(p_s,p_j)$.  (b) If $p_s$ does not see $p_k$, and $p_t$ is the designated blocker for $(p_s,p_k)$ then $p_t$ is also the designated blocker for $(p_s,p_j)$. }
\label{fig:newnc2}
\end{figure}

\begin{nc}
\label{nc:alreadyBlocked}
Let $(p_i,p_j)$ denote an invisible pair, and suppose $p_k$ is the candidate blocker assigned to this invisible pair.  Without loss of generality, suppose $p_k \in \p(p_i,p_j)$, and let $p_s$ be any vertex in $\p(p_i,p_{k-1})$.  Then exactly one of the following two cases holds: (1) $\{p_s,p_k\}$ is a visible pair, and the candidate blocker assigned to the invisible pair $(p_s,p_j)$ is $p_k$, or (2) $(p_s,p_k)$ is an invisible pair.  If the candidate blocker assigned to  $(p_s,p_k)$ is $p_t$, then $(p_s,p_j)$ is assigned the candidate blocker $p_t$.
\end{nc}

\begin{proof}
Suppose we are in Case 1.  Note that $p_k$ must be the candidate blocker for $(p_s,p_j)$ in $\p(p_s,p_j)$.  Indeed if it were not, then some vertex in $\p(p_s,p_{k-1})$ would have to see a vertex in $\p(p_{k+1},p_j)$ which contradicts that $p_k$ is a candidate blocker for $(p_i,p_j)$.  So if $p_k$ is not assigned to $(p_s,p_j)$ then  the candidate blocker $p_t$ assigned to $(p_s,p_j)$ is in $\p(p_j,p_s)$; however if the corresponding point $p_t$ were the designated blocker, then we would have that the $L_{s,t}$ would intersect the $L_{i,k}$ twice.  This follows because $p_t$ would be the designated blocker for $(p_s,p_j)$ but not $(p_s,p_k)$ (since $p_s$ sees $p_k$), and therefore the $L_{s,t}$ first exits the polygon in $\p(p_k,p_j)$.  Therefore it must be that $p_k$ is assigned to $(p_s,p_j)$.  See Figure \ref{fig:nc2} (a).

Now suppose we are in Case 2, and we have that $(p_s,p_k)$ is an invisible pair which has been assigned candidate blocker $p_t$.  First note that $p_t$ must be in $\p(p_i,p_k)$; it cannot be in $\p(p_k,p_j)$ for the same reasons as the previous case, and it cannot be in $\p(p_{j+1},p_{i-1})$ because $\{p_i,p_k\}$ is a visible pair and such a point could not be a candidate blocker.  If $p_t \in \p(p_i,p_{s-1})$ then Necessary Condition \ref{nc:blockers} implies that $p_t$ must be assigned to $(p_s,p_j)$.  So now suppose that $p_t \in \p(p_{s+1},p_{k-1})$.  If $p_t$ is assigned to $(p_s,p_k)$ but is not assigned to $(p_s,p_j)$ then $p_t$ would be the designated blocker for $(p_s,p_k)$ but not for $(p_s,p_j)$.  It easily follows that $L_{i,k}$ and $L_{s,t}$ intersect twice. See Figure \ref{fig:nc2} (b).
 \end{proof}

\begin{figure}
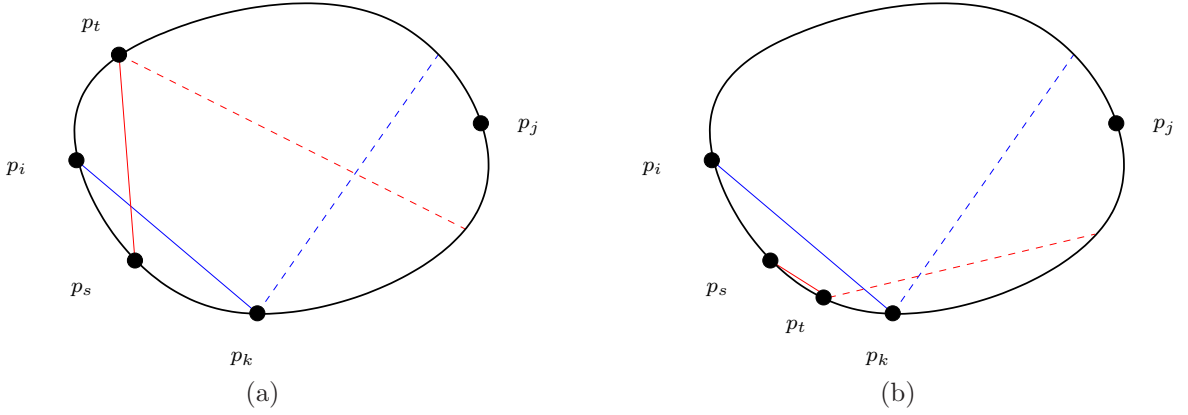

\centering
\begin{tabular}{c@{\hspace{0.1\linewidth}}c}
\input{nc2a.pstex_t} &
\input{nc2b.pstex_t}\\
(a) & (b)
\end{tabular}
\caption{Illustrations for Necessary Condition \ref{nc:alreadyBlocked}.}
\label{fig:nc2}
\end{figure}


The next NC is somewhat similar to Necessary Condition \ref{nc:alreadyBlocked}, except instead of introducing constraints on the designated blockers for $(p_s,p_j)$, it introduces constraints on the designated blockers for $(p_j,p_s)$ (where the order is reversed).  Similar to the previous case, if $p_j$ sees $p_k$ then $p_k$ must block $p_j$ from seeing every $p_s \in \p(p_i,p_{k-1})$, but we can also see that $p_k$ must block $p_j$ from any point $p_t$ such that $p_i$ is the designated blocker for $(p_k,p_t)$.  See Fig \ref{fig:newnc3} (a).  If $p_j$ does not see $p_k$, then there must be a designated blocker $p_q$ for $(p_j,p_k)$.  See Fig \ref{fig:newnc3} (b).  We show that in this case, $p_q$ must be the designated blocker for all $(p_j,p_s)$ and $(p_j,p_t)$.  Also, $(p_i,p_q)$ must be an invisible pair with designated blocker $p_k$.

\begin{figure}
\centering
\begin{tabular}{c@{\hspace{0.1\linewidth}}c}
\includegraphics[scale=0.5]{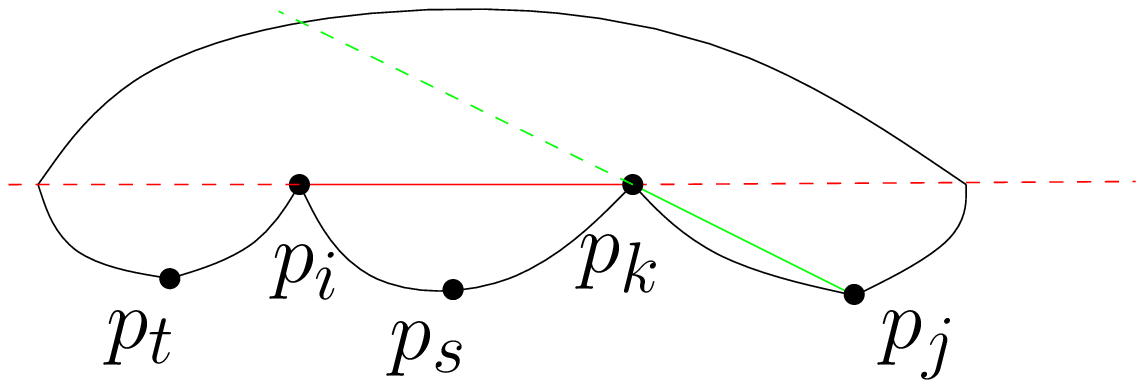}&
\includegraphics[scale=0.5]{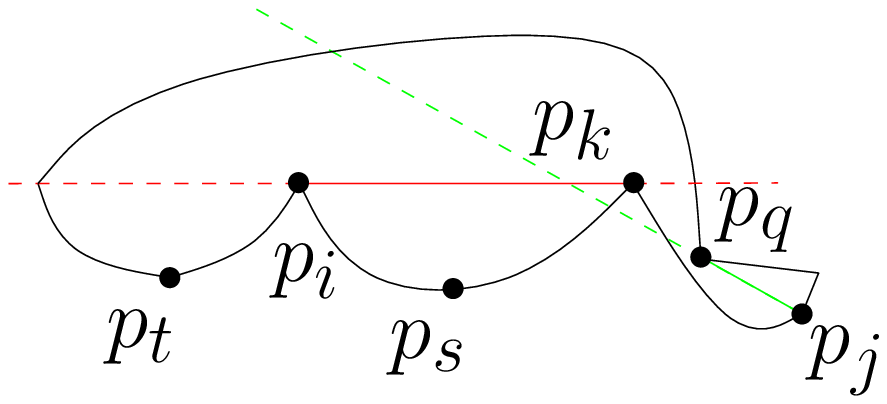}\\
(a) & (b) 
\end{tabular}
\caption{(a) If $p_k$ is the designated blocker for $(p_i,p_j)$ and $p_j$ sees $p_k$ then $p_k$ is the designated blocker for $(p_j,p_s), (p_j, p_i)$, and $(p_j,p_t)$.  (b) If $p_j$ does not see $p_k$, and $p_q$ is the designated blocker for $(p_j,p_k)$ then $p_q$ is the designated blocker for $(p_j,p_s), (p_j, p_i)$, and $(p_j,p_t)$.  Moreover, $(p_i,p_q)$ is an invisible pair and $p_k$ is its designated blocker. }
\label{fig:newnc3}
\end{figure}

\begin{nc}
\label{nc:blockedProperty}
Let $(p_i,p_j)$ denote an invisible pair, and suppose $p_k$ is the candidate blocker assigned to this invisible pair. Without loss of generality, suppose $p_k \in \p(p_i,p_j)$.  Then exactly one of the following two cases holds:

\begin{enumerate}
\item (a) $\{p_j,p_k\}$ is a visible pair. (b) For all $p_s \in \p(p_i,p_{k-1})$, the candidate blocker assigned to the invisible pair $(p_j,p_s)$ is $p_k$. (c) If $p_t$ is such that $p_i$ is the candidate blocker assigned to the invisible pair $(p_k,p_t)$, then $(p_j,p_t)$ is an invisible pair and is assigned the candidate blocker $p_k$.

\item (a) $(p_j,p_k)$ is an invisible pair.  Let $p_q$ denote the candidate blocker assigned to $(p_j,p_k)$. (b) $(p_i,p_q)$ is an invisible pair, and $p_k$ is the candidate blocker assigned to it. (c) For all $p_s \in \p(p_i,p_{k})$, the candidate blocker assigned to the invisible pair $(p_j,p_s)$ is $p_q$. (d) If $p_t$ is such that $p_i$ is the candidate blocker assigned to the invisible pair $(p_k,p_t)$, then $(p_j,p_t)$ is an invisible pair and is assigned the candidate blocker $p_q$.


\end{enumerate}
\end{nc}

\begin{proof}
First suppose we are in Case 1: $\{p_j,p_k\}$ is a visible pair.  First note that $(p_j,p_s)$ is an invisible pair for all $p_s \in \p(p_i,p_{k-1})$ or $p_k$ would not be a candidate blocker for $(p_i,p_j)$.   This further implies that $p_k$ is a candidate blocker for $(p_j,p_s)$.  If we assign a candidate blocker $p_a$ in $\p(p_{j+1},p_{s-1})$ to $(p_j,p_s)$ then we have that $L_{i,k}$ and $L_{j,a}$ will intersect twice, as $r_{j,a}$ would first exit the polygon in $\p(p_s,p_k)$.  Therefore we must assign $p_k$ to $(p_j,p_s)$.  See Figure \ref{fig:nc3a} (a).

Now consider a point $p_t$ such that $p_i$ is the candidate blocker assigned to the invisible pair $(p_k,p_t)$.  Note that from property (1) in Necessary Condition \ref{nc:blockers}, we have that $p_t \in \p(p_{j+1},p_k)$, but we just handled the case for all points in $\p(p_i,p_{k-1})$ so we assume that $p_t \in \p(p_{j+1},p_{i-1}$.  From property (2) of Necessary Condition \ref{nc:blockers} we have that  $p_i$ is not assigned to $(p_k,p_j)$, and $p_k$ is not assigned to $(p_k,p_t)$, which implies the rays $r_{i,k}$ and $r_{k,i}$ do not intersect.    If $\{p_j,p_t\}$ were a visible pair then $L_{j,t}$ would intersect $L_{i,k}$ twice, and therefore $(p_j,p_t)$ must be an invisible pair.  We have again that $p_k$ is a candidate blocker for $(p_j,p_t)$.  If a candidate blocker $p_a \in \p(p_j,p_t)$ were used instead then $L_{j,a}$ would intersect $L_{i,k}$ twice, as $r_{j,a}$ would have to first exit the polygon in $\p(p_t,p_k)$. See Figure \ref{fig:nc3a} (b).

\begin{figure}
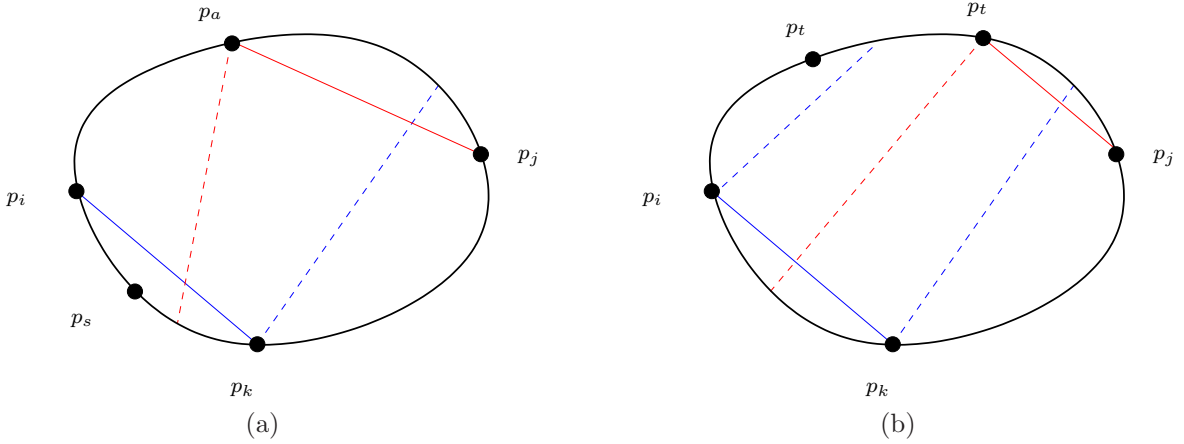

\centering
\begin{tabular}{c@{\hspace{0.1\linewidth}}c}
\input{nc3a.pstex_t} &
\input{nc3b.pstex_t}\\
(a) & (b)
\end{tabular}
\caption{Illustrations for Case 1 of Necessary Condition \ref{nc:blockedProperty}.}
\label{fig:nc3a}
\end{figure}

Now suppose we are in Case 2, and we have that $p_q$ is the candidate blocker assigned to the invisible pair $(p_i,p_k)$.  Suppose for the sake of contradiction that $\{p_i,p_q\}$ is a visible pair.  It must be that $p_q \in \p(p_{j+1},p_{i-1})$ since $p_k$ is a candidate blocker for $(p_i,p_j)$, and moreover the ray $r_{i,k}$ first exits $P$ in $\p(p_j,p_q)$. Since $p_q$ is the designated blocker for $(p_j,p_k)$, it follows that $L_{j,q}$ intersects $L_{i,k}$ twice, a contradiction.    So we have that $(p_i,p_q)$ is an invisible pair, and for the sake of contradiction assume that $p_k$ is not the candidate blocker assigned to it.  Then by Necessary Condition \ref{nc:blockers} we have that $p_q \in \p(p_{j+1},p_{i-1})$.  It follows similarly that $L_{j,q}$ would then intersect $L_{i,k}$ twice.  So we have that $(p_i,p_q)$ is an invisible pair, and $p_k$ is the candidate blocker assigned to it. See Figure \ref{fig:nc3b} (a).

Now consider any $p_s \in \p(p_i,p_{k})$.  If $p_q \in \p(p_{j+1},p_{s-1})$, then $p_q$ must be assigned to $(p_j,p_s)$ by Necessary Condition \ref{nc:blockers}.  So now suppose that $p_q \in \p(p_{k+1},p_{j-1})$.  We have that $p_q$ is a candidate blocker for the invisible pair $(p_j,p_s)$, otherwise $p_k$ would not be one for $(p_i,p_j)$.  If we do not assign $p_q$ to $(p_j,p_s)$ then $p_q$ would be the designated blocker for $p_k$ but not for $p_s$ which will cause $L_{j,q}$ and $L_{i,k}$ to intersect twice.  It follows that $p_q$ must be assigned to $(p_j,p_s)$. See Figure \ref{fig:nc3b} (b). 

So now consider a vertex $p_t$ such that $p_i$ is the candidate blocker assigned to the invisible pair $(p_k,p_t)$.  Similarly as in Case 1, we assume that $p_t \in \p(p_{j+1},p_{i-1})$ and that the rays $r_{i,k}$ and $r_{k,i}$ do not intersect.  Clearly it cannot be that $\{p_j,p_t\}$ is a visible pair or $L_{j,t}$ will intersect $L_{i,k}$ twice.  If $p_q \in \p(p_{j+1},p_{t-1})$ then $p_q$ must be assigned to $(p_j,p_t)$ by Necessary Condition \ref{nc:blockers} since it is assigned to $(p_j,p_k)$.  So suppose $p_q \in \p(p_{k+1},p_j)$.   If $p_q$ is assigned to $(p_j,p_k)$ but is not assigned to $(p_j,p_t)$ then $p_q$ would be the designated blocker for $(p_j,p_k)$ but not for $(p_j,p_t)$.  It easily follows that $L_{i,k}$ and $L_{j,q}$ intersect twice. See Figure \ref{fig:nc3b} (c).
 \end{proof}

\begin{figure}
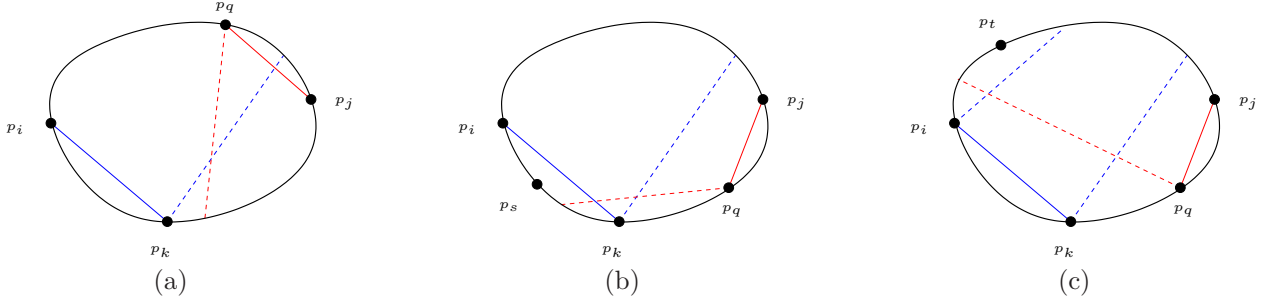

\centering
\begin{tabular}{c@{\hspace{0.1\linewidth}}c@{\hspace{0.1\linewidth}}c}
\input{nc3c.pstex_t} &
\input{nc3d.pstex_t}&
\input{nc3e.pstex_t}\\
(a) & (b) & (c)
\end{tabular}
\caption{Illustrations for Case 2 of Necessary Condition \ref{nc:blockedProperty}.}
\label{fig:nc3b}
\end{figure}





Suppose $p_k$ is a candidate blocker for an invisible pair $(p_i, p_j)$ (or $(p_j, p_i)$), and suppose without loss of generality that $p_i \in \p(p_j,p_k)$.  If $p_k$ is also a candidate blocker for an invisible pair $(p_s, p_t)$ such that $p_s,p_t \in \p(p_k,p_j)$ then we say that the two invisible pairs are a \textit{separable invisible pair}.  We have the following condition which is the same as Necessary Condition 3 for simple polygons in \cite{GhoshG13}.  See Fig \ref{fig:newnc45} (a).




\begin{nc}
Suppose $(p_i,p_j)$ and $(p_s,p_t)$ are a separable invisible pair with respect to a candidate blocker $p_k$.  If $p_k$ is assigned to $(p_i,p_j)$ then it is not assigned to $(p_s,p_t)$.
\label{nc:separable}
\end{nc}

\begin{proof}
 If the point $p_k$ that corresponds with $p_k$ is the designated blocker for $(p_i,p_j)$ and $(p_s,p_t)$ then the pseudo-lines $L_{i,k}$ and $L_{s,k}$ intersect at point $p_k$ but do not cross, a contradiction.  See Figure \ref{fig:nc4}.
 \end{proof}

\begin{figure}
\centering
\begin{tabular}{c@{\hspace{0.1\linewidth}}c}
\includegraphics[scale=0.4]{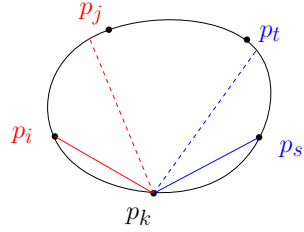} \\
\end{tabular}
\caption{An illustration of Necessary Condition \ref{nc:separable}}
\label{fig:nc4}
\end{figure}

\begin{figure}
\centering
\begin{tabular}{c@{\hspace{0.1\linewidth}}c}
\includegraphics[scale=0.3]{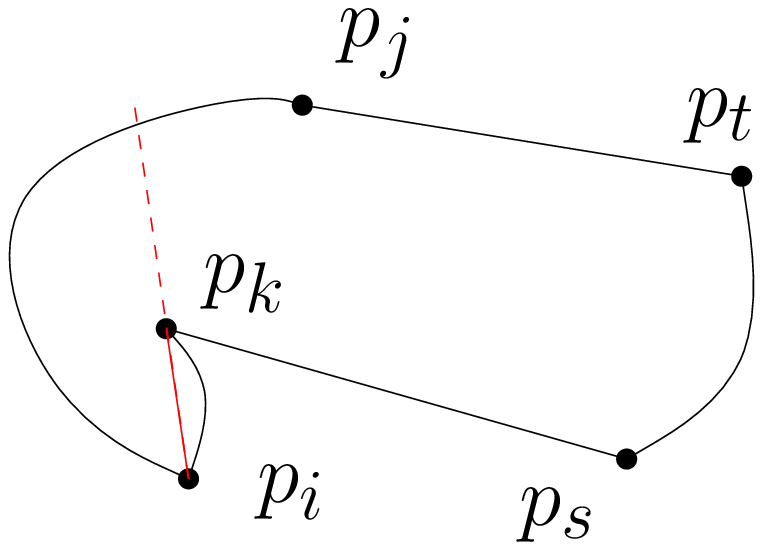}&
\includegraphics[scale=0.3]{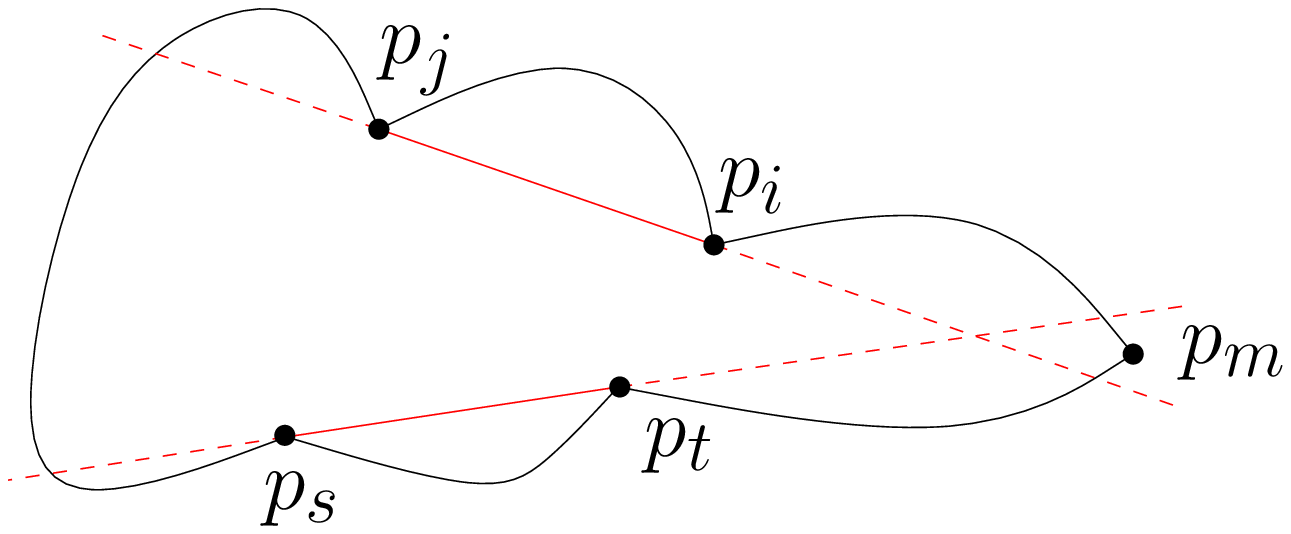}\\
(a) & (b) 
\end{tabular}
\caption{(a) If $p_k$ blocks one invisible pair of a separable invisible pair then it cannot block the other one as well.  (b) $p_i, p_j, p_s$, and $p_t$ are $\{p_i,p_t\}$-pinched.  If $p_j$ blocks $p_i$ from seeing some point, then $p_s$ cannot also block $p_t$ from seeing that point.}
\label{fig:newnc45}
\end{figure}


We now give the final NC.  Let $p_i, p_j, p_s,$ and $p_t$ be four vertices of $G$ in ``counter-clockwise order'' around the Hamiltonian cycle $C$.  We say that $p_i, p_j, p_s,$ and $p_t$ are \textit{$\{p_i,p_t\}$-pinched} if there is a $p_m \in \p(p_t,p_i)$ such that $p_i$ is the designated blocker for the invisible pair $(p_j,p_l)$ and $p_t$ is the designated blocker for the invisible pair $(p_s,p_l)$.  See Fig \ref{fig:newnc45} (b).  The notion of $\{p_j,p_s\}$-pinched is defined symmetrically.  


\begin{nc}
Let $p_i, p_j, p_s,$ and $p_t$ be four vertices of $G$ in counter-clockwise order around the Hamiltonian cycle $C$ that are $\{p_i,p_t\}$-pinched.  Then they are not $\{p_j,p_s\}$-pinched.
\label{nc:pinched}
\end{nc}

\begin{proof}
 If $p_i, p_j, p_s,$ and $p_t$ are $\{p_i,p_t\}$-pinched and are $\{p_j,p_s\}$-pinched, then it easily follows from the definition of designated blockers that the pseudo-lines $L_{i,j}$ and $L_{s,t}$ will intersect twice. See Figure \ref{fig:nc5}.
 \end{proof}

\begin{figure}
\centering
\begin{tabular}{c@{\hspace{0.1\linewidth}}c}
\includegraphics[scale=0.4]{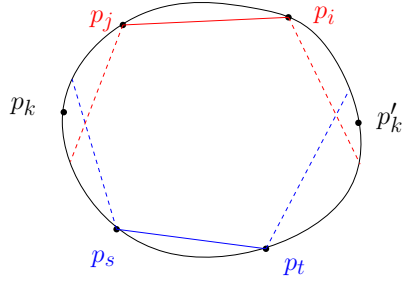} \\
\end{tabular}
\caption{An illustration of Necessary Condition \ref{nc:pinched}}
\label{fig:nc5}
\end{figure}


\section{Proving the Conditions are Sufficient}
Suppose we are given an assignment of candidate blockers to invisible pairs that satisfies all NCs presented in Section \ref{sec:nc}.  In this section, we prove that $G$ is the visibility graph for some pseudo-polygon.  We make use of the characterization of vertex-edge visibility graphs for pseudo-polygons given by O'Rourke and Streinu \cite{ORourkeS97}.  That is, we show that the vertex-edge visibility graph associated with $G$ and the assignment of candidate blockers satisfies the necessary and sufficient conditions given in \cite{ORourkeS97}.  

We begin by giving an important lemma that relates vertex-edge visibility with designated blockers in any pseudo-polygon $P$.

\begin{lemma}
A vertex $p_i$ does not see an edge $e_j$ if and only if one of the two following conditions hold: (1) $p_s \in \p(p_{i+1},p_j)$ is the designated blocker for $(p_i,p_t)$ for some $p_t \in \p(p_{j+1},p_{i-1})$, or (2) $p_t \in \p(p_{j+1},p_{i-1})$ is the designated blocker for $(p_i,p_s)$ for some $p_s \in \p(p_{i+1},p_{j-1})$.
\label{lem:edgeVis}
\end{lemma}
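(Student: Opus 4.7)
The plan is to split the proof into the two directions of the equivalence, both hinging on the observation that each endpoint $p_j, p_{j+1}$ of $e_j$ serves as a witness for $(p_i, e_j)$ whenever visible from $p_i$, so if $p_i$ sees both endpoints it automatically sees $e_j$. Hence $p_i$ failing to see $e_j$ forces at least one of $p_j, p_{j+1}$ to be invisible from $p_i$; without loss of generality I take $p_i$ not to see $p_{j+1}$.

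For the forward direction, Lemma \ref{lem:blockers} supplies a unique designated blocker $p_k$ for the invisible pair $(p_i, p_{j+1})$. If $p_k \in \p(p_{i+1}, p_j)$ then condition (1) follows immediately by setting $p_s := p_k$ and $p_t := p_{j+1}$. Otherwise $p_k \in \p(p_{j+2}, p_{i-1})$, and I apply the symmetric form of Necessary Condition \ref{nc:blockedProperty} to $(p_i, p_{j+1})$ with blocker $p_k$. Case 2 of that condition produces an intermediate vertex $p_q$ blocking $(p_{j+1}, p_k)$ and an induced invisible pair $(p_i, p_q)$ blocked by $p_k$; depending on where $p_q$ lies in the cyclic order, either $p_q \in \p(p_{i+1}, p_{j-1})$ (giving condition (2) with $p_t := p_k$ and $p_s := p_q$), or I re-apply the same argument to $(p_i, p_q)$ with blocker $p_k$. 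Case 1 of the necessary condition (where $\{p_{j+1}, p_k\}$ is visible) is handled similarly using part (c), which forces a blocker on the side straddling $e_j$.

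For the reverse direction, assume condition (1) (condition (2) is symmetric) and let $p_s, p_t$ be the prescribed blocker--target pair. Necessary Condition \ref{nc:blockers} (1) propagates $p_s$'s blocking to every $p_r \in \p(p_{s+1}, p_t)$, so in particular $p_{j+1}$ is invisible from $p_i$, and so is $p_j$ when $p_s \neq p_j$; at most one endpoint contributes a witness to $(p_i, e_j)$. I then rule out any non-endpoint witness $p_m$: if such a $p_m$ existed, then $p_m$ satisfies the defining conditions of the unique designated blocker of either $(p_i, p_j)$ or $(p_i, p_{j+1})$, but that role is already taken by $p_s$ (whose ray exits past $e_j$, in $\p(p_{j+1}, p_i)$, not at $e_j$), contradicting the uniqueness asserted by Lemma \ref{lem:blockers}. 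Hence at most one witness exists overall and $p_i$ fails to see $e_j$.

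The main obstacle is the iteration in the forward direction: ensuring that repeated applications of Necessary Condition \ref{nc:blockedProperty} eventually produce a blocker in $\p(p_{i+1}, p_{j-1})$, which requires equipping the sequence of auxiliary invisible pairs with a strictly decreasing progress measure (such as the cyclic distance from the intermediate blocker to $e_j$) that guarantees termination.
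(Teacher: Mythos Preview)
Your approach differs substantially from the paper's: you argue combinatorially through the Necessary Conditions and Lemma~\ref{lem:blockers}, whereas the paper works both directions directly from the pseudo-line axiom that two distinct pseudo-lines cross exactly once.

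There is a genuine gap in your reverse direction. Suppose condition~(1) holds with $p_s = p_j$, so $p_j$ itself is the designated blocker for $(p_i, p_t)$ with $p_t \in \p(p_{j+1}, p_{i-1})$. Then $\{p_i, p_j\}$ is a visible pair and gives one witness for $(p_i, e_j)$. Consider a putative second witness $p_m \in \p(p_{i+1}, p_{j-1})$ whose ray exits through $e_j$. You claim $p_m$ would be a designated blocker for $(p_i, p_j)$ or $(p_i, p_{j+1})$, forcing $p_m = p_s$ by uniqueness. But $e_j$ lies in $\p(p_j, p_i)$ and not in $\p(p_{j+1}, p_i)$, so $p_m$ matches the blocker definition only for the pair $(p_i, p_j)$ --- which is \emph{visible}, and Lemma~\ref{lem:blockers} asserts uniqueness only for invisible pairs. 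Nothing in your argument rules out such a $p_m$. The paper handles this uniformly: any line $L_{i,m}$ through $p_i$ that reaches $e_j$ would have to meet the center of $L_{i,s}$ a second time (since that center passes from $p_i$ through $p_s$ to an exit edge strictly beyond $e_j$), violating the single-crossing property.

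Your forward direction is also incomplete beyond the termination issue you flag. The case split on the location of the auxiliary vertex $p_q$ produced by Necessary Condition~\ref{nc:blockedProperty} is not exhaustive: if $p_q \in \p(p_{j+2}, p_{k-1})$, you learn only that $p_k$ blocks $(p_i, p_q)$ with $p_q$ still on the same side of $e_j$ as $p_k$, which is not condition~(2); and the boundary case $p_q = p_j$ is not condition~(2) either. The paper sidesteps the iteration entirely by proving the contrapositive: assuming neither~(1) nor~(2), every ray from $p_i$ through a visible $p_s \in \p(p_{i+1}, p_j)$ is forced to exit in $\p(p_i, p_{j+1})$, so either one such ray exits through $e_j$ (a witness) or none does, in which case $(p_i, p_j)$ can have no designated blocker and $p_j$ is a visible endpoint witness. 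A symmetric argument on $\p(p_{j+1}, p_{i-1})$ supplies the second witness.
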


\begin{proof}
We first show that if (1) or (2) holds, then $p_i$ does not see $e_j$.  Without loss of generality, assume (1) is true.  If $p_i$ were to see $e_j$, then it would need to have two witnesses.  This implies that there would need to be at least two pseudo-lines through $p_i$ that touch $e_j$ before exiting the polygon; however any such line will clearly intersect $L_{i,t}$ twice, a contradiction.  See Figure \ref{fig:lemma4}.  Therefore $p_i$ cannot see $e_j$.

\begin{figure}
\centering
\begin{tabular}{c@{\hspace{0.1\linewidth}}c}
\includegraphics[scale=0.4]{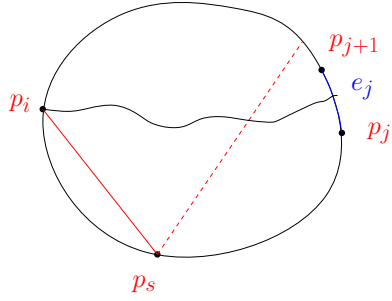} \\
\end{tabular}
\caption{An illustration of Lemma \ref{lem:edgeVis}}
\label{fig:lemma4}
\end{figure}

Now suppose that neither (1) nor (2) holds.  We will show that $p_i$ must see $e_j$.  Then the pseudo-line $L_{i,s}$ for any $p_s \in \p(p_{i+1},p_j)$ must first exit $P$ in $\p(p_i,p_{j+1})$.  If any such pseudo-line first exits $P$ through the interior of $e_j$ (i.e. not through $p_j$), then this $p_s$ will be a witness for $p_i$ and $e_j$.  So suppose that no such pseudo-line first exits $P$ through $e_j$.  This implies that there is no designated blocker for $p_i$ and $p_j$.  It follows from Lemma \ref{lem:blockers} that $p_i$ and $p_j$ must be a visible pair, and therefore $p_j$ is a witness for $p_i$ and $e_j$.  A symmetric argument gives that there is a second witness $p_t \in  \p(p_{j+1},p_{i-1})$ for $p_i$ and $e_j$.  Therefore $p_i$ sees $e_j$.
\end{proof}

Lemma \ref{lem:edgeVis} implies that given any visibility graph $G$ with an assignment of designated blockers to its invisible pairs, there is a unique associated vertex-edge visibility graph.  Let us denote this graph $G_{VE}$.  We will show that if the assignment of designated blockers to the invisible pairs satisfies NCs 1-5, then $G_{VE}$ satisfies the following characterization given by O'Rourke and Streinu \cite{ORourkeS97}.  This implies that there is a pseudo-polygon $P$ such that $G_{VE}$ is the vertex-edge visibility graph of $P$ \textit{and} $G$ is the visibility graph of $P$.  Note $p_j$ is an \textit{articulation point} of the subgraph of $G_{VE}$ induced by $\p(p_{i+1},p_k)$ if and only if $p_j$ is a candidate blocker for the invisible pair $(p_s,p_{t})$ for some $p_s \in \p(p_{i+1},p_{j-1})$ and some $p_t \in \p(p_{j+1},p_k)$. 

\begin{theorem}
\cite{ORourkeS97}
A graph is the vertex-edge visibility graph of a pseudo-polygon $P$ if and only if it satisfies the following.  If $p_k \in \p(p_{j+1},p_{i-1})$ sees two non-adjacent edges $e_i$ and $e_j$ and no edge in $\p(p_{i+1},p_{j-1})$ then it satisfies exactly one of the following two properties: (1) $p_{i+1}$ sees $e_j$ and $p_{i+1}$ is an articulation point of the subgraph induced by $\p(p_k,p_j)$, or (2) $p_j$ sees $e_i$ and $p_j$ is an articulation point of the subgraph induced by $\p(p_{i+1},p_k)$.
\label{thm:veChar}
\end{theorem}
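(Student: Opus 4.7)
The theorem has two directions, and the plan is to handle each separately, leveraging the structural lemmas already established in the excerpt.

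For the necessity direction, I would start with a pseudo-polygon $P$ whose vertex-edge visibility graph is $G_{VE}$, and assume $p_k \in \p(p_{j+1}, p_{i-1})$ sees two non-adjacent edges $e_i, e_j$ but no edge of $\p(p_{i+1}, p_{j-1})$. Applying Lemma \ref{lem:keyEdgeLemma} gives exactly one of Case A or Case B; without loss of generality take Case A, so $p_{i+1}$ is a witness for $(p_k, e_j)$, $p_{i+1}$ sees $e_j$, and $p_k$ does not see $p_j$. It remains to establish the articulation property for $p_{i+1}$ in the subgraph induced by $\p(p_k, p_j)$. Since the ray $r_{k,i+1}$ first exits $P$ through $e_j \in \p(p_j,p_k)$, $p_{i+1}$ is a designated blocker for the invisible pair $(p_k,p_j)$, and by Lemma \ref{lem:blockers} it is the unique such blocker. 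Separation then follows by a now-standard double-crossing argument: any visible pair $\{p_s,p_t\}$ with $p_s \in \p(p_k,p_i)$ and $p_t \in \p(p_{i+2},p_j)$ would force the pseudo-line $L_{s,t}$ to cross $L_{k,i+1}$ twice, contradicting the pcp property. Mutual exclusivity of Cases A and B is a corollary of the uniqueness portion of Lemma \ref{lem:blockers}.

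For the sufficiency direction, my plan is induction on $n$. Given $G_{VE}$ satisfying the characterization, I would peel off a suitable boundary vertex $p_m$ for which $\{p_{m-1},p_{m+1}\}$ is a visible pair and no other vertex-edge witness condition is violated when the edges $e_{m-1}$ and $e_m$ are merged into a single boundary edge; informally, an ``ear.'' The existence of such a vertex would be extracted from the articulation-point hypothesis, applied to a carefully chosen minimal configuration. One then deletes $p_m$ from the graph, verifies that the resulting graph still satisfies the O'Rourke--Streinu condition, invokes induction to obtain a pseudo-polygon $P'$, and finally splices $p_m$ back in by placing it on the merged boundary edge and drawing the pseudo-lines from $p_m$ to each $p_t$ it sees.

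The main obstacle will be the splicing step in the sufficiency direction: the new pseudo-lines emanating from $p_m$ must be routed so that each one crosses every pre-existing pseudo-line exactly once (preserving the pcp structure), first exits $P$ through the boundary edge prescribed by the vertex-edge data, and introduces no new non-witness incidences. The articulation-point hypothesis at each relevant witness is precisely what supplies a topological ordering among the lines through $p_m$ and among those terminating in $e_{m-1},e_m$, but translating this combinatorial ordering into a valid concrete arrangement, and verifying that no double crossing or spurious witness is introduced, will be the delicate combinatorial core of the argument and the place where the bulk of the technical work lies.
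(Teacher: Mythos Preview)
The paper does not prove this theorem at all: it is quoted verbatim from O'Rourke and Streinu \cite{ORourkeS97} and used as a black box in the reduction that establishes sufficiency of Necessary Conditions 1--5. There is therefore no ``paper's own proof'' to compare your proposal against; the authors simply import the result.

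That said, a brief assessment of your plan on its merits. Your necessity direction is sound and in fact mostly reconstructible from material already in this paper: Lemma~\ref{lem:keyEdgeLemma} is exactly the case split you invoke, and the articulation-point claim follows from the designated-blocker machinery (Lemma~\ref{lem:blockers}) via the double-crossing argument you sketch. Your sufficiency direction---induction by ear removal and reinsertion---is a natural strategy, but you correctly flag the splicing step as the crux, and your proposal does not yet contain the idea that would make it go through. In particular, you would need to (i) prove that an ear in your sense always exists under the hypothesis, (ii) show the O'Rourke--Streinu condition is hereditary under ear deletion, and (iii) give a concrete routing procedure for the new pseudo-lines through $p_m$ that is guaranteed to produce a valid pcp. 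None of these is obvious, and the original \cite{ORourkeS97} argument (which proceeds via a different combinatorial encoding of the local line orderings rather than ear induction) is where you should look if you want to complete this.
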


\noindent\textbf{Good Lines and Centers.}  If $L_{i,j}$ is such that $\{p_i,p_j\}$ is a visible pair, then we say $L_{i,j}$ is a \textit{good line}.  Recall $L_{i,j}$ can be decomposed into three portions: the \textit{segment} $p_ip_j$ and two infinite \textit{rays} $r_{i,j}$ and $r_{j,i}$. The ray $r_{i,j}$ starts at $p_j$ and does not include $p_i$, and $r_{j,i}$ is defined symmetrically.  We now define the \textit{center} of $L_{i,j}$ to be the connected subsegment of $L_{i,j}$ consisting of the following: the segment $p_ip_j$, the subsegment of $r_{i,j}$ obtained by starting at $p_j$ and walking along the ray until we first reach exit outside of $P$ (this may or may not be just $p_j$), and the symmetric subsegment of $r_{j,i}$.  Note that the center of $L_{i,j}$ is simply the intersection of $L_{i,j}$ and $P$  if the rays never re-enter $P$ after leaving.


Given the visibility graph $G$ and the assignment of candidate blockers to invisible pairs, we will now describe how to construct a \textit{witness} $P'$ that will be used to show that $G$ is the visibility graph of a pseudo-polygon $P$.  $P'$ has a vertex for each vertex of $G$, and for every visible pair $\{p_i,p_j\}$ in $G$, the center of $L_{i,j}$ will appear in $P'$.  The center will behave according to the assignment of candidate blockers to invisible pairs.  In other words, if $p_j$ is assigned to the invisible pair $(p_i,p_k)$, then the center will be defined so that it fits the definition of designated blocker for this invisible pair. 

For each vertex $p_i$ in $G$, we add a point $p_i$ to $P'$.  We place these points in $\mathbb{R}^2$ in convex position in ``counterclockwise order''.  That is, indices increase (modulo $n$) when walking around the convex hull in the counterclockwise direction.   Now suppose that $p_j$ is the candidate blocker assigned to an invisible pair $(p_i,p_s)$.  We define $r_{i,j}$ to be such that $p_j$ is a designated blocker for $(p_i,p_s)$.  First note that if $p_j$ is the candidate blocker assigned to $(p_i,p_s)$ and $(p_i,p_t)$, then it cannot be that one of $p_s$ and $p_t$ is in $\p(p_i,p_j)$ and the other is in $\p(p_j,p_i)$ by Necessary Condition \ref{nc:separable}, so without loss of generality assume that any such point is in $\p(p_j,p_i)$.  Let $p_s$ be such that $p_j$ is assigned to $(p_i,p_s)$ but it is not assigned to $(p_i, p_{s+1})$.  It follows from Necessary Condition \ref{nc:blockers} that there is exactly one such point $p_s$ that satisfies this condition.  We begin the definition of $r_{i,j}$ as a straight line from $p_j$ to the edge $e_s$.  There may be many rays from many different vertices which intersect the edge $e_s$.  If $r_{a,b}$ is another ray intersecting $e_s$, we ``preserve the order'' of the rays so that $r_{i,j}$ and $r_{a,b}$ do not intersect.  Note that because of property (2) of Necessary Condition \ref{nc:blockers}, these centers do not self-intersect.  

\begin{lemma}
If $G_{VE}$ does not satisfy the conditions of Theorem \ref{thm:veChar}, then there exists a pair of distinct good line centers that intersect twice in $P'$.
\label{lem:violateChar}
\end{lemma}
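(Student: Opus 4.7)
The plan is to argue by contrapositive, with a structural case analysis on how Theorem~\ref{thm:veChar} is violated. Suppose there is some $p_k \in \p(p_{j+1},p_{i-1})$ seeing non-adjacent edges $e_i$ and $e_j$ and no edge of $\p(p_{i+1},p_{j-1})$ in $G_{VE}$, while the conclusion of Theorem~\ref{thm:veChar} fails. The failure falls into one of three types: (a) neither of the properties (1),(2) holds, (b) both hold simultaneously, or (c) the visibility portion of (1) or (2) holds but the corresponding articulation-point condition does not. In each case the goal is to point to an explicit pair of good lines in $G$ whose centers, as drawn by the construction of $P'$, must intersect twice.

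The first step is to convert the vertex-edge hypothesis about $p_k$ into designated-blocker data via Lemma~\ref{lem:edgeVis}. Because $p_k$ sees $e_i$ and $e_j$, no blockers of the forms listed in Lemma~\ref{lem:edgeVis} obstruct the pairs $(p_k,e_i)$ or $(p_k,e_j)$; because $p_k$ sees no edge of $\p(p_{i+1},p_{j-1})$, Lemma~\ref{lem:edgeVis} supplies a designated blocker for each such intermediate edge, and Lemma~\ref{lem:edgeVisibility} pins down which intermediate vertices $p_k$ sees. Translating through the construction of $P'$, these facts dictate where the rays $r_{k,\cdot}$ and the rays from the forced blockers must exit $P'$.

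The second step is the case analysis. In case (a), both $p_{i+1}$ fails to see $e_j$ and $p_j$ fails to see $e_i$ in $G_{VE}$; applying Lemma~\ref{lem:edgeVis} to each failure produces two blocker rays on opposite arcs, and combined with NCs~\ref{nc:blockers}–\ref{nc:blockedProperty} one can identify good lines $L_{i+1,b}$ and $L_{j,b'}$ (using the produced blockers as endpoints) whose centers must re-enter each other's half-plane. In case (b), two simultaneous articulation-point structures on opposite sides of $p_k$ will force either a separable invisible pair blocked twice by the same vertex (contradicting NC~\ref{nc:separable}) or a $\{p_a,p_d\}$-pinched and $\{p_b,p_c\}$-pinched configuration (contradicting NC~\ref{nc:pinched}); since the NCs are assumed satisfied, the only way $P'$ can realize both articulation structures is by a double crossing of two specific good line centers. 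In case (c), when e.g.\ $p_{i+1}$ sees $e_j$ but is not an articulation point of $\p(p_k,p_j)$, one picks the two witnesses of $p_k$ for $e_j$; the lack of an articulation blocker lets a good line strung between these witnesses' defining vertices force $L_{i+1,w}$ (for the outer witness $w$) to cross the center coming from $p_k$ twice.

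The main obstacle will be the bookkeeping: in each subcase one must pin down which specific pair of good line centers is responsible for the forced double crossing, and verify that the ordering convention in the construction of $P'$ (ray extensions from distinct vertices hitting a common edge are kept in a consistent order) genuinely forces the crossing rather than merely permitting it. The role of the NCs here is structural — they encode precisely the local consistency that rules out spurious double crossings in well-behaved configurations, so the contrapositive argument amounts to showing that a violation of Theorem~\ref{thm:veChar} produces a global inconsistency that none of NCs~\ref{nc:blockers}–\ref{nc:pinched} can locally repair, leaving a double crossing of centers as the only possibility.
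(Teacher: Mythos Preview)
Your case decomposition misses the main structural observation that collapses the argument to a single case. The paper first shows, using Lemma~\ref{lem:edgeVis} and Lemma~\ref{lem:edgeVisibility}, that the designated blocker for every invisible pair $(p_k,p_y)$ with $e_y \in \p(p_{i+1},p_{j-1})$ must be either $p_{i+1}$ or $p_j$ (no vertex in $\p(p_k,p_i)$ or $\p(p_{j+1},p_k)$ can block without killing $p_k$'s visibility of $e_i$ or $e_j$, and no interior vertex of $\p(p_{i+2},p_{j-1})$ is visible from $p_k$). Since the intervals blocked by $p_{i+1}$ and by $p_j$ cannot overlap (NC~\ref{nc:blockers} forbids two designated blockers) and must cover all intermediate edges, exactly one of them blocks everything. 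That vertex is then automatically an articulation point of the required induced subgraph. So the only way Theorem~\ref{thm:veChar} can fail is that the \emph{visibility} half of the corresponding condition fails, say $p_{i+1}$ does not see $e_j$. This eliminates your cases (b) and (c) as separate concerns: case (c) is the only surviving failure mode, and you never need the symmetric failure from $p_j$ simultaneously.

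With that reduction in hand, the specific pair of good lines is immediate and is not the pair you guess. Since $p_{i+1}$ is the designated blocker for $(p_k,p_j)$ but not $(p_k,p_{j+1})$, the center of $L_{k,i+1}$ exits through $e_j$. Since $p_{i+1}$ does not see $e_j$, Lemma~\ref{lem:edgeVis} gives a designated blocker $p_s$ (on either side) for $(p_{i+1},\cdot)$ whose good-line center necessarily crosses the center of $L_{k,i+1}$ twice. The double-crossing pair is thus $L_{k,i+1}$ and $L_{i+1,s}$, both through $p_{i+1}$; your proposed pairs $L_{i+1,b}$ and $L_{j,b'}$, or lines through witnesses of $p_k$ for $e_j$, are not what the argument needs. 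Your handling of case (b) is also circular: you invoke NCs~\ref{nc:separable}--\ref{nc:pinched} to say a contradiction arises, then conclude a double crossing exists because the NCs hold, without ever naming the crossing lines.
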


\begin{proof}
Suppose that in the graph, we have a vertex $p_k\in \p(p_{j+1},p_{i-1})$ sees two non-adjacent edges $e_i$ and $e_j$ and no edge in $\p(p_{i+1},p_{j-1})$ but the graph does not satisfy (1) or (2).  We will first show that since $p_k$ sees two non-adjacent edges $e_i$ and $e_j$,  Lemma \ref{lem:edgeVis} implies that either $p_{i+1}$ is the designated blocker for $(p_k,p_j)$ but not for $(p_k,p_{j+1})$ or $p_j$ is the designated blocker for $(p_k,p_{i+1})$ but not $(p_k,p_i)$ (in either scenario, $\{p_k,p_{j+1}\}$ and/or $\{p_k,p_i\}$ may actually be a visible pair).  In other words, either $p_{i+1}$ or $p_j$ blocks $p_k$ from seeing all edges in $\p(p_{i+1},p_j)$.  Indeed if the designated blocker for any such edge was in $\p(p_k,p_i)$ then $p_k$ would not see $e_i$, and if it were in $\p(p_{j+1},p_k)$ then $p_k$ would not see $e_j$.  Additionally, no vertex in $\p(p_{i+2},p_{j-1})$ can block the edges as $p_k$ cannot see any such vertices by Lemma \ref{lem:edgeVisibility}.  It follows that $p_{i+1}$ and $p_j$ are the only points that can block $p_k$ from seeing these edges.  For any such edge $e_y$, it cannot be that $p_{i+1}$ and $p_j$ both block $p_k$ from $e_y$ as $(p_k,p_y)$ would have two designated blockers, contradicting Necessary Condition 1.  This implies that the ``edge intervals'' blocked by $p_{i+1}$ and $p_j$ cannot overlap, and therefore if they each block some of the edges, then $p_k$ would see an edge in $\p(p_{i+1},p_{j-1})$, a contradiction.  So we can suppose without loss of generality that $p_{i+1}$ blocks $p_k$ from seeing all edges in $\p(p_{i+1},p_{j-1})$ but not edge $e_j$.  It follows from Lemma \ref{lem:edgeVis} that $p_{i+1}$ is the designated blocker for $(p_k,p_j)$ but not $p_{j+1}$, and therefore the center of $L_{k,i+1}$ will intersect $e_j$.

Since $G_{VE}$ does not satisfy Theorem \ref{thm:veChar}, it must be that neither condition (1) nor (2) holds.  Since $p_{i+1}$ is the designated blocker for $(p_k,p_j)$, it follows that $p_{i+1}$ is an articulation point of the subgraph of $G_{VE}$ induced by $\p(p_k,p_j)$.  Then $p_{i+1}$ must not see $e_j$ or else (1) would hold.  If $p_{i+1}$ does not see $e_j$, Lemma \ref{lem:edgeVis} implies that there is a vertex $p_s \in \p(p_{i+2},p_{j})$ that is the designated blocker for $(p_{i+1},p_j)$ or there is a vertex $p_t \in \p(p_{j+1},p_i)$ that is the designated blocker for $(p_{i+1},p_j)$.  In either case, the center of the good line that blocks $p_{i+1}$ from $e_j$ would necessarily cross the center of $L_{k,i+1}$ twice.  
 \end{proof}

Combining Lemma \ref{lem:violateChar} with the following lemma, we get that $G_{VE}$ satisfies Theorem \ref{thm:veChar} and therefore is the vertex-edge visibility graph for a pseudo-polygon.

\begin{lemma}
The centers of any pair of good lines intersects at most once, and if they intersect they cross.
\label{lem:goodCross}
\end{lemma}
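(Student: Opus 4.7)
The plan is to prove both parts of the claim by a case analysis on the shapes of the two centers. Each center of a good line $L_{i,j}$ is a polyline consisting of the straight chord $p_ip_j$ in $P'$ together with up to two straight rays drawn from $p_i$ and $p_j$ toward blocking boundary edges. Since each of these pieces is a straight segment or ray in the Euclidean plane, any two individual pieces (one from each center) meet in at most one point and, when they do, cross transversally. The task therefore reduces to showing that, summed over the at most nine piece-pair interactions between two centers, only one crossing can occur.

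First I would dispose of two easy cases. Because the vertices of $P'$ are placed in convex position counterclockwise around the Hamiltonian cycle, the chords $p_ip_j$ and $p_sp_t$ meet if and only if the index pairs $\{i,j\}$ and $\{s,t\}$ interleave, and whenever they meet they cross in exactly one point. Next, by the order-preservation rule built into the construction, two rays terminating on the same boundary edge $e_x$ cannot intersect. These observations reduce the problem to controlling the remaining chord-ray and ray-ray piece pairs in configurations where the rays land on distinct edges.

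For those remaining cases I would argue by contradiction: assume two piece-pair intersections exist between the centers of $L_{i,j}$ and $L_{s,t}$ and derive a violation of one of Necessary Conditions \ref{nc:blockers}--\ref{nc:pinched}. A second piece-pair intersection forces one of the following configurations on the designated-blocker assignment: a pseudo-line that would be self-intersecting, contradicting Necessary Condition \ref{nc:blockers}; a blocker $p_k$ that would have to serve nested invisible pairs in a way contradicting Necessary Condition \ref{nc:alreadyBlocked} or \ref{nc:blockedProperty}; a shared blocker for a separable pair, contradicting Necessary Condition \ref{nc:separable}; or, most often, a doubly pinched quadruple, contradicting Necessary Condition \ref{nc:pinched}. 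In every sub-case a routine inspection of the cyclic order of the relevant indices along $C$ pinpoints the offending condition.

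The hard part will be the ray-ray case in which the two rays originate from different vertices and terminate on different blocking edges, since this admits the greatest variety of cyclic arrangements and therefore the most sub-configurations to verify. Here I expect Necessary Condition \ref{nc:pinched} to do the heavy lifting: if two such centers cross twice, the quadruple formed by their endpoints and blocked vertices becomes simultaneously $\{p_i,p_t\}$-pinched and $\{p_j,p_s\}$-pinched, which is exactly the forbidden configuration. Finally, because any individual piece-pair intersection is transversal by straightness, the clause ``if they intersect they cross'' follows from the same case analysis without additional argument.
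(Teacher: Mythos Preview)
Your overall architecture---decompose each center into chord plus rays, then rule out extra piece-pair intersections by contradicting one of the five necessary conditions---is exactly the paper's approach. The paper organizes it as three top-level cases (segments cross; segments disjoint with four distinct endpoints; shared endpoint) rather than your nine piece-pairs, but the underlying work is the same exhaustive check.

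Where your plan goes wrong is in the load-bearing assignment. You write that Necessary Condition~\ref{nc:pinched} will ``do the heavy lifting'' for the ray--ray configurations and that a second crossing ``most often'' produces a doubly-pinched quadruple. In the actual case analysis this is not what happens: the pinching condition disposes of exactly one sub-case (both centers $\{p_i,p_s\}$-pinched and $\{p_k,p_t\}$-pinched simultaneously). Almost every other ray--segment and ray--ray sub-case---a ray of $L_{s,t}$ hitting the chord $p_ip_k$, one ray hitting both rays of the other center, the configurations where $p_t$ blocks $(p_s,p_k)$ but not $(p_s,p_i)$, and so on---is resolved by Necessary Conditions~\ref{nc:alreadyBlocked} and~\ref{nc:blockedProperty}, which propagate a single blocker assignment to force a second assignment that contradicts the assumed picture. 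If you execute your plan expecting pinching to cover these, you will find that the four indices simply do not form a pinched quadruple in most sub-cases, and you will be stuck without the transitivity-style arguments those two conditions provide.

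A second, smaller gap: your blanket claim that piece-pair intersections are transversal ``by straightness'' does not handle the shared-vertex case $p_k=p_t$, where two chords meet at an endpoint rather than crossing in their interiors. The paper treats this as its own case and must separately verify (again using Necessary Conditions~\ref{nc:blockers}, \ref{nc:alreadyBlocked}, \ref{nc:blockedProperty}) that no ray of one center meets any piece of the other away from the common vertex; the crossing property at $p_k$ then comes from the rays continuing through on both sides, not from transversality of straight segments.
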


\begin{proof}

Let $\{p_i,p_k\}$ and $\{p_s,p_t\}$ denote any two visible pairs.  We will show that the centers of the segments intersect at most once, and if they do then they cross.  We consider three main subcases: (1) the four points are distinct and the segment $p_ip_k$ intersects the segment $p_sp_t$, (2) the four points are distinct and $p_ip_k$ does not intersect $p_sp_t$, and (3) the visible pairs share a point (i.e. $p_k = p_t$).  Throughout the proof, we use the fact that neither center self-intersects. 

\textbf{Case 1:}  First, suppose the segments $p_ip_k$ and $p_sp_t$ intersect each other.  We will prove that the centers do not intersect again. First note that $p_t$ cannot block $p_s$ from seeing $p_i$ or $p_k$, because if $p_t$ is not a candidate blocker for $(p_s,p_i)$ and $(p_s,p_k)$ because \{$p_i$, $p_k$\} is a visible pair.  Therefore $r_{s,t}$ cannot intersect $p_ip_k$.  See Figure \ref{fig:GroupOne} (a). Symmetric arguments show that $r_{t,s}$ cannot intersect $p_ip_k$ and that $r_{i,j}$ and $r_{j,i}$ cannot intersect $p_sp_t$.  It follows that if there is a second intersection, it must be a ray of $L_{i,k}$ intersecting with a ray of $L_{s,t}$.  Then without loss of generality, suppose $p_k$ blocks $p_i$ from seeing a point $p_j$. Since $r_{i,k}$ cannot intersect segment $p_sp_t$, without loss of generality, assume $p_j \in \p(p_{k+1}, p_{t-1})$. Clearly $r_{t,s}$ cannot intersect $r_{i,k}$ without intersecting $p_ip_k$, and suppose that $p_t$ is the designated blocker for $(p_s,p_j)$. By Necessary Condition \ref{nc:alreadyBlocked}, we have that the candidate blocker for ($p_s$, $p_j$) is either $p_k$ or the candidate blocker assigned to ($p_s$, $p_k$), but in this case $p_t$ is not the candidate blocker assigned to ($p_s$, $p_k$), a contradiction. See Figure \ref{fig:GroupOne} (b).

\begin{figure}
\centering
\begin{tabular}{c@{\hspace{0.1\linewidth}}c}
\includegraphics[scale=0.4]{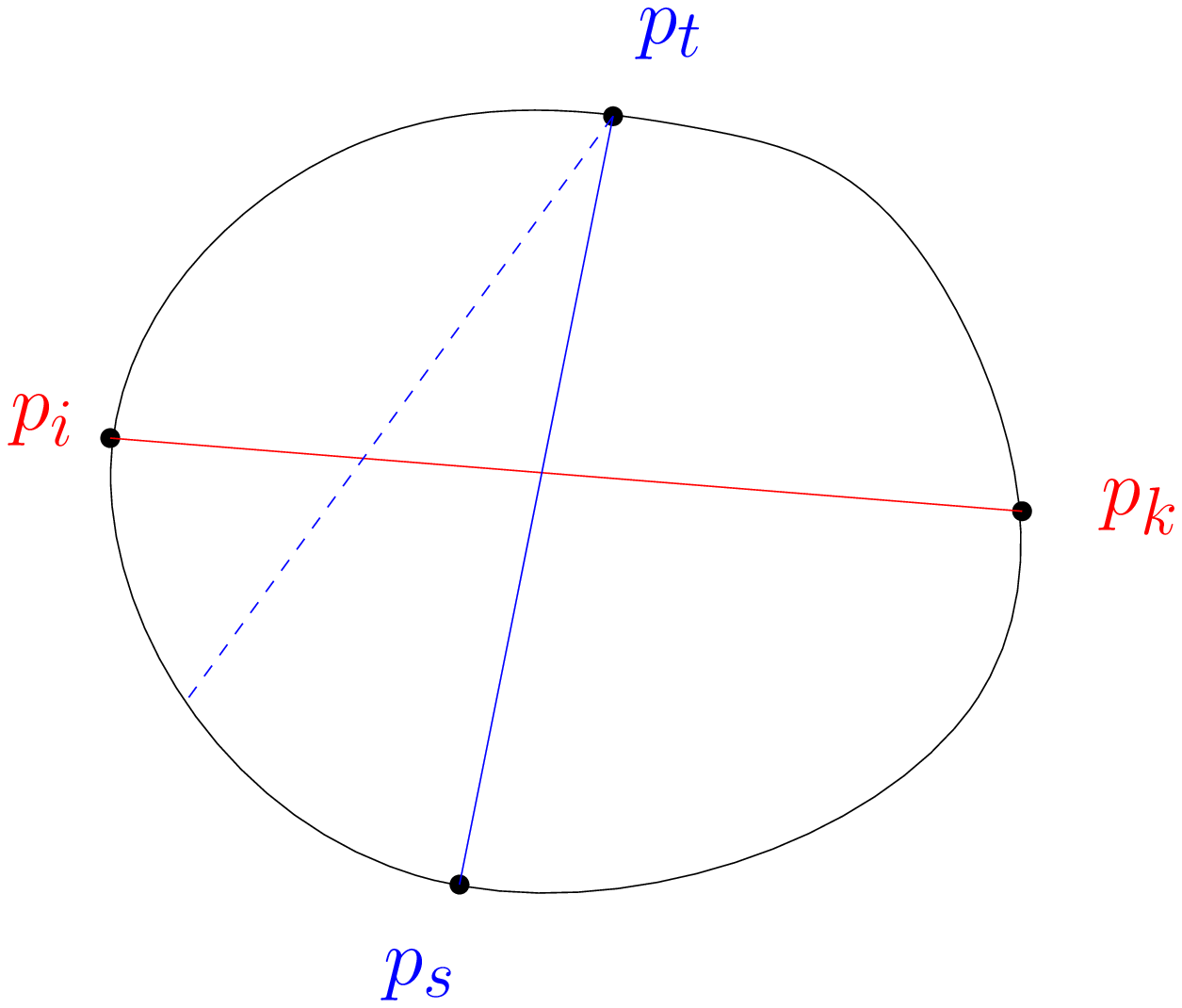} &
\includegraphics[scale=0.4]{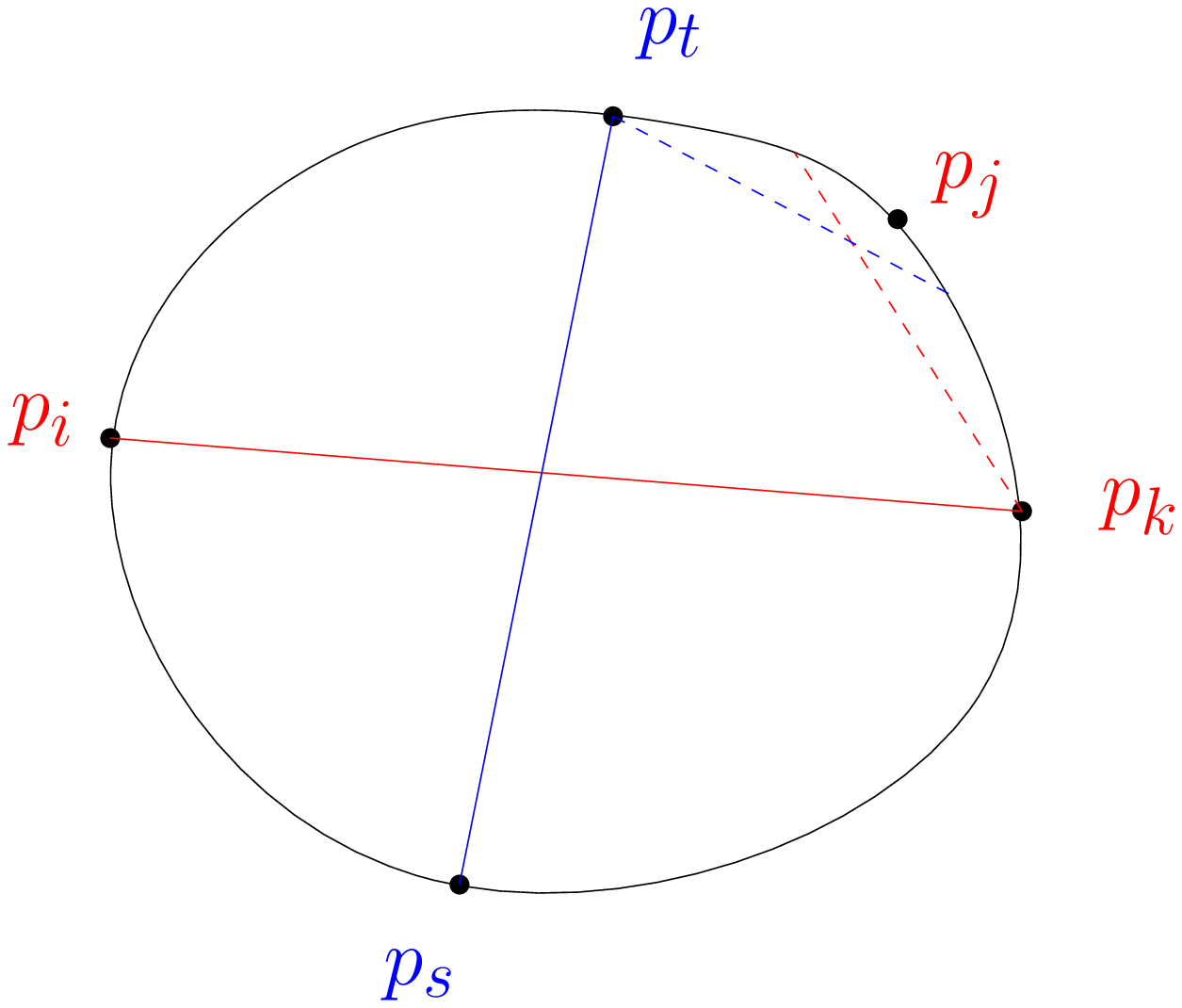} \\
(a)  & (b)   \vspace{1cm} \\ 
\end{tabular}
\caption{An illustration of segments intersect each other.}
\label{fig:GroupOne}
\end{figure}

\textbf{Case 2.}  Suppose the four points are distinct and satisfy $p_s \in \p(p_i,p_t)$ and $p_k \in \p(p_t,p_i)$.  It follows that $p_ip_k$ does not intersect $p_sp_t$, and we will prove that the corresponding centers do not intersect more than once. Suppose that $p_t$ blocks $(p_s,p_k)$, but does not block $(p_s,p_i)$. Note that this implies that $r_{s,t}$ intersects $p_ip_k$, and we will show that they do not intersect again.  We first will show that $p_s$ cannot block $(p_t,p_i)$.  See Figure \ref{fig:GroupTwo} (a).  If it does, then Necessary Condition \ref{nc:blockedProperty} implies that ($p_i$, $p_k$) is an invisible pair, a contradiction.  

We now show that if $p_t$ blocks $(p_s,p_k)$ but not $(p_s,p_i)$, then $p_i$ cannot also block $(p_k,p_s)$. See Figure \ref{fig:GroupTwo} (b).  Suppose for the sake of contradiction that $p_i$ does block $(p_k,p_s)$.  It follows by Necessary Condition \ref{nc:blockedProperty} that $(p_s,p_k)$ must be blocked either by $p_i$ or the point that blocks $(p_s,p_i)$.  But since $p_t$ blocks $(p_s,p_k)$, that implies that $p_t$ must also block $(p_s,p_j)$, a contradiction.  See Figure \ref{fig:GroupTwo} (c). 

 Next we show that if $p_t$ blocks $(p_s,p_k)$  but not $(p_s,p_i)$, then $p_k$ cannot block $(p_i,p_t)$ but not $(p_i,p_s)$. See Figure \ref{fig:GroupTwo} (d).  If so, then ($p_t$, $p_i$) is an invisible pair.  We will show that $p_k$ cannot be the designated blocker for $(p_t,p_i)$.  If it is, then ($p_s$, $p_i$) is an invisible pair and is assigned the candidate blocker $p_t$ by Necessary Condition \ref{nc:blockedProperty} case 1, a contradiction. So $p_k$ is not the candidate blocker assigned to the invisible pair ($p_i$, $p_t$). By Necessary Condition \ref{nc:blockedProperty}, we have that if the candidate blocker assigned to the invisible pair ($p_i$, $p_t$) is not $p_k$, then ($p_t,p_k$) is an invisible pair. Let $p_q$ denote the point that blocks ($p_t$, $p_k$). By Necessary Condition \ref{nc:blockedProperty} case 2 part (b), we have that ($p_i$, $p_q$) is an invisible pair, and $p_k$ is the candidate blocker assigned to it; therefore $p_q \in \p (p_{s+1}, p_{k-1})$. Now we will show that $p_q$ actually cannot be in $\p (p_{s+1}, p_{t-1})$. If $p_q \in \p (p_{s+1}, p_{t-1})$, then $p_q$ is not a candidate blocker for invisible pair ($p_t$, $p_k$), because $\{p_s,p_t\}$ is a visible pair. So now we have that $p_q \in \p (p_{t+1}, p_{k-1})$ and $p_q$ blocks $p_t$ from seeing $p_i$. By Necessary Condition \ref{nc:blockedProperty}, we have that ($p_s$, $p_q$) is an invisible pair, and $p_t$ is the candidate blocker assigned to it. Then we have that $p_t$ blocks $p_s$ from seeing $p_i$ by Necessary Condition \ref{nc:blockedProperty}, a contradiction. 

Now suppose that $p_t$ blocks $(p_s,p_k)$ but not $(p_s,p_i)$, and $p_k$ blocks $(p_i,p_j)$ for some $p_j \in \p(p_{k+1}, p_{i-1})$.  We will show that $p_t$ is the candidate blocker assigned to the invisible pair ($p_s$, $p_j$) and therefore $r_{s,t}$ does not intersect $r_{i,k}$. By Necessary Condition \ref{nc:alreadyBlocked} case 2, ($p_s$, $p_k$) is an invisible pair. Since the candidate blocker assigned to ($p_s$, $p_k$) is $p_t$, then ($p_s$, $p_j$) is assigned the candidate blocker $p_t$. See Figure \ref{fig:GroupTwo} (e). 

Suppose $p_t$ blocks $(p_s,p_k)$ but not $(p_s,p_i)$, and $p_i$ blocks $(p_k,p_j)$ for some point $p_j \in \p(p_{k+1}, p_{i-1})$.  We will show that $p_t$ cannot block $p_s$ from seeing $p_j$. See Figure \ref{fig:GroupTwo} (f). By Necessary Condition \ref{nc:alreadyBlocked}, we have that the designated blocker for $(p_s,p_j)$ is $p_i$ if $\{p_s,p_i\}$ is a visible pair and otherwise is the designated blocker for $(p_s,p_i)$.  But we assumed $p_t$ is not the designated blocker for $(p_s,p_i)$ and therefore cannot block $(p_s,p_j)$.

Now suppose $p_t$ blocks $(p_s,p_k)$ but not $(p_s,p_i)$, and $p_s$ blocks $p_t$ from seeing a point $p_j \in \p(p_{i+1}, p_{s-1})$.  We will show that $p_i$ cannot block $(p_k,p_j)$. See Figure \ref{fig:GroupTwo} (g).  By Necessary Condition \ref{nc:blockedProperty}, ($p_k$, $p_j$) is an invisible pair, the candidate blocker assigned to ($p_k$, $p_j$) is $p_t$ or the candidate blocker assigned to ($p_k$, $p_t$). But we have already shown that $p_i$ cannot block ($p_k$, $p_t$), so $p_i$ cannot block $(p_k,p_j)$. 

We will show that if $p_t$ blocks $(p_s,p_k)$ but not $(p_s,p_i)$, then $p_k$ cannot block $(p_i,p_s)$. See Figure \ref{fig:GroupTwo} (h). By Necessary Condition \ref{nc:blockedProperty} case 2, we have ($p_s$, $p_i$) is an invisible pair and $p_t$ is the candidate blocker assigned to it, a contradiction. 

This completes the cases when a ray intersects a segment, and we will now consider cases when no ray intersects a segment.  First suppose that there is a $p_v \in \p(p_{i+1},p_{s-1})$ so that $p_s$ blocks $(p_t,p_v)$ and $p_i$ blocks $(p_k,p_v)$.  It follows that the centers are $\{p_i,p_s\}$-pinched, and therefore they cannot be $\{p_k,p_t\}$-pinched by Necessary Condition \ref{nc:pinched}.  Therefore there cannot be a $p_u \in \p(p_{t+1},p_{k-1})$ such that $p_t$ blocks $(p_s,p_u)$ and $p_k$ blocks $(p_i,p_u)$.   See Figure \ref{fig:GroupTwo}.  Next we will show that a ray from one center cannot intersect both rays from the other center.  If it does, then we can assume without loss of generality that $p_s$ blocks $(p_t,p_i)$ and $(p_t,p_k)$, and that there is a point $p_j \in \p(p_{t+1},p_{k-1})$ such that $p_t$ is the designated blocker for invisible pair ($p_s$, $p_j$).  If $r_{k,i}$ intersects both rays of $L_{s,t}$, then $p_i$ should block $(p_k,p_t)$ but not ($p_k$, $p_j$).   By Necessary Condition \ref{nc:blockedProperty} case 2, we have that ($p_k$, $p_s$) is an invisible pair and $p_i$ is the candidate blocker assigned to ($p_k$, $p_s$). If $p_t$ is the candidate blocker assigned to the invisible pair ($p_s$, $p_j$), then ($p_k$, $p_j$) is an invisible pair and is assigned the candidate blocker $p_i$, a contradiction. See Figure \ref{fig:GroupTwo} (j).

\begin{figure}
\centering
\begin{tabular}{c@{\hspace{0.1\linewidth}}c}
\includegraphics[scale=0.4]{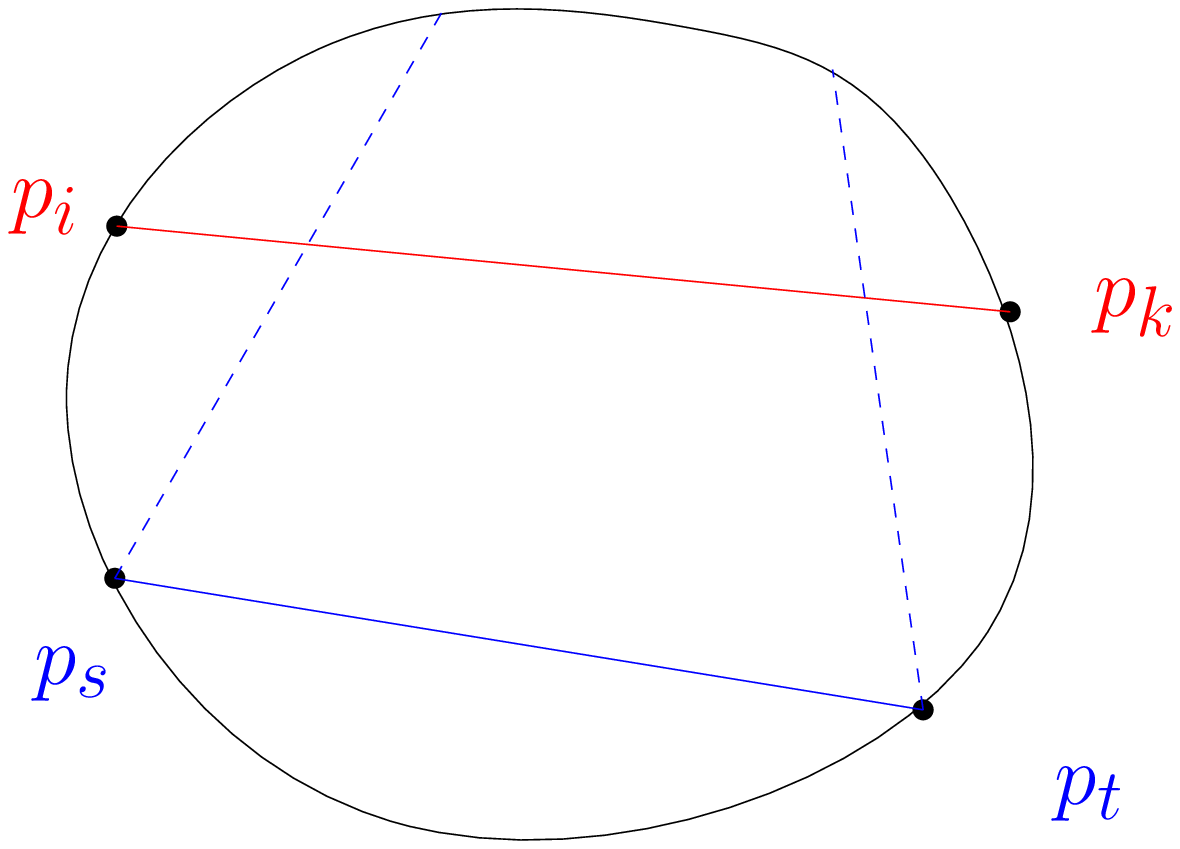} &
\includegraphics[scale=0.4]{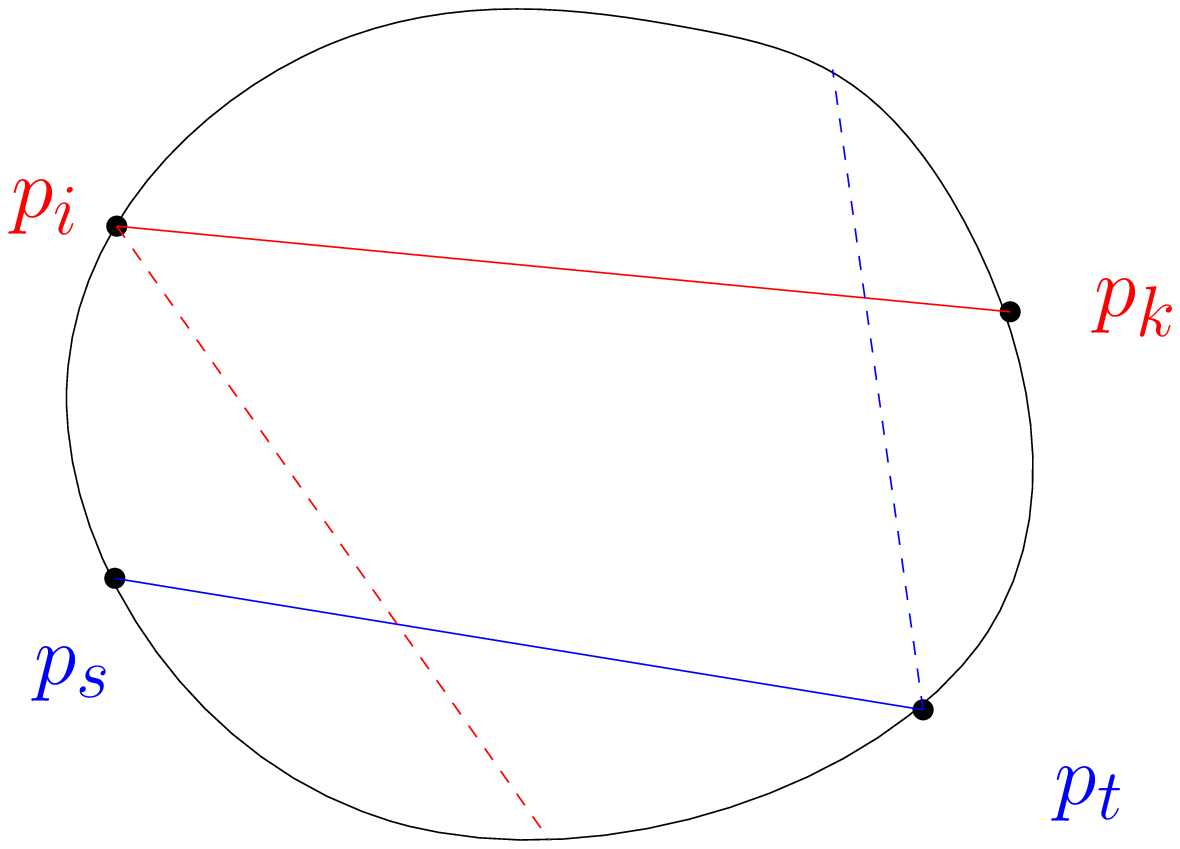} \\
(a)  & (b)   \\ 
\includegraphics[scale=0.4]{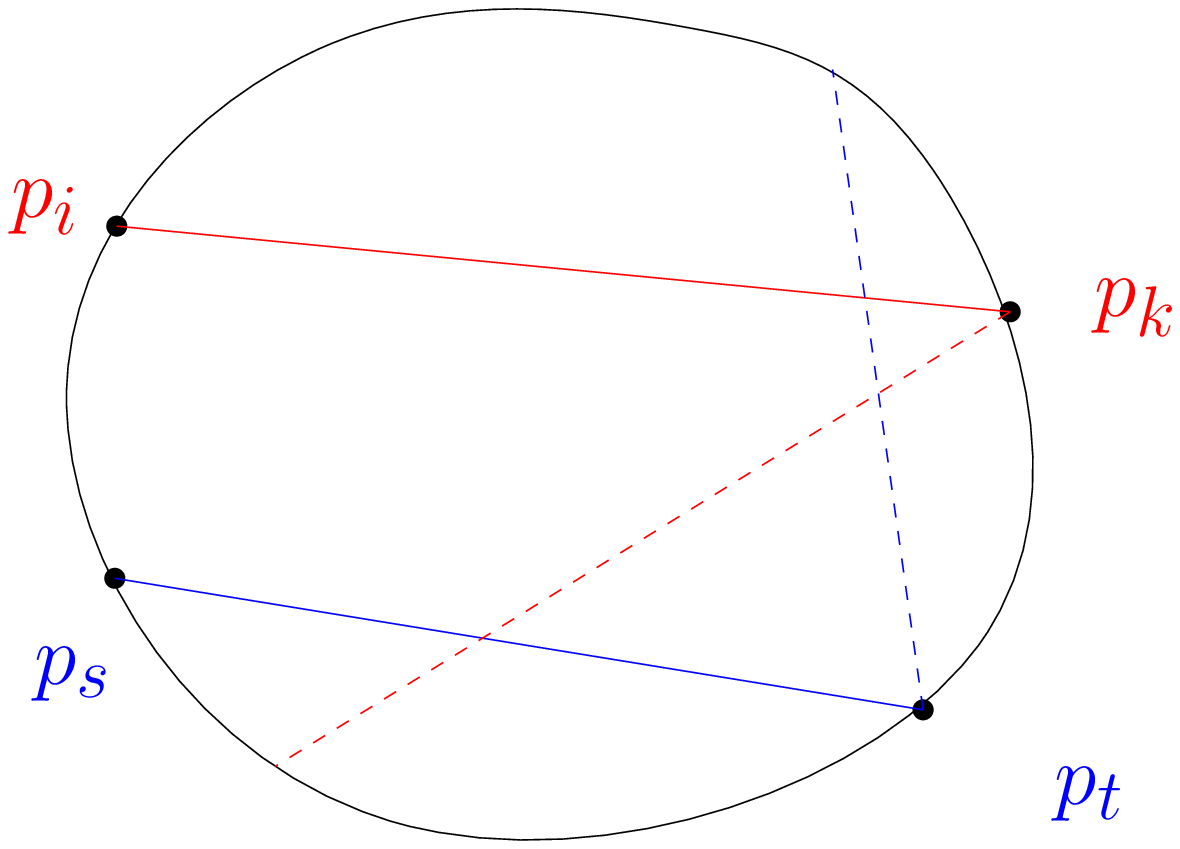} &
\includegraphics[scale=0.4]{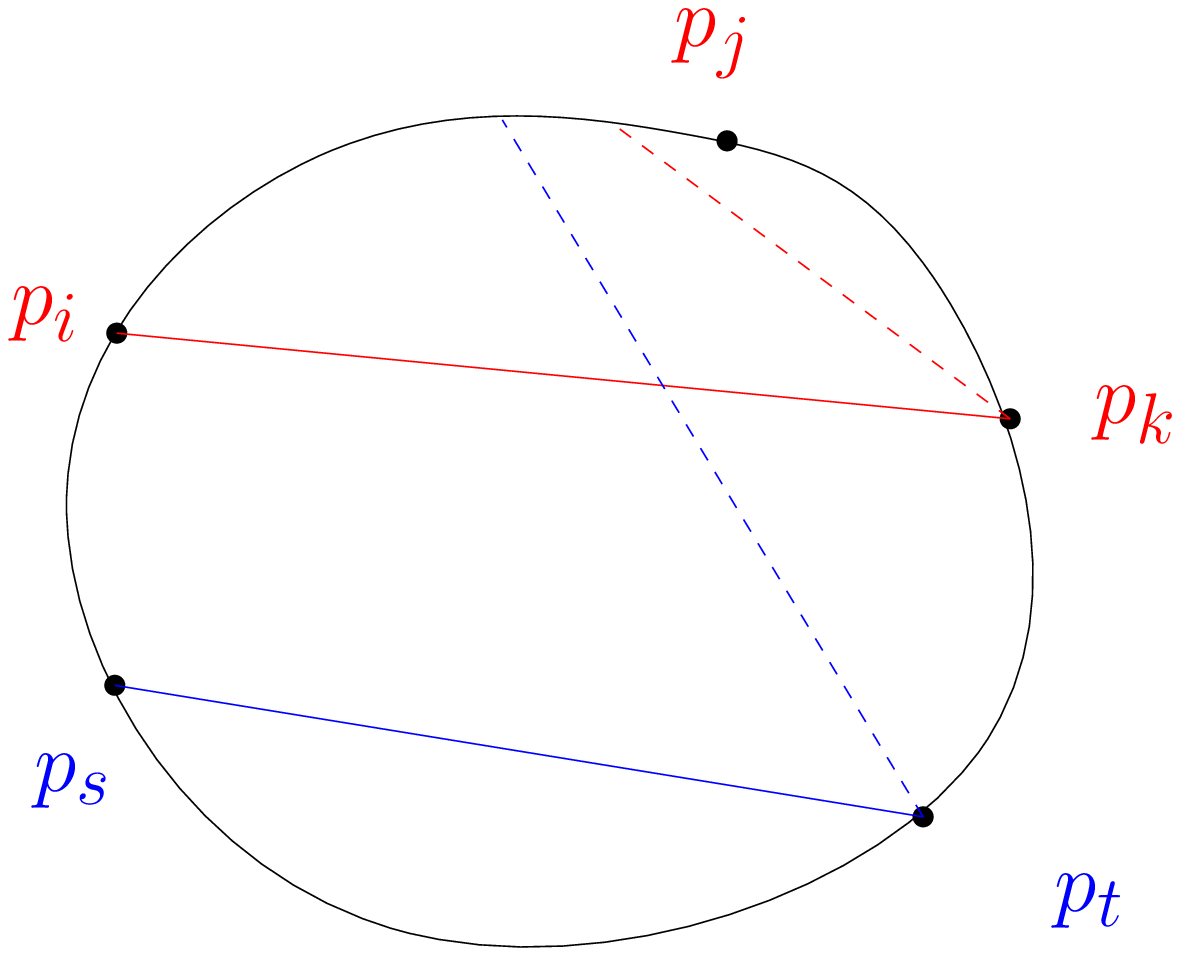} \\
(c)  & (d)    \\ 
\includegraphics[scale=0.4]{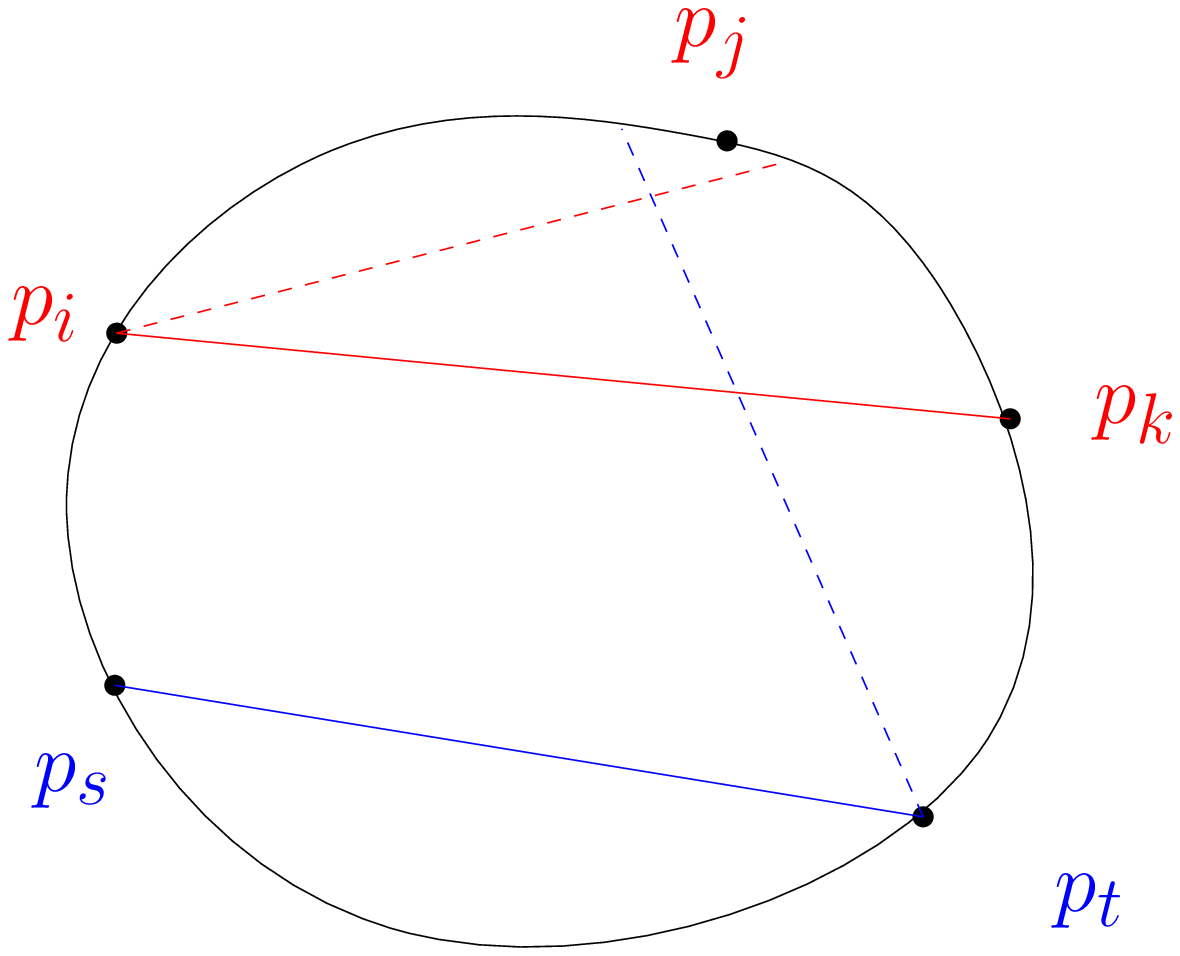} &
\includegraphics[scale=0.4]{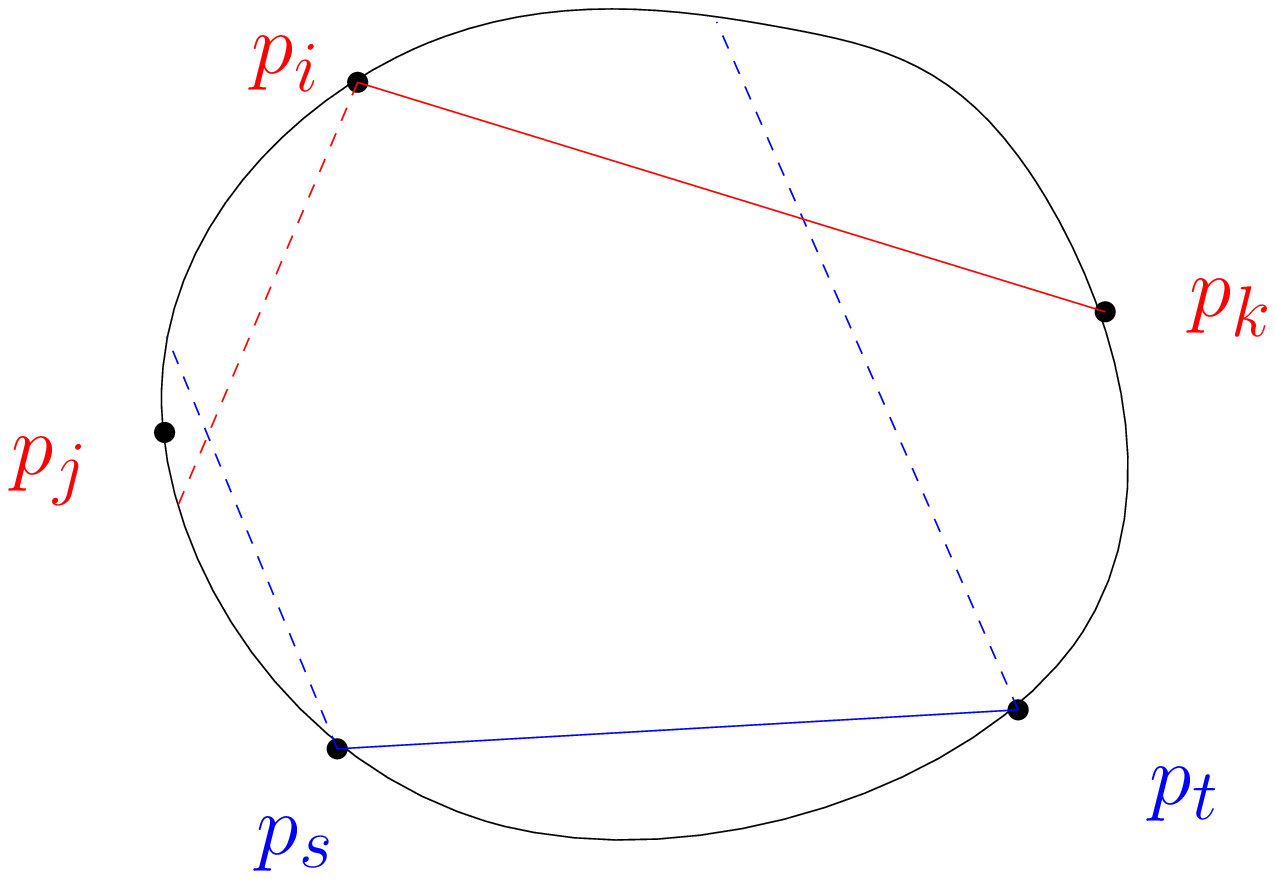} \\
(e)  & (f)    \\ 
\includegraphics[scale=0.4]{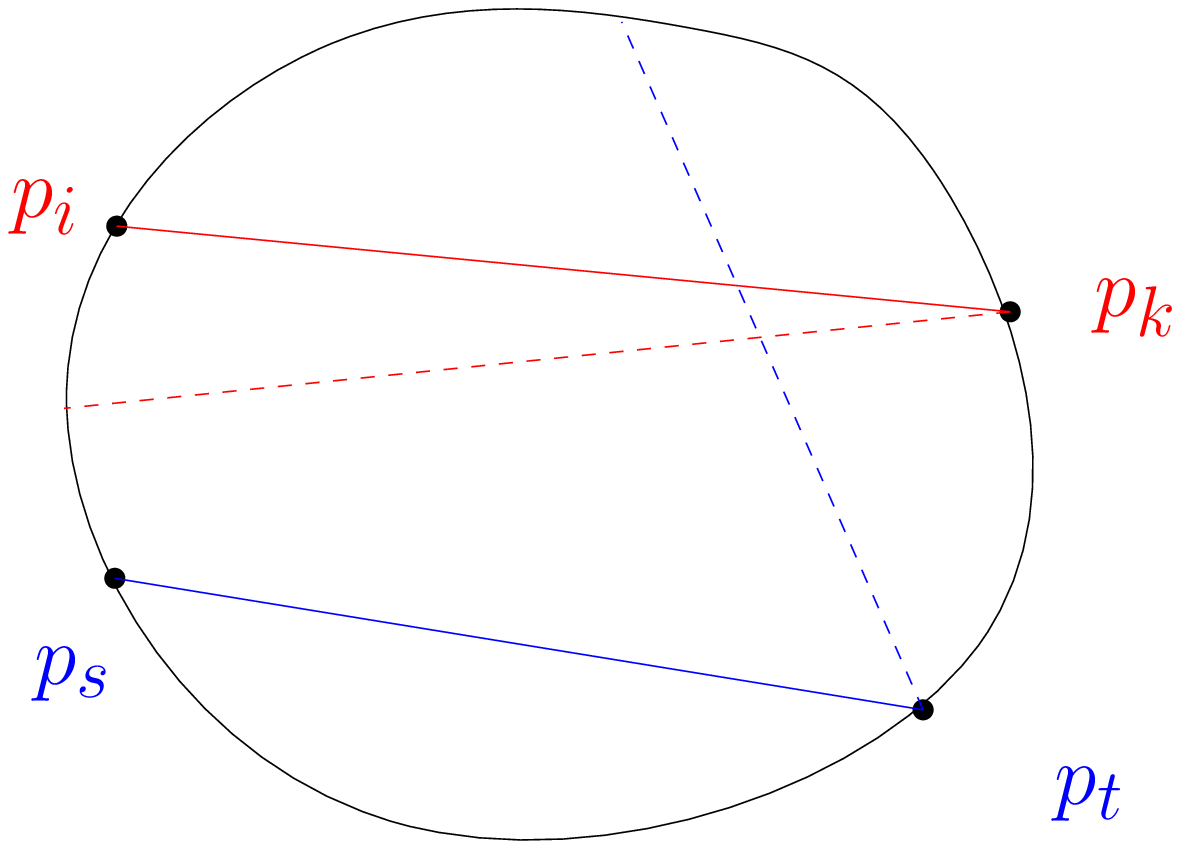} &
\includegraphics[scale=0.4]{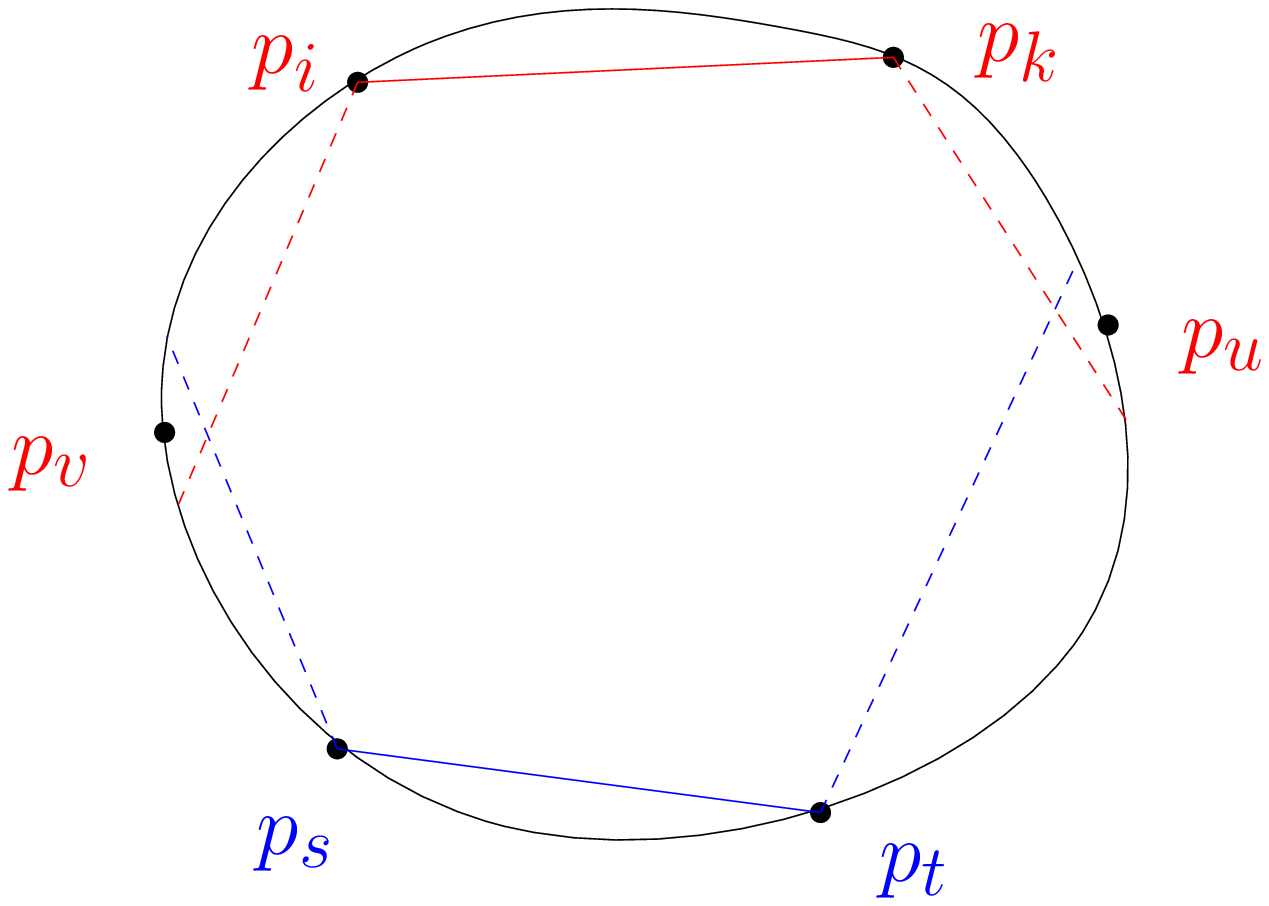} \\
(g)  & (h)    \\ 
\includegraphics[scale=0.4]{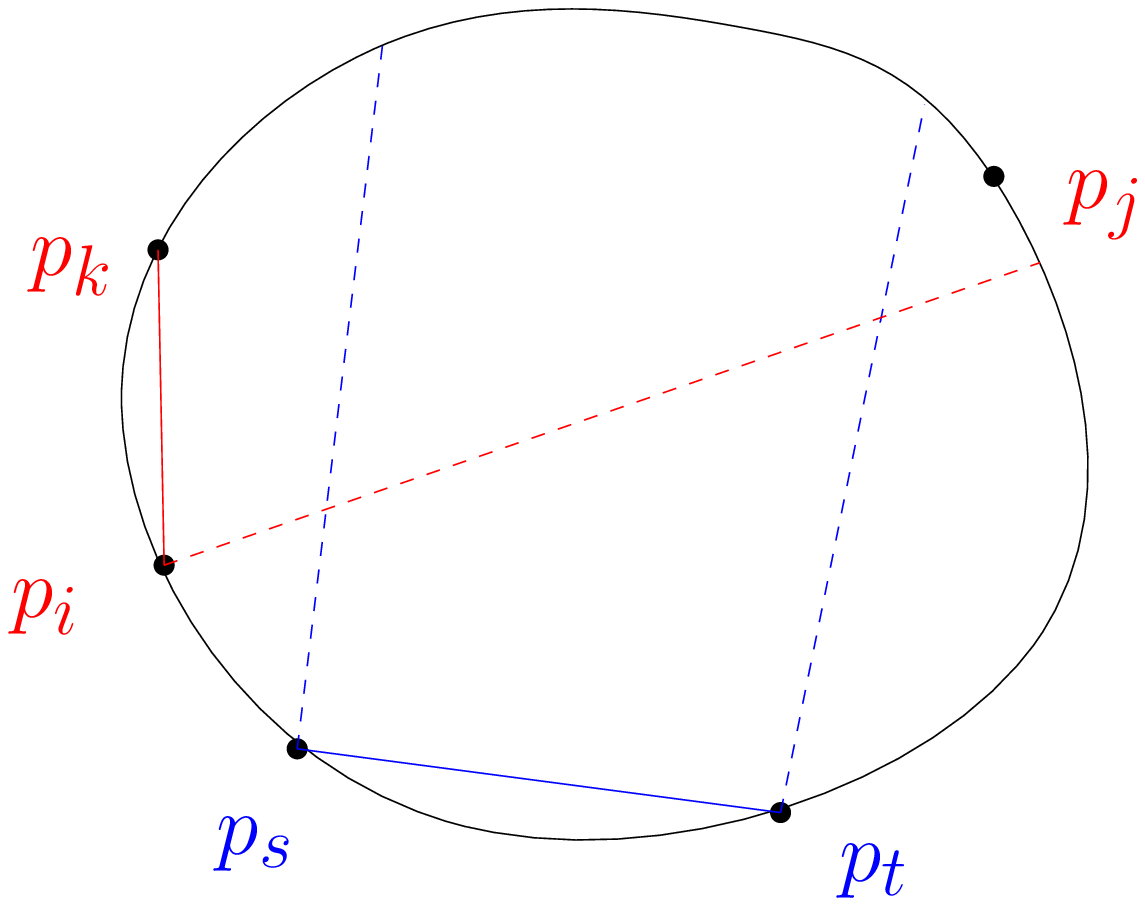}\\
(i)     \\ 
\end{tabular}
\caption{An illustration of segments do not intersect each other.}
\label{fig:GroupTwo}
\end{figure}

\textbf{Case 3:} Now suppose $p_k = p_t$ and therefore the centers share a vertex. Without loss of generality, assume $p_k \in \p(p_i,p_s)$. We will prove that the centers will not intersect anywhere else.  First note that a ray cannot intersect a segment in this situation, as we would contradict the definition of a candidate blocker. See Figure \ref{fig:GroupThree} (a). Now suppose $p_s$ blocks $(p_k,p_j)$ for some $p_j \in \p(p_{s+1}, p_{i-1})$.  First $p_i$ cannot block $p_k$ from seeing $p_j$, by Necessary Condition \ref{nc:blockers}, as otherwise $(p_k,p_j)$ would have two designated blockers. See Figure \ref{fig:GroupThree} (b).  Second, suppose $p_k$ blocks $(p_i,p_s)$, but does not block $(p_i,p_j)$.  By Necessary Condition \ref{nc:blockedProperty} case 1, we have that the candidate blocker assigned to the invisible pair ($p_s$, $p_i$) is $p_k$. But if $p_k$ is the candidate blocker assigned to the invisible pair ($p_s$, $p_i$), then ($p_i$, $p_j$) is an invisible pair and is assigned the candidate blocker $p_k$ by Necessary Condition \ref{nc:blockedProperty} case 1, a contradiction. See Figure \ref{fig:GroupThree} (c). Now consider $p_j \in \p(p_{i+1}, p_{k-1})$. First we will show that if $p_k$ blocks $(p_i,p_j)$, then $p_k$ also blocks $p_s$ from seeing $p_j$, by Necessary Condition \ref{nc:alreadyBlocked} case 1 we have that {$p_s$, $p_k$} is a visible pair, and the candidate blocker assigned to the invisible pair ($p_s$, $p_j$) is $p_k$. See Figure \ref{fig:GroupThree} (d). Second we will show that if $p_i$ blocks $(p_k,p_j)$, then $p_k$ also blocks $p_s$ from seeing $p_i$.  By Necessary Condition \ref{nc:alreadyBlocked}, we have that the candidate blocker for ($p_s$, $p_j$) is either $p_i$ or the candidate blocker assigned to ($p_s$, $p_i$), but in this case $p_k$ is not the candidate blocker assigned to ($p_s$, $p_i$), a contradiction. See Figure \ref{fig:GroupThree} (e).

\begin{figure}
\centering
\begin{tabular}{c@{\hspace{0.1\linewidth}}c}
\includegraphics[scale=0.4]{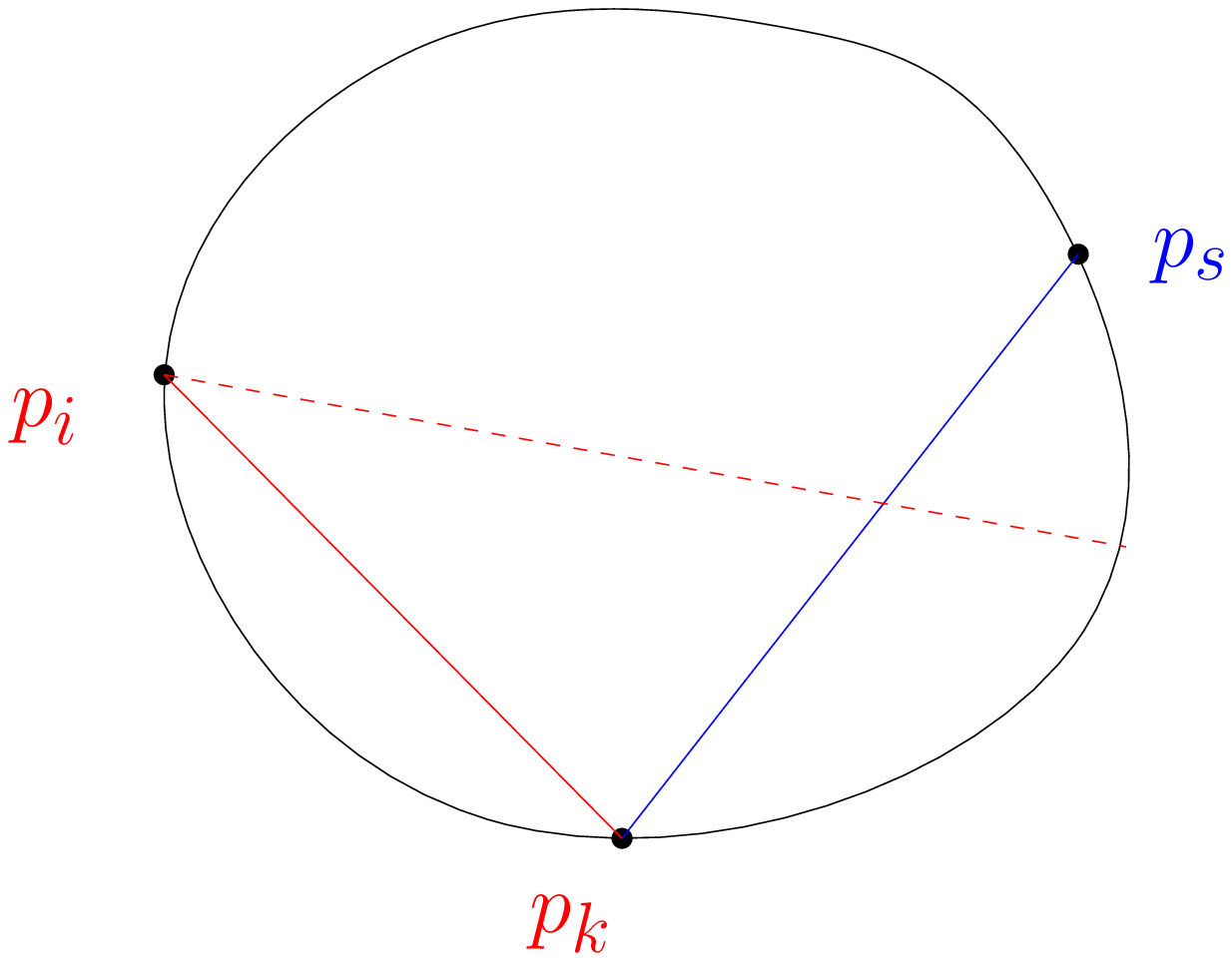} &
\includegraphics[scale=0.4]{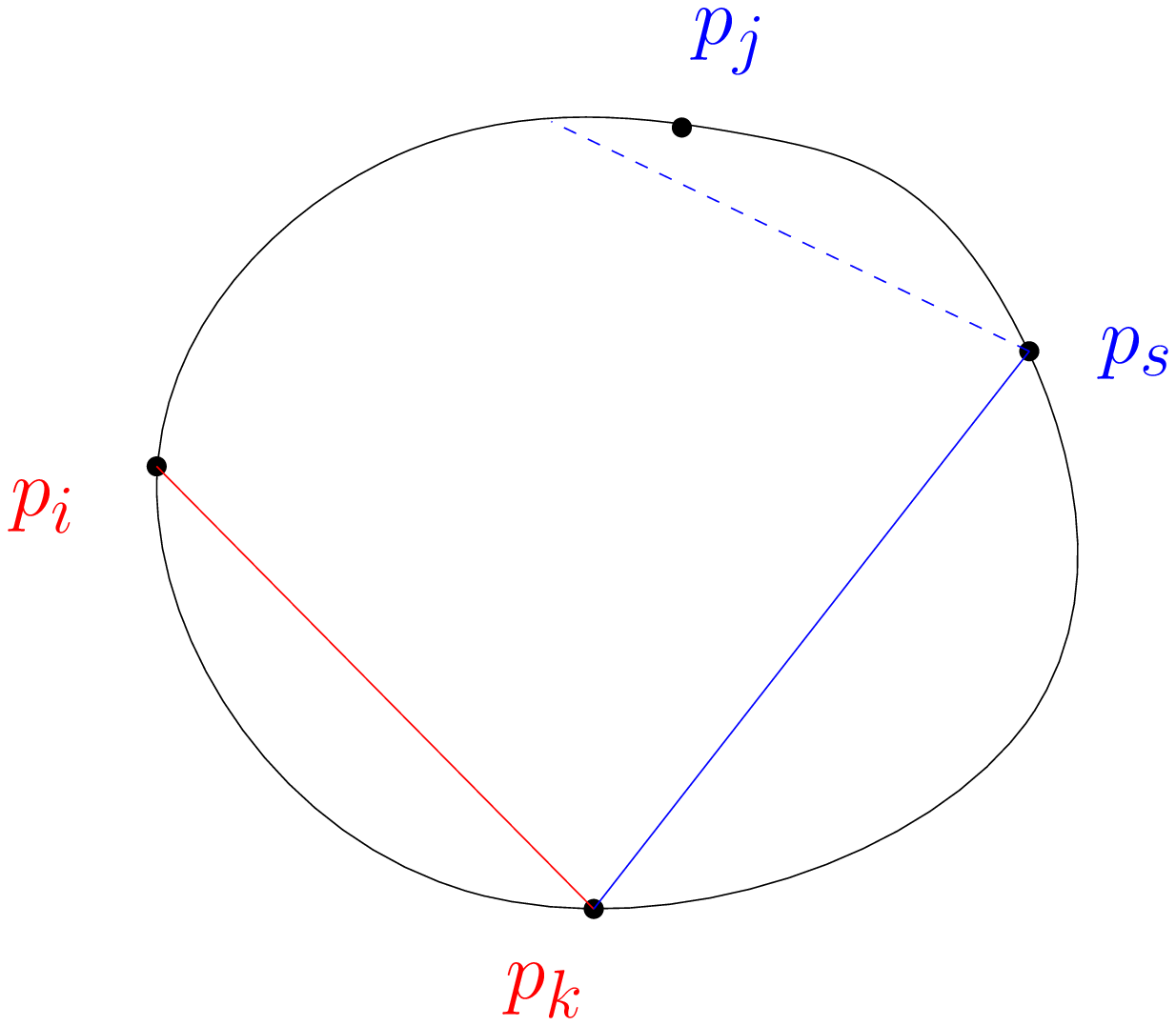} \\
(a)  & (b)   \vspace{1cm} \\ 
\includegraphics[scale=0.4]{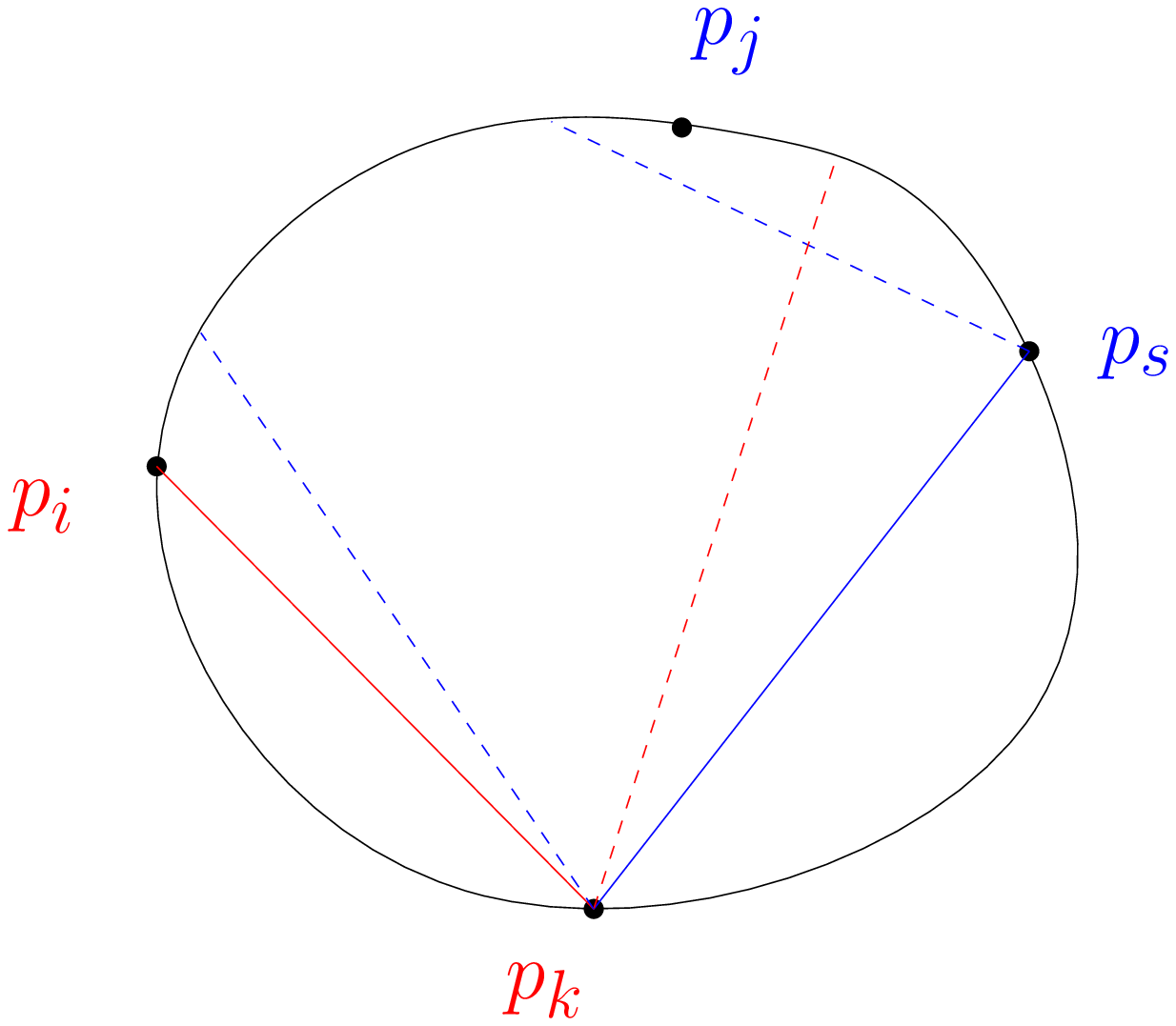} &
\includegraphics[scale=0.4]{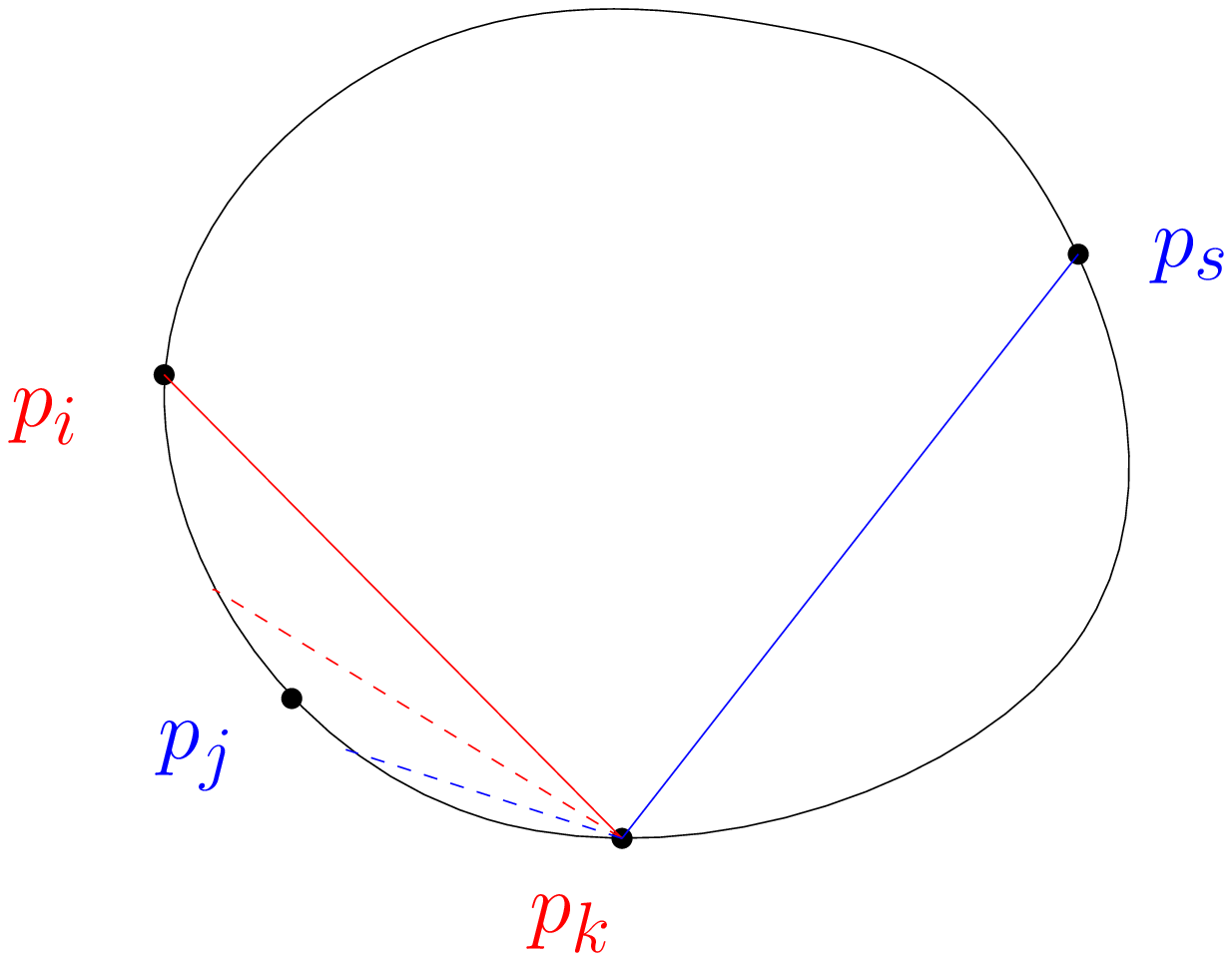} \\
(c)  & (d)   \vspace{1cm} \\ 
\includegraphics[scale=0.4]{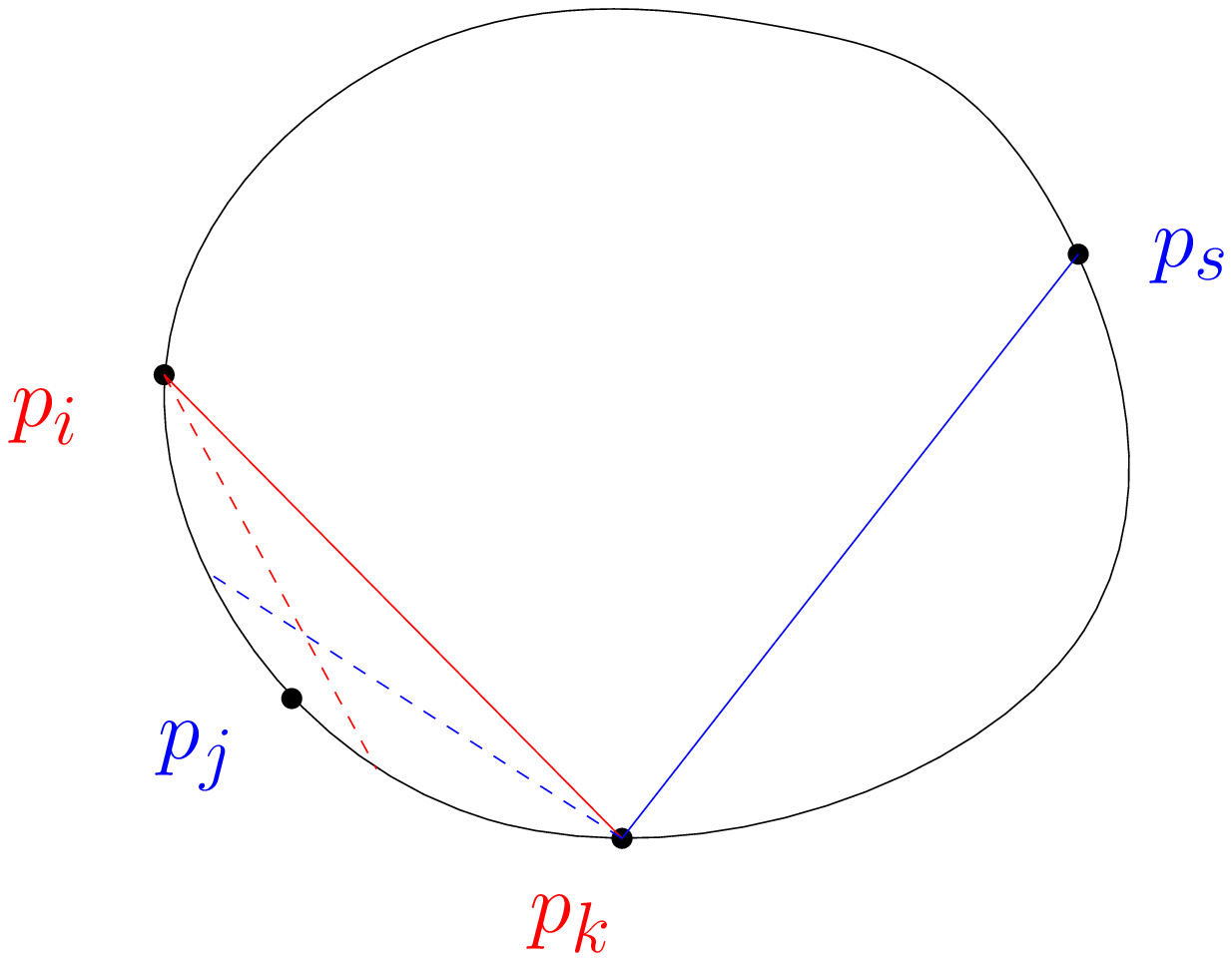} \\
(e)  \vspace{1cm} \\ 
\end{tabular}
\caption{An illustration of segments shared same endpoint.}
\label{fig:GroupThree}
\end{figure}
 \end{proof}

We now have that $G_{VE}$ is the vertex-edge visibility graph for some pseudo-polygon $P$.  It follows from Lemma \ref{lem:edgeVis} that $G$ is the visibility graph of $P$, giving us the following theorem.

\begin{theorem}
A graph $G$ with a given Hamiltonian cycle $C$ is the visibility graph of a pseudo-polygon $P$ if and only if there is an assignment of candidate blockers to the invisible pairs that satisfies Necessary Conditions 1 - 5.
\label{thm:characterize}
\end{theorem}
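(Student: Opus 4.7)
The plan is to prove the biconditional in both directions using the structural results already established in this section. For the forward direction, if $G$ is the visibility graph of a pseudo-polygon $P$, Lemma \ref{lem:blockers} guarantees that every invisible pair of $G$ has exactly one designated blocker in $P$ and that this designated blocker is a candidate blocker of the pair in $G$. Assigning each invisible pair to its designated blocker therefore produces a candidate-blocker assignment, and the proofs of Necessary Conditions \ref{nc:blockers}--\ref{nc:pinched} in Section \ref{sec:nc} each show that an assignment arising in this way from an actual pseudo-polygon must satisfy the corresponding condition (otherwise two pseudo-lines of $P$ would intersect twice or meet without crossing).

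For the backward direction, assume we are given an assignment of candidate blockers to invisible pairs of $G$ satisfying NCs 1--5. Use Lemma \ref{lem:edgeVis} to define the associated vertex-edge visibility graph $G_{VE}$ from $G$ and the assignment, and construct the witness $P'$ as described earlier in the section. Combining Lemmas \ref{lem:violateChar} and \ref{lem:goodCross} shows that $G_{VE}$ must satisfy the O'Rourke--Streinu characterization of Theorem \ref{thm:veChar}: if it did not, Lemma \ref{lem:violateChar} would produce two good line centers in $P'$ intersecting twice, contradicting Lemma \ref{lem:goodCross}. Therefore Theorem \ref{thm:veChar} furnishes a pseudo-polygon $P$ whose vertex-edge visibility graph is exactly $G_{VE}$.

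To finish, I would argue that $G$ is in fact the vertex-vertex visibility graph of $P$. Applying Lemma \ref{lem:edgeVis} geometrically inside $P$ characterizes edge invisibility in $P$ in terms of the (unique, by Lemma \ref{lem:blockers}) designated blockers of $P$. Since $G_{VE}$ is simultaneously the vertex-edge graph of $P$ and the one determined combinatorially from $G$ with our assignment, these two descriptions of $G_{VE}$ force the geometric designated-blocker structure of $P$ to coincide with our input assignment. Lemma \ref{lem:edgeVisibility}, linking vertex-vertex and vertex-edge visibility in any pseudo-polygon, then lets us read off the visible pairs of $P$ from $G_{VE}$, showing that they coincide with the edges of $G$.

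The main obstacle is this last step of passing from the agreement of $G_{VE}$ with the geometric vertex-edge visibility of $P$ back to the original vertex-vertex graph $G$; everything else is a clean assembly of the lemmas already proved in this section. The subtlety is that Lemma \ref{lem:edgeVisibility} does not by itself give a tight biconditional between vertex and edge visibility, so the argument must route through the blocker-based characterization of Lemma \ref{lem:edgeVis} together with the uniqueness of designated blockers guaranteed by Lemma \ref{lem:blockers} in order to conclude that $G$, and not some other graph, is realized as the visibility graph of $P$.
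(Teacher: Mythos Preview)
Your proposal is correct and follows essentially the same approach as the paper: the forward direction is exactly the content of Lemma \ref{lem:blockers} together with the proofs of NCs \ref{nc:blockers}--\ref{nc:pinched}, and the backward direction is precisely the paper's chain (Lemma \ref{lem:edgeVis} to define $G_{VE}$, then Lemmas \ref{lem:violateChar} and \ref{lem:goodCross} to verify Theorem \ref{thm:veChar}, then invoke that theorem to obtain $P$). The paper dispatches your ``main obstacle'' in a single sentence---``It follows from Lemma \ref{lem:edgeVis} that $G$ is the visibility graph of $P$''---so you are being more explicit than the paper about that last step, not diverging from it.
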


\appendix

\end{document}